\newtheorem{theorem}{Theorem}
\newtheorem{lemma}{Lemma}
\renewcommand{\nomgroup}[1]{%
\ifthenelse{\equal{#1}{S}}{\vspace*{\baselineskip}\item[\textbf{Symbols}]\vspace*{\baselineskip}}{%
\ifthenelse{\equal{#1}{A}}{\vspace*{\baselineskip}\item[\textbf{Abbrivations}]\vspace*{\baselineskip}}{}}} 
\def\@seccntformat#1{\protect\makebox[0pt][r]{\sf  \csname
 the#1\endcsname $\,|$ }} 
\titleformat{\chapter}[display] 
{\bfseries \Large} 
{\filleft\MakeUppercase{\chaptertitlename}  \fontsize{60pt}{4ex} \selectfont \thechapter} 
{4ex} 
{\titlerule 
\vspace{2ex}%
\filright} 
[\vspace{2ex}%
\titlerule
]
\renewcommand{\chaptermark}[1]{\markboth{Chapter \thechapter.\ {#1}}{}}
\renewcommand{\sectionmark}[1]{\markright{\thesection\ \boldmath {#1}\unboldmath}}
\newcommand{\bea}{\begin{eqnarray}}
\newcommand{\eea}{\end{eqnarray}}
\newcommand{\beq}{\begin{equation}}
\newcommand{\eeq}{\end{equation}}
\newcommand{\nn}{\nonumber}
\newcommand{\C}[1]{{\mathcal{#1}}}
\newcommand{\B}[1]{{\mathbf{#1}}}
\newcommand{\BB}[1]{{\mathbb{#1}}}
\newcommand{\half}{{\frac{1}{2}}}
\newcommand{\quarter}{{\frac{1}{4}}}
\newcommand{\third}{{\frac{1}{3}}}
\newcommand{\threehalves}{{\frac{3}{2}}}
\newcommand{\ket}[1]{\,\vert #1\rangle}
\newcommand{\bra}[1]{\langle #1\vert\,}
\newcommand{\abs}[1]{\vert #1\vert}
\newcommand{\braket}[2]{\langle #1\vert\,#2\rangle}
\newcommand{\avg}[1]{\left\langle #1 \right\rangle}
\newcommand{\Tr}{\mathrm{Tr}}
\newcommand{\res}{\mathrm{res}}
\newcommand{\sgn}{\mathrm{sgn}}
\newcommand{\tW}{{\widetilde W}}
\newcommand{\Oh}{{\C O}}
\newcommand{\Z}{{\C Z}}
\newcommand{\Tinv}{\frac{1}{T}}
\newcommand{\fl}[1]{#1}
\newenvironment{proof}[1][Proof]{\begin{trivlist}
\item[\hskip \labelsep {\bfseries #1}]}{\end{trivlist}}
\newenvironment{property}[2][Property]{\begin{trivlist}
\item[\hskip \labelsep {\bfseries #1 #2 (P#2):}]}{\end{trivlist}}
\begin{document}
%

\newpage \thispagestyle{empty}

${}$

\vspace{4cm}
\noindent
 {\Huge APPLICATIONS OF RANDOM GRAPHS}\\[0.5cm]
  {\Huge TO 2D QUANTUM GRAVITY}\\[1.5cm]

\noindent
 {\Large{Max R. Atkin}}\\
 {\Large{Christ Church College, Oxford}}\\

\vspace{7cm}

\noindent
Thesis submitted in fulﬁlment of the requirements for the degree of Doctor of Philosophy at the University of Oxford.\\
\noindent
\\
Trinity term 2011

\newpage \thispagestyle{empty}

\noindent
Max Atkin\\
Rudolf Peierl's Centre for Theoretical Physics\\
Oxford University\\
Kebel Road\\
Oxford\\
United Kingdom

\vspace{11cm}

\noindent
\copyright \, Max Atkin 2011\\\\
All rights reserved. No part of this publication may be reproduced in any form without prior written permission of the author.






\newpage \thispagestyle{empty}
\frontmatter
\addcontentsline{toc}{chapter}{Abstract}
\renewcommand{\chaptermark}[1]{\markboth{ ABSTRACT}{}}
\chapter*{Abstract}

The central topic of this thesis is two dimensional Quantum Gravity and its properties. The term Quantum Gravity itself is ambiguous as there are many proposals for its correct formulation and none of them have been verified experimentally. In this thesis we consider a number of closely related approaches to two dimensional quantum gravity that share the property that they may be formulated in terms of random graphs.
In one such approach known as Causal Dynamical Triangulations, numerical computations suggest an interesting phenomenon in which the effective spacetime dimension is reduced in the UV. In this thesis we first address whether such a dynamical reduction in the number of dimensions may be understood in a simplified model. We introduce a continuum limit where this simplified model exhibits a reduction in the effective dimension of spacetime in the UV, in addition to having rich cross-over behaviour.

In the second part of this thesis we consider an approach closely related to causal dynamical triangulation; namely dynamical triangulation. Although this theory is less well-behaved than causal dynamical triangulation, it is known how to couple it to matter, therefore allowing for potentially multiple boundary states to appear in the theory. We address the conjecture of Seiberg and Shih which states that all these boundary states are degenerate and may be constructed from a single principal boundary state. By use of the random graph formulation of the theory we compute the higher genus amplitudes with a single boundary and find that they violate the Seiberg-Shih conjecture. Finally we discuss whether this result prevents the replacement of boundary states by local operators as proposed by Seiberg.


\addcontentsline{toc}{chapter}{Acknowledgments}
\renewcommand{\chaptermark}[1]{\markboth{ ACKNOWLEDGEMENTS}{}}
\chapter*{Acknowledgments}

I would like to thank my supervisor, John Wheater, for the significant time he has spent helping me and the numerous interesting projects he suggested. This thesis would not exist without his help. I would also like to give a very large thank you to Stefan Zohren who has provided significant encouragement and motivation in the final year of my DPhil. Both Stefan and John have been of invaluable help during the writing of this thesis, both suggesting numerous improvements and raising interesting questions.

I would like to acknowledge the support of STFC studentship PfPA/S/S/2006/04507 in providing me with the funding to undertake this research.

I would like to thank the senior tutors at Christ Church college, Axel Kuhn, Derek Stacey and Alan Merchant for providing me with the opportunity to teach and partake in admissions. This was a life-saver; especially when funds were short! My enjoyment of my time at Oxford was also greatly enhanced by my office mates, Sesh, Shaun, Seung Joo and Tom, who I thank for providing constant interesting conversation, whether related to physics or not. 

My family have also been a source of encouragement throughout my DPhil and have always been there to offer support. Finally, I want to thank Lauren for being there throughout it all and her endless willingness to lend an ear. It is to my family and Lauren to which I dedicate this thesis. 

\renewcommand{\chaptermark}[1]{\markboth{ {#1}}{}}
\addcontentsline{toc}{chapter}{Contents}
\tableofcontents 
\addcontentsline{toc}{chapter}{List of Figures}
\listoffigures 

\mainmatter
\renewcommand{\chaptermark}[1]{\markboth{Chapter \thechapter.\ {#1}}{}}
\renewcommand{\sectionmark}[1]{\markright{\thesection\ \boldmath {#1}\unboldmath}}
\lhead[\fancyplain{}{\thepage}]         	{\fancyplain{}{\rightmark}}
\chead[\fancyplain{}{}]                 	{\fancyplain{}{}}
\rhead[\fancyplain{}{\leftmark}]       	{\fancyplain{}{\thepage}}

\chapter{Introduction \label{Chap:Introd}}
Perhaps the most ambitious question that one may try to answer is ``What is reality?'', where by reality we mean the sensation of being alive, of experience. A priori there are many possibilities; a subscriber to idealism would believe one's experiences are entirely generated by one's mind. If, on the other hand, one subscribes to materialism, then what constitutes reality is a representation, generated by one's brain, of an independent physical universe. The quest to understand reality then becomes a matter of understanding the true nature of this physical universe. The 20th century saw many advancements in our understanding of the universe in this respect. The development of quantum mechanics fundamentally altered what is considered the true nature of the matter and forces, while the introduction of general relativity, which describes the dynamics of spacetime and how this manifests itself as gravity, banished the perception of space and time as being a static background stage on which dynamical processes took place.

However, the revolution in our view of the universe caused by quantum mechanics and general relativity was only half completed. General relativity assumes that all matter residing in spacetime is classical. However, we know such matter is fundamentally quantum mechanical and so we must, at the very least, construct a theory of gravity in which the matter is allowed to assume its true quantum mechanical form. 

There have been attempts to pursue the most conservative line of attack; keep the dynamics of spacetime as they are in general relativity, or at least classical, but modify how matter couples to gravity in order to allow for quantum mechanical rather than classical sources. However, there are a number of compelling arguments to suggest this does not produce a consistent theory \cite{Carlip:2008zf}. A more natural possibility is that gravity itself is quantum mechanical in nature. This avoids the aforementioned problems of coupling classical and quantum mechanical systems in addition to the philosophical appeal of all phenomenon in the universe being fundamentally quantum. Such a theory and the attempts to construct it are known as quantum gravity.

The are further compelling reasons that suggest a quantum mechanical formulation of general relativity is necessary. Most prominently, general relativity inevitably forms singularities in the form of black holes or the initial big bang singularity \cite{Bojowald:2007ky}, at such times the effect of quantum mechanics is expected to become important at the very least in the matter sector and most likely in the gravitational sector too.

Given that quantum mechanics provides the recipe of quantisation to produce a quantum mechanical theory starting with a classical one, the obvious way to construct quantum gravity is to apply this procedure to the theory of general relativity. In doing so one is immediately faced with a problem; what degrees of freedom should be quantised? The answer depends on what one considers to be the fundamental quantity whose dynamics general relativity describes. Furthermore, once we have decided upon the form the fundamental degrees of freedom take we must also choose a method of quantisation. For our purposes there two possible methods of quantisation, canonical and path-integral, and each may be applied to the theory resulting from our choice of degrees of freedom. As one might imagine this leads to a large number of different approaches to quantum gravity; all of which have been pursued at one point or another. We now consider some common answers to the above choices.
\\

\noindent {\emph{The spacetime metric $g$ is fundamental.}} In this approach we quantise the metric of spacetime. In a canonical approach this involves rewriting general relativity in terms of a spatial metric together with functions that describe how the spatial metric evolves between each spatial hypersurface. The resulting theory may be quantised leading to the Wheeler-de-Witt equation \cite{Bojowald:2007ky,QuantumGravity}. A path integral approach to the same problem requires one to make sense of integrals of the following form,
\beq
\int [dg] e^{i S_{\mathrm{EH}}[g]},
\eeq
where
\beq
\label{dDQG}
S_{\mathrm{EH}}[g] = \int_{\C{M}} d^dx \sqrt{g}\left( \kappa R[g] + \mu\right),
\eeq
where $\mu$ is a constant known as the cosmological constant and $\kappa$ is a constant satisfying $\kappa \propto 1/G$, where $G$ is Newton's constant. Additionally one must decide exactly which geometries the integral sums over.
\\

\noindent {\emph{The perturbation to flat space, $h$ defined by $g = \eta +h$, is fundamental.}} In this view, the vacuum of general relativity is Minkowski space and we quantise the perturbations around it. This reduces the problem of quantum gravity to a problem of quantising a spin-2 quantum field theory in flat space. One can approach this using canonical or path-integral methods, however in both cases one finds that the resulting theory is perturbatively non-renormalisable \cite{hooft}. Although this would appear to rule out treating general relativity as anything other than an effective theory one suggestion due to Weinberg is that the coupling of the theory may in fact run to a non-Gaussian fixed point in the UV \cite{weinberg}. This would render the theory UV complete while still appearing non-renormalisable at the level of perturbation theory. A second possibility arises in the program of string theory in which all fundamental objects are treated as quantised strings. In this approach the spectrum of the string contains a spin-2 state and the S-matrix for the scattering of this state reduces to that of the S-matrix obtained for graviton scattering when treating $h$ as fundamental \cite{Polchinski1,GSW1}.
\\

\noindent {\emph{No variables in general relativity are fundamental.}} A more radical possibility is that general relativity doesn't contain the correct microscopic degrees of freedom at all. If this is the case then general relativity only becomes a description of spacetime in the long distance limit and attempting to quantise general relativity would be as misguided as attempting to understand the short distance regime of water by quantising the Navier-Stokes equation! Historically the problem with this possibility is that it was very hard to know where to begin as one must invent a theory in which spacetime and gravity appear as low energy concepts; in effect solving the problem of quantum gravity in a moment of brilliant insight. However, the advent of string theory has changed this situation. String theory provides a way to construct order by order a perturbative expansion in which each term is finite and correctly reproduces the low energy scattering of gravitons found in the linearised theory. Furthermore, this is achieved by quantising a theory which makes no assumptions about the classical dynamics of spacetime. The main problem however is the question of how to define the theory from which the perturbative series of various string theories arise. The solution to this problem has been tentatively named M-theory \cite{Polchinski2}, however a full definition of it has yet to be given.

\section{Thesis Outline}
As one can see, the field of quantum gravity is vast and one can only hope, in a thesis such as this, to consider a small piece of it. In this thesis we will be particularly interested in the dimension of spacetime and the allowed boundary conditions of two dimensional quantum gravity. Quantum gravity in two dimensions is special for a number of reasons, one being that a number of the approaches listed above coincide. In particular the approach in which the metric $g$ is fundamental coincides with the string theory approach and so by studying one of these theories we may study the other. 

In chapter \ref{ChapBack} we introduce the theory of pure two dimensional quantum gravity in the metric-is-fundamental approach. We review how this theory may be obtained as a scaling limit of a discrete theory known as Dynamical Triangulation (DT) and introduce the matrix model and combinatorial techniques that may be used to compute certain observables such as the disc-function and string susceptibility. We also introduce Causal Dynamical Triangulation (CDT) models and review their relation to DT.

In chapter \ref{ChapDim} we consider the final observable not considered in chapter \ref{ChapBack}; the dimension of spacetime. In a theory of quantum gravity one expects that spacetime at very short distances is violently fluctuating. Furthermore, in string theory one might expect to also begin to probe compact extra dimensions. Both of these effects would lead the dimension of spacetime to deviate from its infrared value. We consider this question in the context of the approach in which the metric is fundamental. We argue that a-priori it is not necessarily equal to the dimension of the underlying discretisation and introduce the concept of the Hausdorff dimension as a tool to characterise fractal structures. We review the calculation showing that the Hausdorff dimension of DT is four and then argue that the Hausdorff dimension in CDT is two. We then argue that the Hausdorff dimension is inadequate to fully distinguish between smooth and fractal spaces and so we introduce a second definition of dimension, known as the spectral dimension, which is based on the properties of the Laplacian operator on the fractal structure. We argue that the spectral dimension can be obtained by considering random walks on the triangulation. Finally we review the recently observed phenomenon of dimensional reduction, in which the spectral dimension of the UV is lower then IR. By considering random graphs derived from CDT we then begin the search for a simple random graph model in which we might capture the phenomenon of dimensional reduction.

Chapter \ref{ChapComb} consists mainly of work published in \cite{Atkin:2011ak}. We briefly introduce random combs and review some known results. We give a definition of the spectral dimension and then explain how this definition can be extended to show different spectral dimensions at long and short distance scales. We then introduce a simple model which we prove does in fact exhibit a spectral dimension that is different in the UV and IR. We generalise these results to combs in which teeth of any length may appear with a probability governed by a power law and examine the possibility of intermediate scales in which the spectral dimension differs from both its UV and IR values. Finally we analyse the case of a comb in which the tooth lengths are controlled by an arbitrary probability distribution and show that continuum limits exist in which the short distance spectral dimension is one while the long distance spectral dimension can assume values in one-to-one correspondence with the positions of the real poles of the Dirichlet series generating function for the probability distribution. 

In chapter \ref{ChapString} we first review how matter coupled to gravity may be realised using matrix models. We introduce the Ising matrix model and also consider multicritical points. We then move on to quickly review minimal models, string theory, Liouville theory and minimal string theory. The notion of a boundary condition in string theory is promoted to a dynamical object known as a brane. The different boundary conditions then correspond to different types of brane. If one wants to obtain a non-perturbative description of string theory then the spectrum of branes is important. We pursue the question, that arises due to a conjecture of Seiberg and Shih \cite{Seiberg:2003nm}, of how many distinct branes exist in $(p,q)$ minimal string. We review the fact that this conjecture fails for cylinder amplitudes as found in \cite{Gesser:2010fi} and conjecture a way in which it could be fixed. The final portion of this chapter appears in \cite{Atkin:2010yv}.

Chatper \ref{ChapBrane} consists of the remainder of the work in \cite{Atkin:2010yv} in which we apply matrix model techniques to the question of whether the conjecture of the previous chapter holds. In particular we consider the matrix model which describes a $(3,4)$ minimal string theory in the presence of a boundary magnetic field. We show how a general class of matrix models, which includes the one describing the $(3,4)$ string may be solved, and by varying the boundary magnetic field we reproduce the conformal boundary states of the $(3,4)$ model. Using our general solution to this model we then compute the disc-with-handle amplitude for all conformal boundary conditions and compute their deviation from the Seiberg-Shih relation. We argue that these results show that our previous conjecture does not hold. This makes it very difficult to see how the Seiberg-Shih relations could possibly be true. Finally we consider a different approach to testing the Seiberg-Shih relation based on expanding boundary states in terms of local operators. We find that the expansion in local operators appears to be valid for only certain boundary types. We support this by reproducing a recent calculation of Belavin in which the one-point function on the torus was computed.

\chapter{Discrete Approaches to 2D Quantum Gravity\label{ChapBack}}
\label{Backgroundnew}
In the previous chapter we saw that there were a variety of approaches to quantum gravity which could be classified according to which degrees of freedom were considered to be fundamental. In two dimensions a number of these approaches coincide; in particular string theory and approaches in which the metric is considered fundamental. We will therefore restrict our attention to these in the remainder of this thesis. In this chapter we will consider the simplest case of the metric-is-fundamental approach, in which no matter is present and the only dynamical quantity is that of gravity itself. We first review why in two dimensions some of the problems of higher dimensional theories are alleviated before going on to review the few simple observables that exist in this theory. We then motivate a different approach to computing the path integral based on a discretistion of the contributing geometries and show that there exists a scaling limit which allows us to compute some of the observables.

In the metric-is-fundamental approach we must evaluate the partition function,
\beq
Z = \int \frac{[dg]}{\mathrm{Diff}} e^{i S_{\mathrm{EH}[g]}},
\eeq
where we have divided by the volume of the group of diffeomorphisms of the spacetime manifold, since this is a gauge symmetry in general relativity. The usual strategy to evaluate path integrals is to Wick rotate to a Euclidean theory. This has the effect of converting the oscillatory measure to one which is exponentially damped. An immediate problem arises in the case of gravity, in that the resulting Euclidean action is not bounded from below and so it appears the integral does not exist. A further problem is that it is unclear that a Wick rotation from the Euclidean theory back to Lorentzian theory even exists. 

Thankfully, when the spacetime $\C M$ is two dimensional some of these problems are alleviated. Due to the Gauss-Bonnet theorem the curvature term in the action is entirely topological and so the problem of the unboundedness of the action is no longer present. 
However, restricting our attention to two dimensions does not solve the problem of what level of causality violations to allow i.e. how to Wick rotate between the two theories. We will come back to this point later in this chapter. For now we will put no causality constraint on the geometries appearing in the integral and hence we shall sum over all Euclidean geometries. Application of the Gauss-Bonnet theorem to \eqref{dDQG} then gives,
\beq
\label{2DQG}
Z = e^{4\pi \kappa \chi} \int \frac{[dg]}{\mathrm{Diff}} \exp \left( -\int_{\C M} d^2 \sigma \sqrt{g(\sigma) } \mu \right),
\eeq
where $\chi$ is the Euler characteristic of $\C M$ and is given by $\chi = 2 - 2h$, where $h$, the genus, is the number of handles of $\C M$. In order to perform this integral we must define the measure $[dg]$. This is a difficult technical problem whose solution results in a theory in which the overall scale factor of the measure acquires dynamics. The theory describing the dynamics of the scale factor is itself a difficult theory to solve. We therefore postpone our discussion of this approach until Chapter \ref{ChapString}. However, it is still worth considering the observables, within the continuum framework, that one could measure in a theory of pure quantum gravity. 
There are a number of observables that have been commonly considered in the literature. These consist of,

\begin{itemize}
\item[1.] Integrated correlation functions of local operators are the most obvious observables, although these also proved the most difficult to compute. In the most general case, when the manifold $\C M$ has $n$ boundaries, the action given in \eqref{2DQG} must be extended to include a boundary action,
\beq
\label{2DQGboundary}
Z = e^{4\pi \kappa \chi} \int \frac{[dg]}{\mathrm{Diff}} \exp \left( -\int_{\C M} d^2 \sigma \sqrt{g(\sigma) } \mu - \sum_i \int_{\partial M_i} ds \sqrt{h} {\mu_B}_i \right),
\eeq
where $h$ is metric induced on the boundary curve and ${\mu_B}_i$ are boundary cosmological constants and $\chi = 2 - 2h - b$, where $b$ is the number of boundaries. By utilising the symmetries of the theory any integrated correlation function has the form $\mu^{(2-\gamma)\chi/2} F({\mu_B}_i/\sqrt{\mu})$ \cite{Teschner:2001rv,Nakayama:2004vk}. The quantity $\gamma$ is known as the string susceptibility \cite{Ginsparg:1993is} and is an observable that has been used to compare different approaches to two dimensional quantum gravity.
\item[2.] The Hartle-Hawking wave function. This is the partition function for the appearance of a universe ``from nothing'' and evolving into some prescribed state. Geometrically this corresponds to summing over all geometries with a disc topology with a fixed boundary condition. Throughout this thesis we will refer to this quantity as the disc-function. More generally one can consider multi-loop amplitudes in which the spacetime manifold has a number of boundary loops; a disc would be a 1-loop amplitude and a 2-loop amplitude is often known as a cylinder amplitude;
\item[3.] The dimension of spacetime. Living inside a particular universe one may probe the dimension of the spacetime by performing scattering experiments or even simply measuring the electro-static force between two charges. If there exist changes to the dimension of spacetime on the length scales such experiments probe, then we should be able to measure them. This observable may seem trivial as it appears we set the dimension spacetime in our theory at the outset, however, when probing the short distance scale of spacetime we expect that spacetime itself will become less smooth due to quantum gravity effects (such as microscopic black holes, possibly wormholes etc). This could easily alter the effective dimension of spacetime at these scales even if the underlying theory is of fixed dimension \cite{Carlip:2009aa}.
\end{itemize}

In order to compute these observables we will consider an alternative method of computing the above integral in which observables are calculated in a different, but related, theory with a far simpler measure. This will come at the expense of having a non-trivial dictionary relating the observables in the original theory to the one with a simple measure. We will review such an approach in the next section.

\section{Dynamical Triangulation}
The related theory which we will consider is motivated by considering lattice regularisations of flat space quantum field theories. If we take the lattice regularised theory as our starting point then one can obtain a continuum theory, corresponding to the original quantum field theory, by taking the lattice spacing to zero while also appropriately scaling the other parameters in the theory. In the case of gravity there exists a number of approaches to defining a lattice regularisation, however we shall utilise only one here. It is known as Dynamical Triangulations (DT) and it has a long history which is reviewed in \cite{Ginsparg:1993is,EynardReview}. In this case the spacetime geometry is encoded in the lattice, which is canonically composed of triangles. 
The lattice links themselves are always of fixed length \footnote{Links of fixed length is in contrast with the earlier approach of Regge calculus in which the link lengths were integrated over while keeping the triangulation fixed.}. For an example of a triangulation see fig \ref{dual}. The benefit of this discrete model is that if we wish to compute a partition function, then rather than having to integrate over geometries we must sum over the triangulations. Because the latter is a sum, the issue of defining an integration measure on the space of geometries does not arise. If we are able to compute the partition function of the DT model then the hope is that there exists a continuum limit in which we obtain the original partition function \eqref{2DQG}. 
\begin{figure}[t]
  \begin{center}
    \includegraphics[width=10cm]{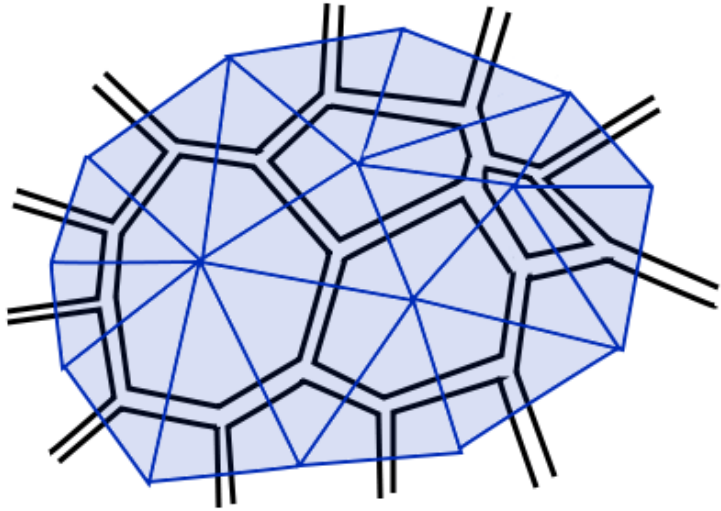}
    \caption{A triangulation (grey) together with its dual graph (black).}
    \label{dual}
  \end{center}
\end{figure}
In order to compute the partition function we will set each link to be of length $a$ meaning each triangle has an area proportional to $a^2$. The discretised version of the integral \eqref{2DQG} is then,
\beq
\label{ZDT}
Z^\mathrm{DT}_h = \sum_{T \in\mathrm{Tri}_h} e^{-\Lambda A(T)} = \sum_{T \in\mathrm{Tri}_h} e^{-a^2 \Lambda |T|} = \sum^\infty_{n=0} \C{N}_h(n) g^{n},
\eeq
where $\mathrm{Tri}_h$ is the set of triangulations of genus $h$, $|T|$ is the number of triangles in the triangulation $T$, $\C{N}_h(n)$ is the number of triangulations of a genus $h$ manifold containing $n$ triangles and $g = e^{-a^2 \Lambda}$. From the final equality we see that the evaluation of the partition function has been reduced to computing the generating function for $\C{N}_h(n)$, which is a graph counting problem. However, the graphs we must count are rather complicated; in particular the order of each vertex is unbounded and there is a non-local restriction that each face of the graph is triangular. Both of these problems may be solved in a simple way by considering the dual graph of the triangulation. The dual graph is constructed by associating a vertex to each face of the triangulation and a link joining any two vertices whose corresponding faces share an edge. The resulting dual graphs have the simple property that each vertex is of order three with no non-local restrictions. Furthermore since each graph has a unique dual we need only count the dual graphs. The dual graph construction is illustrated in fig \ref{dual}.

From the figure it is amusing to note that dual graph looks very much like a Feynman diagram composed of many vertices. In fact there exists a very useful class of tools, known as matrix models, which exploit this fact and are able to compute generating functions for the graph counting function $\C{N}_h(n)$ for a variety of graphs, including those with labels. Recall that in quantum field theory the perturbative computation of the partition function is organised such that the contribution of each process to the overall amplitude is encoded in its Feynman graph; in particular the overall power of the coupling associated to each graph is given by the number of vertices it contains. If one were to find a way of computing the partition function then the perturbative expansion could be recovered by expanding the partition function around zero coupling. If the quantum field theory was sufficiently simple then one could use the resulting expansion to extract information about the number of graphs contributing at each order of the coupling.



\subsection{Random Graphs and Matrix Models}
If we are to develop a quantum field theory that will solve our graph counting problem then it must also have the property that its perturbative expansion is organised by the topology of the graph. This requirement restricts the possible candidates to matrix valued field theories as such theories will produce a perturbative series, in the inverse size of the matrix, in which only graphs of a particular topology contribute at a given order. 


We must also make the theory simple enough that one can extract the information concerning the graphs. The simplest matrix valued theory is one in which the spacetime dimension is zero; such a theory is known as a matrix model. An example of such a model, which also has the property that it produces graphs in which all vertices have order four is,
\beq
\label{pureQGMM}
\Z^{MM}(g) = \int [dM] e^{-N \Tr \left[ \half M^2 + \frac{g}{4} M^4 \right]},
\eeq
where the metric field $M$ is an $N \times N$ hermitian matrix and the measure $[dM]$ is defined to be the flat Lebesgue measure,
\beq
\label{MMmeasure}
\prod^N_{i=0} dM_{ii}  \prod_{i<j}d\mathrm{Re}\left[M_{ij}\right]d\mathrm{Im}\left[M_{ij}\right].
\eeq
If we treat the quartic term in the action as a perturbation to the free theory then we may construct a Feynman graph expansion for $Z^{MM}$. The contribution to the total amplitude from a given graph is $N^{2-2h} g^n$, where $h$ is the genus of the graph and $n$ is the number of vertices it contains. We therefore see that the expansion of $\C Z^{MM}$ around $g=0$ coupling will yield the series,
\bea
\Z^{MM}(g) &=& \sum^\infty_{h=0} N^{2-2h} Z^{(h)}(g), \\
\Z^{(h)}(g) &=& \sum^\infty_{n=0} \tilde{\C{N}}_h(n) g^n.
\eea
The function that counts the number of graphs with $n$ vertices, $\tilde{\C{N}}_h(n)$, differs from the one that appeared in the expression for $Z^{DT}$ since the perturbative expansion includes disconnected graphs. In order to compute $\C{N}_h(n)$ we must instead consider the free energy,
\beq
\log\Z^{(h)}(g) = \sum^\infty_{n=0} \C{N}_h(n) g^n.
\eeq
We see that we have actually done much better than merely being able to compute $\C{N}_h(n)$; we have reproduced the $Z^{\mathrm{DT}}$ partition function exactly! \footnote{The observant reader will note that the relation of $\log\Z^{(h)}(g)$ to $Z^{\mathrm{DT}}$ is not, in fact, exact, due to the above matrix model producing not triangulations but quadrangulations. We could have considered a matrix model with a cubic potential and so obtained an exact equivalence with $Z^{\mathrm{DT}}$. However because the matrix model with cubic and quartic potentials have the same continuum limit we will continue to consider the quartic model for reasons of pedagogy; the quartic model is simpler to solve.}

\subsection{Solving the Matrix Model: Loop equations}
There exist a number of approaches to computing $\Z^{MM}(g)$, among them the saddle point method \cite{Makeenkobook,Ginsparg:1993is,EynardReview}, orthogonal polynomials \cite{Ginsparg:1993is,EynardReview} and loop equations \cite{EynardReview,Eynard:2002kg}. It is the last of these three that we will review in this section. In order to keep our discussion general we will consider a generalised one hermitian-matrix model, given by 
\beq
\label{1MM}
Z^{MM} = \int [dM] e^{-N \Tr V(M) },
\eeq
where $V$ is a polynomial potential. A loop equation is defined as any equation that may be derived from an infinitesimal change in the integration variable which preserves the domain of integration. For the matrix model defined previously, the most arbitrary infinitesimal change in the integration variable would be $M \rightarrow M + \epsilon f(M)$ where $f$ is an arbitrary function satisfying $f(M^\dagger) = f(M)^\dagger$. In the case of \eqref{1MM} the effect of this change of variable is,
\beq
Z^{MM} = \int [dM] (1 + \epsilon J(M)+ \C O(\epsilon^2)) e^{-N \Tr V(M) - \epsilon K(M) + \C O(\epsilon^2)},
\eeq
where $J$ is the trace of the Jacobian matrix for the transformation and $K$ is the change in the potential, i.e. in this case $K = N \Tr V'(M)f(M)$. By equating the above equation to \eqref{1MM} and considering the order $\epsilon$ terms we obtain,
\beq
\label{LEJK}
\avg{J} = \avg{K}.
\eeq
A central role in the loop equation method is played by the function defined by
\beq
\label{WMdef}
W_M(x) = \frac{1}{N}\avg{\Tr\frac{1}{x-M}}\equiv \sum^\infty_{n=0} \frac{w_n(g)}{x^{n+1}},
\eeq
where $w_n(g)  = \frac{1}{N}\avg{\Tr M^n}$ and $W_M(x)$ is known as the resolvent. Its importance is due to the fact that the free energy may be computed from it. Furthermore, a very definite interpretation may attached to it. Note that the Feynman graphs contributing to $w_n(g)$ are those which have $n$ external legs. Such graphs may be interpreted as corresponding to a discretisation of a disc and so we see the resolvent acts as a generating function for such amplitudes. We therefore have,
\beq
\label{wkdef}
w_k(g) = \sum^\infty_{n = 0} \C{N}(n,k) g^n,
\eeq
where $\C{N}(n,k)$ denotes the number of triangulations containing $n$ triangles and $k$ boundary links. Obviously such graphs are the ones which contribute to the disc function observable discussed in the previous section and therefore one might suspect that the continuum disc function can be extracted from the resolvent. This indeed turns out to be the case and we will compute the disc function in the next section. Finally it is worth emphasising at this point that although the resolvent function appears naturally in the matrix model approach, the technique of constructing a generating function for particular classes of amplitudes, for example in this case disc-amplitudes, will be of more general use, as we will see when considering the combinatorial approach to loop equations.

In order to use \eqref{LEJK} to calculate $W_M$ we first need to specify the change of variable we will use. We make the choice $f(M) = 1/(x-M)$. By analysing the effect of the change of variable on the measure \eqref{MMmeasure}, one can show \cite{Eynard:2002kg} that if $f(M) = 1/(x-M)$ then,
\beq
J = \Tr\frac{1}{x-M}\Tr\frac{1}{x-M}.
\eeq
Using this result in \eqref{LEJK} we get,
\bea
\frac{1}{N^2}\avg{\Tr\frac{1}{x-M}\Tr\frac{1}{x-M}} &=& \frac{1}{N}\avg{\Tr\left[V'(M)\frac{1}{x-M}\right]},\\
&=& \frac{V'(x)}{N}\avg{\Tr\left[\frac{1}{x-M}\right]} - Q(x),
\eea
where $Q(x) = \frac{1}{N}\avg{\Tr \right[ \frac{V'(x) - V'(M)}{x-M} \left]}$ is a polynomial in $x$. We now note that the function on the left hand side has contributions from connected and disconnected Feynman diagrams. For such amplitudes we can in general write  $\avg{\Tr A \Tr B} = \avg{\Tr A} \avg{\Tr B} + \avg{\Tr A \Tr B}_c$, where the $c$ stands for ``connected''. We therefore get,
\bea
\label{LEWM}
W_M(x)^2 + \frac{1}{N^2}\avg{\Tr\frac{1}{x-M}\Tr\frac{1}{x-M}}_c &=& V'(x)W_M(x) - Q(x).
\eea
This equation is the loop equation associated with the change of variable defined by $f(M) = 1/(x-M)$. Let us define the two-loop function as,
\beq
W_{M;M}(x_1,x_2) = \avg{\Tr\frac{1}{x-M}\Tr\frac{1}{x-M}}_c.
\eeq
We see that since the diagrams contributing to this amplitude are cylindrical then it goes as $\sim 1$ as $N \rightarrow \infty$.  Furthermore, we expect that both $W_M$ and $W_{M;M}$ will have a large $N$ expansion of the form,
\bea
W_{M}(x) &=& \sum^\infty_{n=0} N^{-2n} W^{(n)}_{M}(x), \\
W_{M;M}(x_1,x_2) &=& \sum^\infty_{n=0} N^{-2n} W^{(n)}_{M;M}(x_1,x_2).
\eea
If we substitute these expansions into the loop equation then we find,
\beq
\label{LEWM0}
W^{(0)}_M(x)^2 = V'(x)W^{(0)}_M(x) - Q^{(0)}(x).
\eeq
We see that in the large $N$ limit we end up with a quadratic equation for $W^{(0)}_M(x)$ in which $Q^{(0)}(x)$ contains a number of unknown constants. Note that this equation defines an algebraic curve in the variables $x$ and $W^{(0)}_M$. This is known as the spectral curve of the matrix model.

Up to this point we have made an assumption that the partition function for the matrix model admits an expansion in inverse powers of $N$. This assumption is in fact sometimes incorrect, as oscillatory terms appear in the large $N$ expansion. The exact condition under which a large $N$ expansion and therefore a topological expansion exists is given in \cite{Eynard:2002kg} to be when the spectral curve is genus zero. We see that for the spectral curve of the matrix model considered here, the genus zero condition is equivalent to there being a single branch cut. By enforcing this condition we may compute the unknown constants appearing in $Q^{(0)}(x)$.
The solution to \eqref{LEWM0} is
\beq
\label{WM0sol}
W^{(0)}_M(x) = \half \left[ V'(x) - \sqrt{V'(x)^2 - 4 Q^{(0)}(x) } \right].
\eeq
Note the discriminant of this is a polynomial in $x$. We may therefore enforce the genus zero condition by requiring that all but two of the roots of the discriminant are double roots. For the particular potential in \eqref{pureQGMM}, we therefore must be able to write,
\beq
\label{1cutWM0}
W^{(0)}_M(x) = \half \left[ V'(x) - (\alpha(g)  x^2-A(g)^2)\sqrt{x^2-B(g)^2} \right],
\eeq
where $\alpha(g)$, $A(g)$ and $B(g)$ may be determined by requiring that the expression in \eqref{1cutWM0} reproduces the large $x$ expansion of \eqref{WM0sol}. Because $W_M(x)$ was defined using a series valid for large $x$ we know that there exists at least one branch of the solution which has the property that $W^{(0)}_M(x) \sim 1/x$ as $x \rightarrow \infty$. If one expands \eqref{1cutWM0} about $x=\infty$, then for generic values of $\alpha(g)$, $A(g)$ and $B(g)$ the leading order term will be of higher order than $1/x$. Requiring that the leading order term is $1/x$ gives the following expressions for $\alpha(g)$, $A(g)$ and $B(g)$,
\bea
\alpha(g) &=& g \nn \\
A(g)^2 &=& 1 + \half B(g)^2 g, \nn \\
\frac{1}{16} B(g)^2 (4 + 3 B(g)^2 g) &=& 1.
\eea

\subsection{The Continuum Limit}
In order to obtain a result that corresponds to a continuum theory we must take what is known as a scaling limit of the above computation. This corresponds to tuning the coupling of the theory to the value at which it undergoes a second order phase transition. At such a phase transition the average number of triangles in the triangulation diverges and by rescaling their size along with the couplings in theory we can derive the continuum theory which describes the model at this critical point. We see from \eqref{ZDT} that the number of triangles in the partition will diverge when $g$ reaches the radius of convergence of \eqref{ZDT}. This radius of convergence can be easily computed by considering the quantity $w_2(g)$ which can be obtained from the coefficient of $x^{-3}$ in the large $x$ expansion of $W^{(0)}_M(x)$. 
The radius of convergence for $w_2(g)$ can be read off from the position of its branch points as a function of $g$.

Having found the critical point of $g$, $g_{c}$, it remains to fix how we scale the parameters in our theory. If we refer back to \eqref{ZDT} then we expect the bare cosmological constant $\Lambda$ to be related to the continuum cosmological constant $\mu$ by,
\beq
\Lambda = a^{-2} \tilde{\Lambda} + \mu
\eeq
which implies $g = g_c e^{-a^2\mu}$, where $g_c = e^{-\tilde{\Lambda}}$. As it stands, if we were to substitute this scaling relation in to the equation for the resolvent then this would correspond to taking a limit in which the number of triangles in the bulk diverges while also shrinking each triangle to zero size. Such a limit would yield a continuous geometry in the bulk. However, note that we have not scaled the parameter $x$, which controls the number of links on the boundary of the disc. Leaving the parameter $x$ unscaled would produce geometries which, although continuous in the bulk, have a boundary of finite size in lattice units. From the perspective of the bulk such a boundary would appear infinitesimal and therefore look like a local operator insertion. We will denote the geometry which has a boundary of length $n$ in lattice units by $w_n(g)$.

The resolvent provides one way of finding a scaling limit in which the number of boundary links can be made to diverge. By noting that the resolvent has a radius of convergence in $x$ determined by the position of the branch point $B(g)$, we can see that by tuning $x$ to the value $B(g_c)$, the average number of boundary links will diverge. We therefore enforce a scaling of $x = x_c e^{a \zeta}$, where $\zeta$  will become the boundary cosmological constant. Inserting these scaling forms for $g$ and $x$ into the solution for $W_M^{(0)}(x)$ and taking $a \rightarrow 0$, we obtain,

\beq
\label{smallaseries}
W_M^{(0)} = \frac{\sqrt{2}}{3} - a\frac{\zeta}{\sqrt{2}}+\frac{a^{3/2} \mu^{3/4}}{3}\left[\left(z+\sqrt{z^2-1}\right)^{3/2}+\left(z-\sqrt{z^2-1}\right)^{3/2} \right] +\C O(a^{3/2}),
\eeq
where we have introduced the dimensionless quantity $z = \zeta/\sqrt{\mu}$.

A peculiarity of this scaling limit is that the universal quantity that corresponds to the continuum theory amplitude is not the leading contribution to $W_M^{(0)}$ as $a$ goes to zero. The obvious way to identity the universal result is to compute the same quantity using different potentials in the matrix model and then note which term in the scaling limit remains unchanged. A way to short-cut this procedure is to conjecture that the universal part of the above expansion corresponds to the first term non-analytic in the bulk and boundary cosmological constant. One can motivate this by arguing that non-analyticities only appear in the thermodynamic limit, i.e. when the surface and boundary are of infinite area and length in lattice units. This conjecture is borne out by comparison to continuum calculations. We therefore have 
\beq
\tilde{W}_M^{(0)}(\zeta,\mu) = \frac{\mu^{3/4}}{3}\left[\left(z+\sqrt{z^2-1}\right)^{3/2}+\left(z-\sqrt{z^2-1}\right)^{3/2} \right],
\eeq
where we have introduced the notation of adding a tilde to a quantity in order to denote its continuum version. Note that this continuum quantity is related to the term appearing in \eqref{smallaseries} by a wavefunction renormalisation which removes the overall factor of $a^{3/2}$.

This completes our computation of the continuum resolvent, and hence the disc function observable, using the matrix model formulation of DT. In particular we have shown that there exists a non-trivial limit of the discrete equations. From this we may also extract the string susceptibility in the case of a disc by removing the marked point on the boundary. The mark may be removed by integrating the above expression w.r.t $\zeta$ to obtain, $\gamma = -\half$. Another question one might ask is whether there exists a different disc function obtained by scaling $k$ in the equation $w_k(g)$ while tuning $g$ to its critical value. This would yield a disc function in which the boundary length, rather than boundary cosmological constant was fixed. This quantity will turn out to be useful in the future and so it is important to understand how it is related to the disc function. Given that the boundary cosmological constant has dimensions of $a^{-1}$ we should expect the boundary length to scale like $k = \lambda/a$, we therefore define the continuum disc-function,
\beq
\bar{W}_M^{(0)}(\lambda,\mu) = \lim_{a\rightarrow 0} a^{\eta} x_c^{-\lambda/a} w_{\lambda/a} (g_c e^{-a^2 \mu}),
\eeq
where $\eta$ and $x_c$ are chosen to give a non-trivial limit. With this definition it is reasonably straightforward to see that $\bar{W}_M^{(0)}(\lambda,\mu)$ is related to $\tilde{W}_M^{(0)}(\zeta,\mu)$ by a Laplace transform,
\beq
\label{Wfixedlmark}
\tilde{W}_M^{(0)}(\zeta,\mu) = \int^\infty_0 d\lambda e^{-\lambda \zeta} \bar{W}_M^{(0)}(\lambda,\mu).
\eeq
Note that the above relation applies to boundaries which carry a marked point. If we were to remove the marked point then the relationship would become,
\beq
\label{Wfixedl}
\tilde{\omega}_M^{(0)}(\zeta,\mu) = \int^\infty_0 \frac{d\lambda}{\lambda} e^{-\lambda \zeta} \bar{\omega}_M^{(0)}(\lambda,\mu),
\eeq
where we have used an $\omega$ to denote amplitudes with non-marked boundaries.

The matrix model approach is particularly useful if one wants to compute higher genus amplitudes such as discs with handles. This is due to the fact that the loop equations of the matrix model produces relations between quantities which include all genus contributions to an amplitude. This was seen in \eqref{LEWM}, which was then used to derive an equation satisfied by the genus zero contribution \eqref{LEWM0}. Obviously, the loops equation \eqref{LEWM} may also be used to obtain expressions for the higher genus corrections to $W_M^{(0)}$ and this will be of great interest in later chapters.

The question of whether this continuum theory we have uncovered is at all related to the original continuum theory \eqref{2DQG}, is a question we leave until Chapter \ref{ChapString}. However, it is worth jumping to the punchline; matrix model computations do indeed match the continuum formulation whenever two quantities have been computed in both. It is therefore worthwhile spending some time investigating what we may learn in the discrete formulations of two dimensional quantum gravity.

\subsection{Combinatorial Interpretation of the Loop Equation}
We have seen that the loop equations as obtained from the matrix model are an efficient way to compute the disc function and that the matrix model is particularly suited to computing higher genus amplitudes. However, it is useful to consider a distinct method of obtaining the loop equations which, although it is not as useful for computing the higher genus amplitudes, gives greater understanding of the content of the equations. This method consists of performing a more direct combinatorial analysis of the triangulations by considering the way one may build up a triangulation of a given size from smaller triangulations. This eventually leads to a recursive equation for the number of triangulations of a given size which is equivalent to the loop equations.

\begin{figure}[t]
  \begin{center}
    \includegraphics[width=10cm]{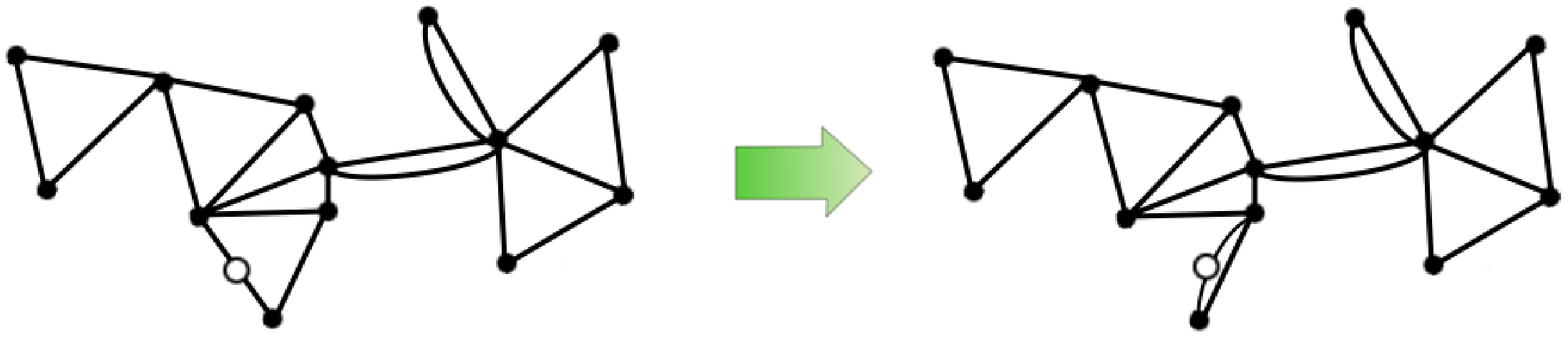}
    \caption{Move 1 relates an unrestricted triangulation (left), with a mark denoted by the white circle, to another unrestricted triangulation with less triangles.}
    \label{unrestrictTriMove1}
    \includegraphics[width=10cm]{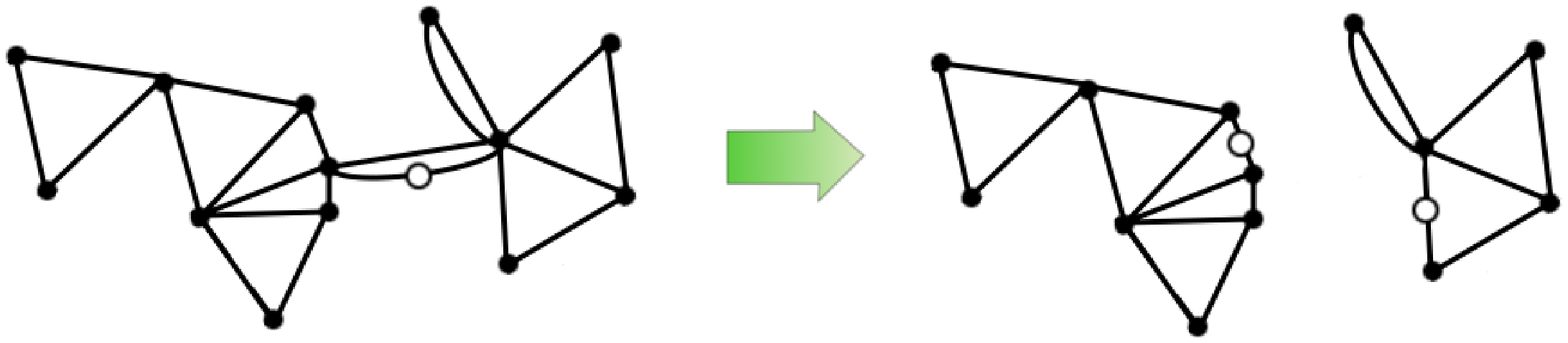}
    \caption{Move 2 relates an unrestricted triangulation (left), with a mark denoted by the white circle, to two disconnected unrestricted triangulations.}
    \label{unrestrictTriMove2}
  \end{center}
\end{figure}

Rather than trying to compute the combinatorics of the triangulations exactly it is useful to consider a slightly generalised class of triangulations known as unrestricted triangulations in which double links are allowed and triangles do not necessarily have to join along an edge. We also require one of the boundary edges to carry a mark. Such a triangulation is shown in fig \ref{unrestrictTriMove1}. Following the arguments of \cite{Carroll:1995nj} we will see that the continuum limit of these triangulations fall into the same universality class as the usual triangulations in DT and indeed give rise to the same loop equations. 

We can classify the graphs with $n$ triangles and $k$ boundary edges into two sets; those in which the marked point occurs on a link which forms a side of a triangle and those on which the marked point resides on one of a pair of double links. All graphs in the first class may be put into a one to one correspondence with graphs with $n-1$ triangles and $k+1$ edges. This is achieved by the map defined by adding double links to each side of the marked triangle which is an unmarked boundary and then removing the marked triangle. The mark is then transferred to the most anti-clockwise new boundary edge. This is shown in fig \ref{unrestrictTriMove1}.
 
All graphs in the second class may be put in one to one correspondence with pairs of triangulations which are obtained by cutting the triangulation along the marked double link and marking the new triangulations as shown in fig \ref{unrestrictTriMove2}.

This the means that the number of graphs with $n\geq 1$ triangles and $k \geq 1$ edges satisfies
\beq
\C{N}(n,k) = \C{N}(n-1,k+1) + \sum^{n}_{m = 0} \sum^{k-2}_{l=0} \C{N}(m,l)\C{N}(n-m,k-2-l),
\eeq
which when substituted into \eqref{wkdef} and \eqref{WMdef} yields the equation,
\beq
W_M^{(0)}(z)^2 +  (g z^2 - z) W_M^{(0)}(z) + 1 -g(w_1(g)+z) = 0.
\eeq
This equation is the same as \eqref{LEWM0} with a particular cubic potential. We therefore see that the combinatorial approach has yielded the correct loop equations. In the next chapter we will use the combinatorial approach to obtain an expression for quantities not easily computable in the matrix model approach, for instance the dimension of spacetime.

\section{Causal Dynamical Triangulation}

\begin{figure}[t]
  \begin{center}
    \includegraphics[width=6cm]{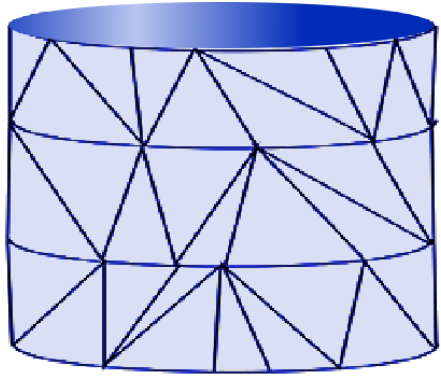}
    \caption{A causal triangulation.}
    \label{CDTtri}
  \end{center}
\end{figure}

Moving to two dimensional gravity alleviated a number of technical problems in the metric-is-fundamental approach, however the problem of what level of causality violations to allow remained. In the last section we considered the case when no causality conditions were enforced, technically this was because we summed over all Euclidean metrics before continuing back to a Lorentzian metric. We found that we could compute the partition function by considering a discretised version of the path integral. In this section we will consider the possibility of computing the Lorentzian path integral directly by means of triangulations that are inherently Lorentzian. This will also allow us to explore the role causality violations play in the theory.

\begin{figure}[t]
\centering 
\parbox{7cm}{
      \includegraphics[scale=0.6]{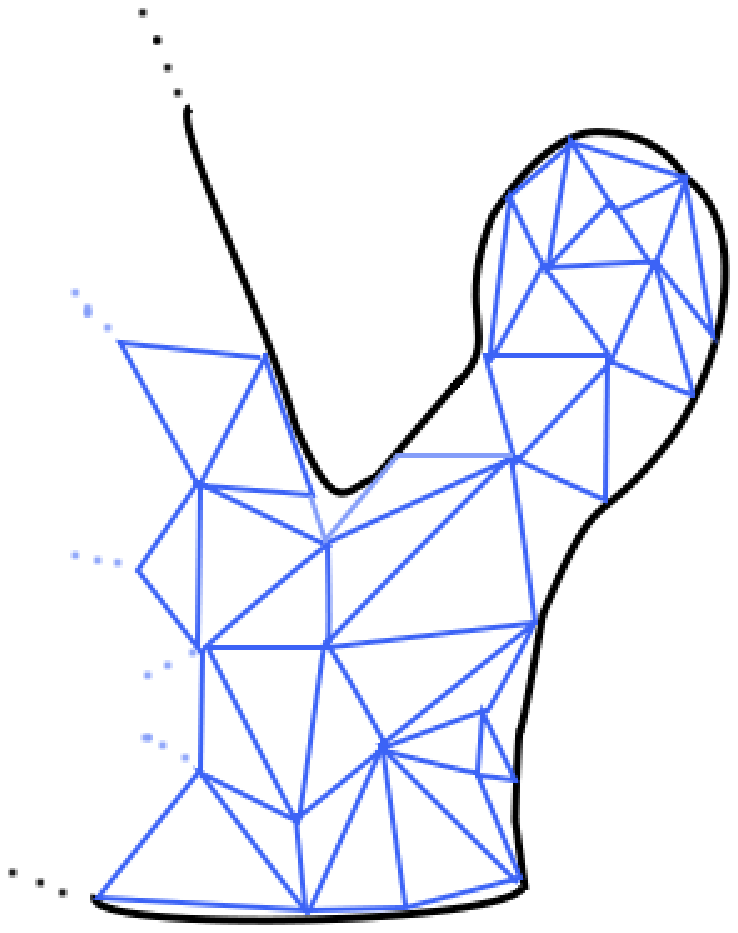}
\caption{A triangulation of a baby universe. If we successively remove triangles, then the final link between the baby universe and the parent will be a double link.}
\label{babyuniverse}}
\qquad 
\begin{minipage}{7cm}
      \includegraphics[scale=0.55]{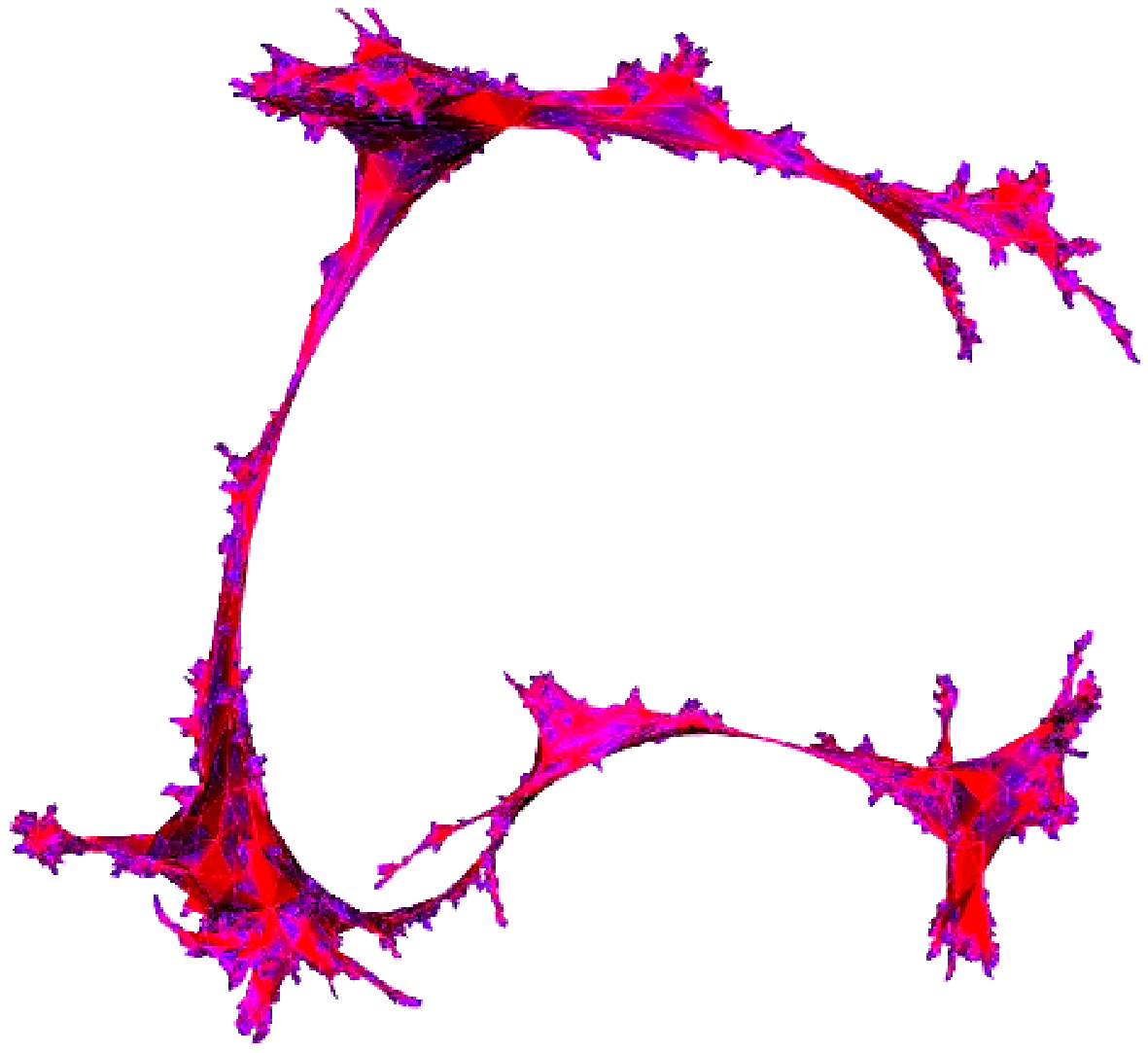}
\caption{A typical geometry in dynamical triangulation exhibiting the lack of a smooth continuum limit \cite{DTsim}.}%
\label{DTuniverse} 
\end{minipage} 
\end{figure}

Following \cite{Ambjorn:1998} we will construct a Lorentzian triangulation by requiring all triangles composing it to be flat Minkowski space, with one spacelike and two timelike edges. Furthermore we restrict the topology of spacetime to be cylindrical with spacelike boundaries. We define the set of triangulations that will be summed over by giving an algorithm for their construction. Suppose the initial boundary is composed of $l$ vertices joined by spacelike links, then we may attach future pointing timelike links to each of these vertices to obtain the set of vertices that, by joining them with spacelike links, form the next spatial slice. Furthermore, we require that each vertex has at least one timelike link ending on the same vertex as the rightmost timelike link of the previous vertex. This result of this procedure is shown in fig \ref{CDTtri}. 

Again the problem of computing a disc function can be solved using a matrix model technique \cite{Zohren:2009dj} or by the combinatorial approach. Although results from CDT will motivate later work in this thesis we will not have cause to compute any actual CDT observables. For this reason we will merely review the results of these computations and refer the interested reader to the literature for the technical details.

One can obtain a qualitative understanding of the difference between DT and CDT by considering which geometries are excluded from the path integral in CDT. From the construction for the causal triangulations given above we can see that there is no point at which the spatial slice can change topology. In terms of the combinatorial interpretation this means that the casual triangulations contain no double links. In the triangulations discussed in DT the double links were able to join two disconnected triangulations to form a larger one. If one were to view the evolution of a DT universe such a joining of two disconnected triangulations would correspond to part of the universe budding off from the main one to form a baby universe fig \ref{babyuniverse}. Together with topology changes, baby universe production is the only type of geometry excluded in CDT which is present in DT \cite{Ambjorn:1998}. 

\begin{figure}[t]
  \begin{center}
    \includegraphics[width=4cm]{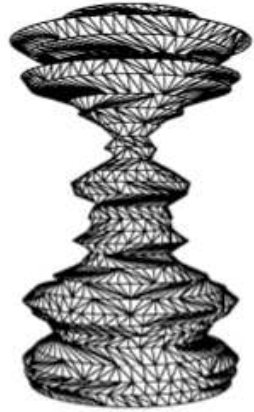}
    \caption{A typical geometry in CDT \cite{Ambjorn:2009rv}.}
    \label{CDTuniverse}
  \end{center}
\end{figure}

The difference between DT and CDT then depends on how much difference baby universe production makes to the resulting continuum theory. It can be shown that the effect of baby universe production is dramatic; it completely dominates the continuum limit of DT, with a baby universe forming at every point in the space \cite{Ambjorn:1998}. The resulting spacetime is highly non-smooth and fractal like as can be seen in fig \ref{DTuniverse}. In contrast in CDT there exist no baby universes in the continuum limit. This leads to spacetimes such as the one shown in fig \ref{CDTuniverse}. 

These result suggest we should consider some method by which the ``fractalness'' of space can be measured. In fact the observable we have yet to consider, the dimension of spacetime, is perfect for this role and will be discussed in detail in the next chapter. 

\chapter{The Dimension of Spacetime\label{ChapDim}}
In the previous chapter we introduced the matrix model method, by which the disc function and string susceptibility of DT may be computed. Of course there was one observable discussed in the previous chapter that we did not analyse using the matrix model approach; the dimension of the spacetime. This observable is in some sense the most interesting, as if we were a being living in such a universe the dimension is a far more accessible observable than the disc function.

Finding the dimensionality of the spacetime may seem trivial as we defined the theory using two-dimensional building blocks; indeed we even refer to the theory as two dimensional quantum gravity. However, we will see that the situation is more subtle than this. To give an idea of why this is the case one can consider the paths contributing to the propagator of a particle in the path integral formulation of quantum mechanics. Although all paths are built piecewise from one dimensional lines the entire path becomes ``fuzzy'' as we take the continuum limit. Is the resulting fuzzy path a one or two dimensional object?  One would be tempted to answer that it somehow has a dimension in-between these two values. This notion of fractional dimension can be made precise in a number of ways and in this chapter we will see how such concepts may be applied to the quantum gravity models defined in the previous chapter.

\section{The Hausdorff Dimension $d_H$}
One manner by which the dimensionality of a space can be defined is by observing how the volume of a ball scales with its radius. In smooth $n$-dimensional space we have $V(r) \propto r^n$, which motivates the definition of the Hausdorff dimension for a metric space. The Hausdorff dimension is defined such that if the Hausdorff dimension of the space is $d_H$, then for all points the following holds;
\beq
|B(r)| \sim r^{d_H} \qquad r \rightarrow 0,
\eeq
where $|B(r)|$ is the volume of the ball of radius $r$. In the models of quantum gravity we are considering we are not interested in the dimensionality of a single space but rather the expectation value for the dimensionality of spacetime. We therefore must consider, 
\beq
\label{dhDEF}
\avg{|B(r)|} \sim r^{d_H} \qquad r \rightarrow 0.
\eeq
The quantity $B(r)$ in the above equation may be computed for a given geometry i.e. metric, contributing to the path integral and the above statement is a statement about its mean behaviour after summing over all geometries. We will now review a number of ways that this quantity may be computed in DT and CDT models.

\subsection{Computing $d_H$ for DT and CDT}
We will begin by assuming that the Hausdorff dimension is not affected by the global topology of the spacetime; this allows one to consider a particular amplitude for which the calculation is easier. In order to compute the Hausdorff dimension we need to understand how to compute the average area of a spacetime of fixed topology in addition to restricting the geometries included in the amplitude to be of a certain ``radius''. To compute the mean area we need only differentiate with respect to the bulk cosmological constant, as this will insert the identity operator into the bulk. To enforce a fixed ``radius'' requires finer control of the computation than allowed by matrix models and therefore we must instead turn to the combinatorial approach. In order to make the notion of ``radius'' precise we define the geodesic distance between two vertices to be the number of links on the shortest path through the graph connecting the vertices. The distance of a point from a boundary is defined as the minimum geodesic distance between that point and a vertex on the boundary and the distance between two boundaries is defined if all points on one boundary have the same geodesic distance to the other boundary. 

The canonical amplitude to consider when computing the Hausdorff dimension is the two point correlation function on the sphere with the restriction that the geodesic distance between the operators is fixed to $t$ \cite{Watabiki:1993ym,Kawai:1993cj}. This means our computation differs slightly from the one that would be performed if we wanted to reproduce the definition of $d_H$ given in \eqref{dhDEF}. Instead we will compute how the total volume of the spacetime depends on the distance between two maximally separated points as the distance is made large.

Following \cite{Ambjorn:1995dg,Zohren:2009dj}, this amplitude may be obtained by considering the more general amplitude, $G(l_1,l_2,t)$, corresponding to a cylinder amplitude in which the geodesic distance between the entrance and exit loops, of length $l_1$ and $l_2$ respectively, is fixed to be $t$. The two point amplitude may then be obtained by considering the case when both entrance and exit loops shrink to zero size. Once we obtain the continuum two-point function $\tilde{G}(\tau,\mu)$ for operators separated by a distance $\tau$ we can then compute the Hausdorff dimension $d_H$, defined implicitly by,
\beq
\label{ComputedH}
\avg{V(\tau)} = \lim_{\tau \rightarrow \infty} -\frac{1}{\tilde{G}(\tau,\mu)}\frac{\partial \tilde{G}(\tau,\mu)}{\partial \mu} \sim \tau^{d_H},
\eeq
where $V(\tau)$ is the total volume of spacetime containing two points separated by a distance of $\tau$.
For triangulations containing $n$ triangles in which the entrance and exit loops are of length $l_1$ and $l_2$ respectively and are separated by a distance $t$ we may derive a recursion relation for the number, $\C{N}(n, l_1, l_2, t)$, of such triangulations by the same combinatorial methods used for the disc function \cite{Watabiki:1993ym}. There are two possible ways to construct such a triangulation from smaller triangulations depending on whether we add a new triangle or double link, as detailed in fig \ref{unrestrictTriMove1} and fig \ref{unrestrictTriMove2}. This leads to the recursion relation,
\beq
\label{recursiveN2loop}
\C{N}(n, l_1, l_2, t) = \C{N}(n-1, l_1+1, l_2, t) +2\sum^n_{m=0}\sum^{l_1-2}_{k=0} \C{N}(m, k, t) \C{N}(n-m, l_1-2-k, l_2, t),
\eeq
where the factor of two in the final sum is due to the fact that the exit loop may appear in either of the two disjoint triangulations being joined. In terms of $\C{N}(n, l_1, l_2, t)$ the two loop amplitude $G(l_1,l_2,t)$ is given by,
\beq
G(l_1,l_2,t) = \sum^\infty_{n=0}\C{N}(n, l_1, l_2, t) g^n.
\eeq
Consider a graph with $n$ triangles and $l_1$ boundary links in the entrance loop. If we were to add $l_1$ more triangles to the entrance loop evenly along the boundary, thereby adding a new layer to the triangulation, we would have increased the distance separating the entrance and exit loops by one. Therefore, if we were to add only a single triangle to the graph then we would have increased $t$ by roughly $1/l_1$. Combining this with \eqref{recursiveN2loop}, leads to,
\bea
G(l_1,l_2,t+1/l_1) &=& G(l_1,l_2,t) + \frac{1}{l_1} \partial_t G(l_1,l_2,t)\\ 
&=& g G(l_1 + 1,l_2,t) + 2 \sum^\infty_{n=0} w_n(g) G(l_1-n-2,l_2,t).
\eea
Finally, if we now introduce the 2-loop resolvent defined as,
\beq
W_{M;M}(z_1,z_2,t) = \sum^\infty_{n=0}\sum^\infty_{l_1=0}\sum^\infty_{l_2=0} \C{N}(n, l_1, l_2, t) g^n z_1^{-l_1-1}z_2^{-l_1-1},
\eeq
we have
\beq
\label{WMMteqn}
\frac{\partial}{\partial t} W_{M;M}(z_1,z_2,t) = \frac{\partial}{\partial z_1}\left[ (z_1-g z_1^2 - 2 W_M^{(0)}(z_1))W_{M;M}(z_1,z_2,t)\right],
\eeq
where $W_M(z_1)$ is the disc function as obtain using the combinatorial approach in the last chapter. To find the appropriate initial condition for $W_{M;M}(z_1,z_2,t)$ we first consider the initial condition for $G(l_1, l_2, t)$. We require $G(l_1, l_2, 0) = \delta_{l_1 l_2}$, which implies $W_{M;M}(z_1,z_2,0) = (z_1 z_2-1)^{-1}$. To obtain a equation for continuum quantities we must take a scaling limit. This requires us to decide how $t$ will scale in addition to the second boundary parameter $z_2$. One can see that in order for the initial condition to have a non-trivial scaling limit we must set $z_1 = z_c \exp(a\zeta_1)$, as usual, but for $z_2$ we set $z_2 = z_c^{-1}\exp(a\zeta_2)$. This gives,
\beq
\label{WMMinitcond}
\tilde{W}_{M;M}(\zeta_1,\zeta_2,0) = \frac{1}{\zeta_1+\zeta_2},
\eeq
where we have set $W_{M;M}(z_1,z_2,t) = a^{-1} \tilde{W}_{M;M}(\zeta_1,\zeta_2,\tau)$ in order for the scaling dimension of both sides of the equation to agree. Using this scaling behaviour for $z_1$, $z_2$ and $W_{M;M}$ in \eqref{WMMteqn} we find that the only non-trivial limit occurs when $t = a^{-1/2} \tau$, resulting in the equation,
\beq
\partial_\tau \tilde{W}_{M;M}(\zeta_1, \zeta_2, \tau) = -\partial_{\zeta_1} \left( \tilde{W}_M^{(0)}(\zeta) \tilde{W}_{M;M}(\zeta_1, \zeta_2, \tau)\right).
\eeq
This equation can be solved for $\tilde{W}_{M;M}(\zeta_1,\zeta_2,\tau)$, which together with the initial condition \eqref{WMMinitcond} gives, 
\beq
\tilde{W}_{M;M}(\zeta_1,\zeta_2,\tau) = \frac{1}{\zeta_1+\zeta_2}\frac{W_M^{(0)}(f(\zeta_1,\tau))}{W_M^{(0)}(\zeta_1)},
\eeq
where $f$ satisfies,
\beq
\frac{d f(\zeta,\tau)}{d \tau} = -W_M(f(\zeta,\tau)).
\eeq
The boundaries may be shrunk to points by taking the two cosmological constants to infinity and extracting the leading order coefficient, giving the result,
\beq
\tilde{G}(\tau,\mu) = \left(\alpha \mu\right)^{3/4} \frac{\cosh\left[\left(\alpha \mu\right)^{1/4} \tau/2\right]}{\sinh^3\left[\left(\alpha \mu \right)^{1/4}\tau/2\right]},
\eeq
where $\alpha$ is an unimportant numerical constant. Using the above equation in \eqref{ComputedH} gives $d_H = 4$ \footnote{A naive application of \eqref{ComputedH} will not in fact produce $d_H = 4$ due the presence of a separate length scale in the problem; that associated with the cosmological constant. A more transparent calculation of $d_H$ would involve determining the maximal distance between points as a function of $\mu$ and then eliminating $\mu$ from an expression of the unconstrained total volume. In our case we simply note that from dimensional analysis $\mu \sim \tau^{-4}$ yielding the stated result for $d_H$.}. This is quite a striking result as it shows the effective dimension of the space is far from what might be expected to arise from two dimensional building blocks. We should also emphasise that the above result does not depend on considering a long or short distance limit and therefore that $d_H = 4$ on all scales. 

\begin{figure}[t]
  \begin{center}
    \includegraphics[width=6cm]{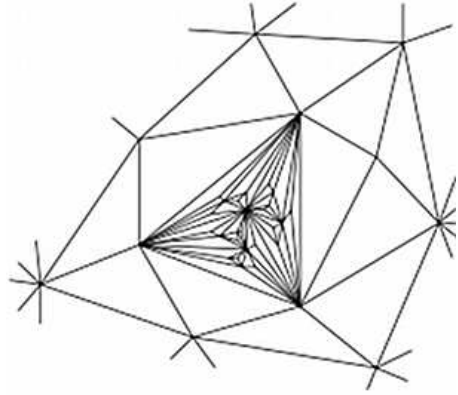}
    \caption{By drawing a baby universe in the lattice we see how this gives rise to a fractal structure \cite{Zohren:2009dj}.}
    \label{planarbabyuniverse}
  \end{center}
\end{figure}

One can avoid some of the above calculation and obtain the Hausdorff dimension via more qualitative means by noting that once the scaling dimension of the geodesic distance has been decided, then, given that we expect a power law relation between the geodesic distance and area, we can obtain the exponent from dimensional analysis. In the above computation we found that $\tau = a^{1/2} t$ and since area has dimensions of $a^2$ this means $A\sim \tau^4$. It is interesting to note that the scaling relation $\tau = a^{1/2} t$ indicates that each link traversed has an effective size of $a^{1/2}$. This is reminiscent of a random walk and indicates that the internal links of the graph form a highly random lattice, such as that shown in fig \ref{planarbabyuniverse}. Intuitively, this random structure is caused by there being baby universes everywhere on the lattice, meaning any path through the lattice must pass through many of them. This insight allows one to give a reasonable conjecture for the Hausdorff dimension in CDT. If one considers the triangulations arising in CDT then due to the sliced structure we do not have any baby universes and we expect that the shortest path through the triangulation will scale in the same way as the boundary length, i.e. with dimensions of $a$. We therefore expect that the area and geodesic distance will be related by $A \sim \tau^2$, giving $d_H = 2$. This qualitative argument is confirmed by an exact computation using combinatorial methods \cite{Ambjorn:1998,Zohren:2009dj}.


\section{The Spectral Dimension $d_s$}
We saw in the last chapter that due to baby universe production we might intuitively expect to obtain a continuum theory from DT that describes the dynamics of a fractal spacetime. The last section gave concrete evidence for this in that the mean Hausdorff dimension of the continuum spacetime was four despite the lattice theory being defined using two dimensional simplexes. However, the fact that the Hausdorff dimension is exactly four does raise the question of how we can be sure that the continuum theory is indeed fractal; perhaps it is a smooth four dimensional manifold. This question arises because the Hausdorff dimension is in fact not that sensitive to the fractal structure of a space. One way to see this is to note it is the same for different graphs with greatly differing structure, for example, the square lattice and the comb graph obtained by removing all its horizontal links except one. The difference is due to the decreased connectivity of the comb graph.

In this section we will introduce another method by which the dimension of a space can be defined based on the properties of a propagating particle. This will allow us to distinguish the DT continuum limit from a smooth manifold. 

Consider the rate at which the force of an electric charge decreases with distance, it clearly depends on dimension. This is related to the form the particle propagator takes in the corresponding field theory. Both of these things are determined by the Laplacian operator of the spacetime on which they are defined and so one would expect that the dimension of the space could be extracted from the Laplacian. A simple way to do this is to look at some dynamical process whose equations of motion involves the Laplacian and look for properties of its solutions that are dependent on the dimension of the manifold. If the dynamical process can be defined on spaces other than manifolds then these dimension dependent observables can be used to define the dimension for the space.

The usual choice for the dynamical process is that of heat diffusion. We define the heat kernel $K(x,x';\sigma)$ by,
\beq
\left(\partial_\sigma - \Delta_x\right) K(x,x';\sigma) = 0,
\eeq
together with the boundary condition $\sigma \geq 0$ and $ K(x,x';0) = \delta(x-x')$. On a smooth $d$ dimensional manifold the heat kernel has the property that as $\sigma \rightarrow 0$,
\beq
K(x,x;\sigma) \sim \frac{1}{(4\pi \sigma)^{d/2}},
\eeq
from which we can read off the dimension of the manifold. The dimension computed in this way is referred to as the spectral dimension of the space. 

The question of generalising the spectral dimension to less smooth spaces now amounts to generalising the process of heat diffusion. This is in fact why the heat diffusion process was chosen; qualitatively, the heat kernel $K(x,x';\sigma)$ can be thought of as a ``scaling limit'' of the probability of a random walk moving between the points $x$ and $x'$ in $t$ steps. By a scaling limit we mean a process in which the number of steps in the walk is taken to infinity while reducing the size of each step. 

The process of a random walk is obviously something that can be naturally defined on a graph, which allows the spectral dimension to be defined for each triangulation contributing to the path integral. In particular, for triangulations, we need to compute the return-probability $p(t)$ for the random walk to leave a given point and return to it in $t$ steps. In the scaling limit this quantity will become $K(x,x;\sigma)$ and we then can study its behaviour as $\sigma \rightarrow 0$ to find the spectral dimension.

Obviously the above procedure depends on giving a precise definition of the scaling limit. In fact we may extend the definition of the spectral dimension to discrete structures, without taking a scaling limit, by considering the $t \rightarrow \infty$ behaviour of $p(t)$. The reason for studying the $t \rightarrow \infty$ limit in the discrete case can be understood from considering the process of taking the scaling limit, in which the link length of the discrete space is reduced to zero. A walk whose length is of finite size measured in link lengths would produce a measurement of the spectral dimension contaminated by cutoff effects. In contrast, a short walk in the continuum space will still correspond to a walk whose length is much greater than the link length. We therefore expect the large $t$ behaviour of $p(t)$ to fall off with the same exponent as the small $\sigma$ behaviour of $K(x,x;\sigma)$. For discrete spaces we therefore {\emph{define}} the spectral dimension by,
\beq
\label{tdependence}
d_s=-2 \lim_{t\to\infty} \frac{\log(p(t))}{\log t},
\eeq
if the limit exists.

As we saw with the computation of the Hausdorff dimension, the matrix model approach has difficulty computing quantities that are not disc-functions or their generalisations. Instead, we had to augment the computation with a combinatorial approach. This problem is in fact worse for computing the spectral dimension; this is not surprising as the spectral dimension relates to a much more complicated phenomenon than the Hausdorff dimension, which simply measures volume. In DT there exists an approach to computing the spectral dimension, based on the continuum formulation \cite{Ambjorn:1997jf}, although we will not review this approach in detail here. The result of the computation in \cite{Ambjorn:1997jf} is that the spectral dimension in DT is two. Because the Hausdorff dimension differs from the spectral dimension, this confirms our expectation that the continuum limit in two dimensional DT is not describing a smooth manifold but a fractal space.

When it comes to CDT we are mostly stuck for the moment with merely exploring the spectral dimension numerically. The result of such numerical simulations is that in $d$-dimensional CDT (where $d=3,4$ has been considered) the spectral dimension is consistent with $d_s = d$ \cite{Benedetti:2009ge}. For $d_s = 2$ the situation is not so problematic and there has been progress, which we will review shortly, in computing the spectral dimension analytically. The result of such work suggests that unlike DT, the long distance behaviour of the resulting spacetime is that of a smooth $d$-dimensional manifold.

 

\subsection{Dimensional Reduction}
In addition to providing a measure of the dimension of the space, more information can be extracted from the analysis of a random walk by controlling the walk length. Since the size of the space explored by a random walk of $t$ steps goes as $\sqrt{t}$ we can probe shorter distances by reducing the length of the walk. We therefore can gain some insight into the Planck scale behaviour of gravity in CDT and other models. The effective spectral dimension as a function of the length probe has been investigated numerically in three and four dimensional CDT with the surprising result that although the dimension is three and four respectively on long distance scales, on very short scales the spectral dimension in both cases is measured to be two \cite{Ambjorn:2005aa}. This phenomenon is known as dimensional reduction.

The phenomenon of dimensional reduction of the spectral dimension is compelling, as similar behaviour has been observed in a variety of other approaches to quantum gravity in four dimensions such as the non-Gaussian fixed point possibility in the perturbation-is-fundamental program \cite{Lauscher} and also Horava-Lifshitz gravity \cite{Horava:2009aa,Horava:2009ab}. Other evidence has been discussed in \cite{benedetti,Carlip:2009aa}. This is very suggestive of dimensional reduction being a robust feature of quantum gravity. It will be our task in the next chapter to attempt to construct a toy model in which dimensional reduction of the spectral dimension occurs. 

\subsection{The Spectral Dimension in CDT}
The triangulations in DT and CDT are formulated mathematically as a set of graphs together with a measure on that set. We will refer to a set of graphs together with a measure as a random graph or an ensemble of graphs. To construct a toy model exhibiting dimensional reduction we will want to consider a random graph in which a scaling limit can be taken. Furthermore, we will want to keep the ensemble of graphs as closely related to the CDT graphs as possible. Before doing this it is worth pausing to draw inspiration from CDT. We will briefly review one attempt to derive the spectral dimension in two dimensional CDT. This will provide clues at to what sort of graphs should be used to construct our ensemble.

As previously discussed, matrix model and combinatorial methods are not appropriate for computing the spectral dimension. An alternative approach is to directly analyse the random walk. One method that proves useful in this approach is to encode the first-return-probability for a random walk in a generating function \cite{Durhuus:2005fq,Durhuus:2006vk,Durhuus:2009sm}. By deriving sufficiently tight bounds on this generating function we can obtain an exact value for the spectral dimension. This method has the advantage that, when it works, it can provide stronger statements than merely averages; it can show that a particular value for the spectral dimension holds occurs with probability one \cite{Durhuus:2006vk,Durhuus:2009sm}. It also has the advantage of being entirely mathematically rigourous. 

\begin{figure}[t]
\centering 
\parbox{7cm}{
\includegraphics[scale=0.5]{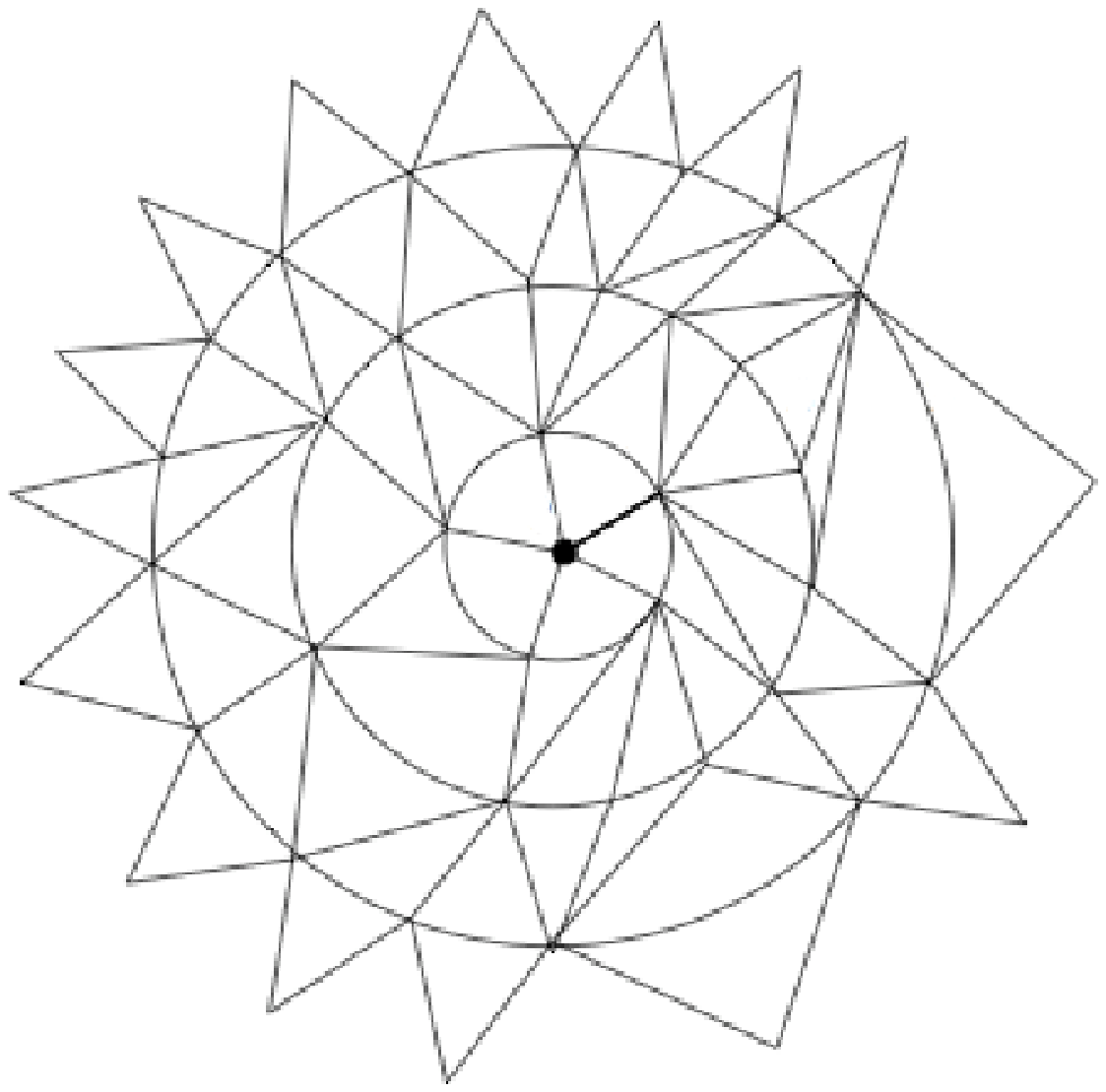}
\caption{A causal triangulation drawn in the plane. Note we have added an extra vertex with links to all vertices in the inner most spatial slice \cite{Durhuus:2009sm}.}
\label{CDTbijection1}}
\qquad 
\begin{minipage}{7cm}
\includegraphics[scale=0.5]{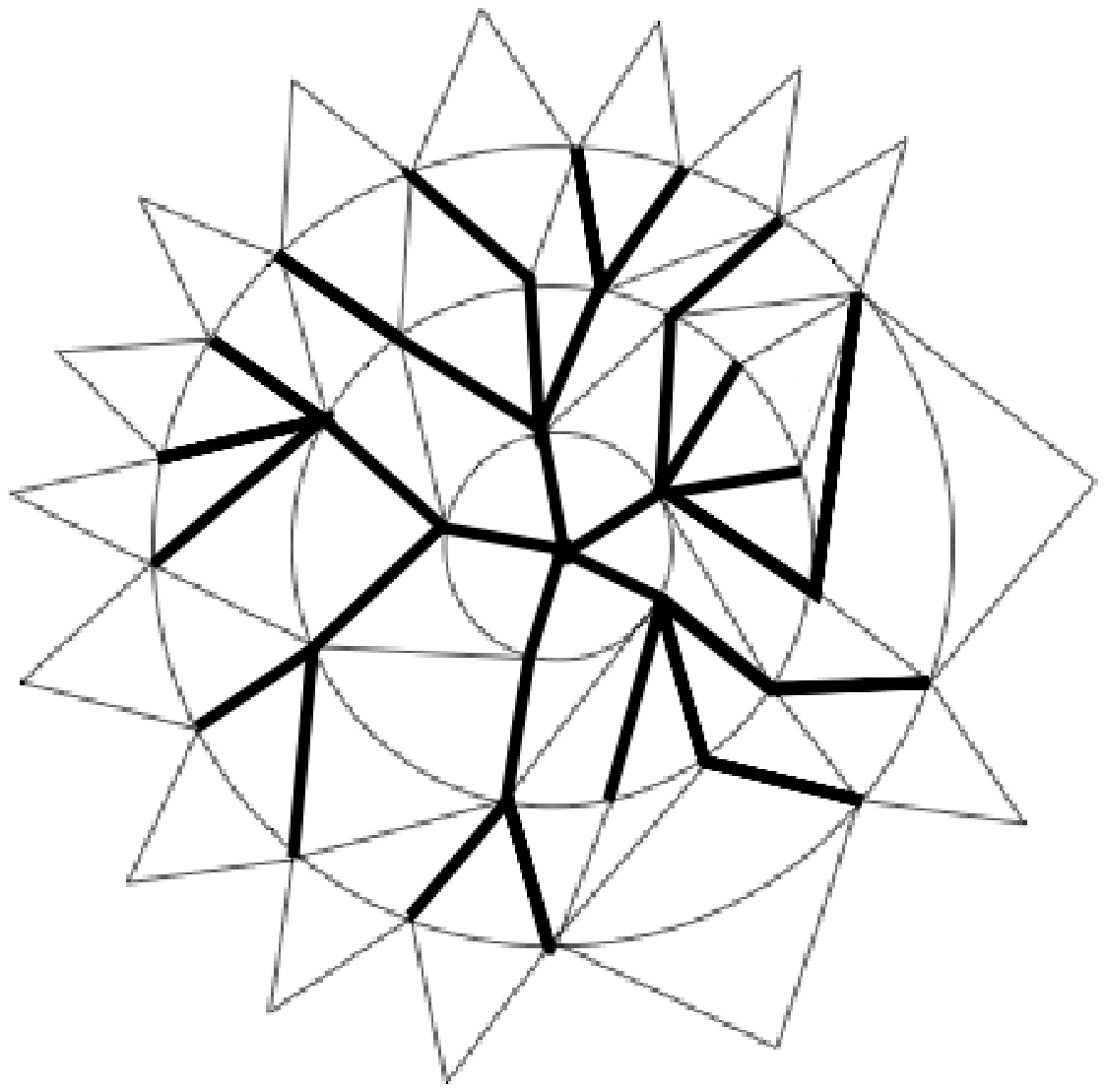}
\caption{Casual triangulations have a bijection with trees. In this case the tree associated to the triangulation is shown by thick lines. \cite{Durhuus:2009sm}.}
\label{CDTbijection2} 
\end{minipage} 
\end{figure}

To date it has proven too difficult to carry out the above strategy directly on the triangulation. Instead attempts have focused on considering ensembles of simpler graphs for two reasons; firstly, because they serve as toy examples in which to develop the necessary techniques and secondly because a number of simpler graph ensembles can be obtained from CDT while preserving some of its properties. In fact it can be shown that certain simpler graphs have spectral dimensions that bound the spectral dimension of CDT. In particular the graphs obtained by collapsing all vertices of a given height to a single vertex have been used to show that $d_s \leq 2$ almost surely for CDTs \cite{Durhuus:2009sm}. On the other hand one can relate a causal triangulation to a tree graph $T$ by the following procedure \cite{Durhuus:2009sm}. First we recognise that casual triangulations are planar graphs and so can be represented as lying in the plane with the spacelike slices forming concentric circles as shown in fig \ref{CDTbijection1}. We denote the vertices in the $k$th spacelike slice as $S_k$, we define $S_0$ as containing a single new vertex connected to all vertices in $S_1$. The algorithm then maps the triangulation into a tree $T$:
\begin{itemize}
\item[1.] All vertices in the triangulation are in $T$ in addition to a new root vertex that is only linked to $S_0$.
\item[2.] All links from $S_0$ to $S_1$ are in $T$.
\item[3.] All links from a vertex in $S_k$ to vertices in $S_{k+1}$, apart from the clockwise-most link, are in $T$.
\end{itemize}
This algorithm is shown in fig \ref{CDTbijection2}. In fact it can be shown that the map produced by this algorithm is a bijection and furthermore preserves the measure on the CDTs. The spectral dimension of the resulting random tree has long been known to be $4/3$ and this gives a lower bound on the spectral dimension of the CDTs. One tool to obtain results about such trees is another random graph known as a comb. It is this random graph that we will use in the next chapter to construct a toy example of dimensional reduction.



\chapter{A Toy Model of Dimensional Reduction\label{ChapComb}}
\newcommand\Qbar{\bar{Q}}

In the last chapter we saw that both DT and CDT exhibit fractal behaviour in some regimes of the theory. However DT appears to be far more pathological as fractal spacetimes dominate the continuum limit. Furthermore we saw that the situation in CDT is much improved, with both Hausdorff and spectral dimension agreeing and agreeing with the dimension, on long distance scales, of the underlying simplexes. At short distances however it exhibits a reduction in the spacetime dimension; a phenomenon that has been observed in other approaches to quantum gravity and has been given the name dimensional reduction. 

In this chapter we will construct a definition for a scale dependent spectral dimension and show that there are models which do indeed exhibit scale dependent spectral dimensions defined in this way. In particular we develop a simple model based on previous work on random combs \cite{Durhuus:2005fq}. These are a family of simple geometrical models which share some of the properties of the CDT model; instead of an ensemble of triangulations we have an ensemble of graphs consisting of an infinite spine with teeth of identically independently distributed length hanging off (we define these graphs precisely in Section 2). It was shown in \cite{Durhuus:2005fq} that the spectral dimension is determined by the probability distribution for the length of the teeth. In this chapter we show that it is possible to extend the work of \cite{Durhuus:2005fq} by taking a continuum limit thus ensuring that the cut-off scale is much shorter than all physical distance scales. We find that the spectral dimension is one if we take the physical distance explored by the random walk to zero and there exists a number of continuum limits in which the long distance spectral dimension differs from its short distance counterpart. As a by-product of this work we also extend some of the proofs given in \cite{Durhuus:2005fq} to a wider class of probability distributions.						

This chapter is organized as follows. In Section \ref{combdefs} we briefly review some known results for combs and their spectral dimension and then explain how in principle these can be extended to show different spectral dimensions at long and short distance scales. In Section \ref{SimpleExample} we introduce a simple model which we prove does in fact exhibit a spectral dimension that is different in the UV and IR. This model forms the basis of all later generalisations. In Section \ref{PowerComb} we generalise the results of Section \ref{SimpleExample} to combs in which teeth of any length may appear with a probability governed by a power law. In Section \ref{multiscale} we examine the possibility of intermediate scales in which the spectral dimension differs from both its UV and IR values. In Section \ref{genericcomb} we analyse the case of a comb in which the tooth lengths are controlled by an arbitrary probability distribution and show that continuum limits exist in which the short distance spectral dimension is one while the long distance spectral dimension can assume values in one-to-one correspondence with the positions of the real poles of the Dirichlet series generating function for the probability distribution. We then show how these techniques can be used to extend the results of \cite{Durhuus:2005fq}. In Section \ref{DiscussComb} we discuss our results and possible directions for future work. 

\section{Combs and Walks}
\label{combdefs}

In this section we review some basic facts about random combs and random walks. As much as possible we use the same notation and conventions as \cite{Durhuus:2005fq} and refer to that paper for proofs omitted here. 

\subsection{Definitions}

We use the definition of a comb given in \cite{Durhuus:2005fq}.  Consider the nonnegative integers regarded as a graph, which we denote $N_\infty$, so that $n$
has the neighbours $n\pm 1$ except for $0$ which only has $1$ as a neighbour.
Furthermore, let $N_\ell$ be the integers $0,1,\ldots ,\ell$ regarded as a graph so that
each integer $n\in N_\ell$ has two neighbours $n\pm 1$ except for $0$ and $\ell$
which only have one neighbour, $1$ and $\ell-1$, respectively. A comb $C$ is an infinite rooted tree-graph with a special subgraph $S$ called the spine which is isomorphic to $N_\infty$ with the root at $0$.  At each vertex of
$S$, except the root $0$,  there is attached one of the graphs $N_\ell$ or
$N_\infty$.  We adopt the convention that these linear graphs which are
glued to the spine are attached at their endpoint $0$.  The linear graphs
attached to the spine are called the teeth of the comb, see figure \ref{fig1}.
We will find it convenient to say that a vertex on the spine with no tooth
has a tooth of length $0$.  We will denote by $T_n$ the tooth attached to
the vertex $n$ on $S$, and by $C_k$ the comb obtained by removing the links 
$(0,1),\ldots ,(k-1,k)$, the teeth $T_1,\ldots ,T_k$ and relabelling the 
remaining vertices on the spine in the obvious way. The number of nearest neighbours of a vertex $v$ will be denoted $\sigma(v)$.

It is convenient to give names to some special combs which occur frequently. We denote by $C=*$  the full comb in which every vertex on the spine is attached to an infinite tooth, and by $C=\infty$  the empty comb in which the spine has no teeth (so an infinite tooth is itself an example of $C=\infty$).

\begin{figure}[t]
  \begin{center}
    \includegraphics[width=10cm]{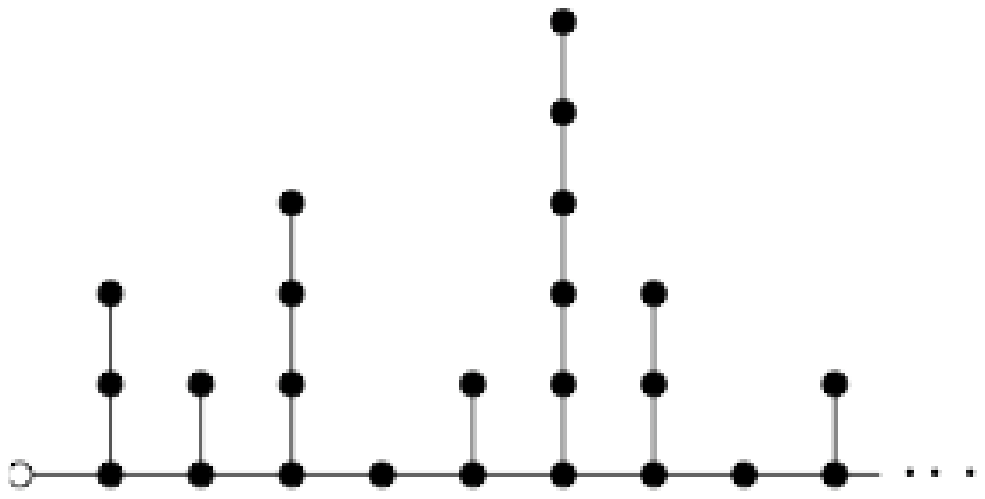}
    \caption{A comb.}
    \label{fig1}
  \end{center}
\end{figure}

Now let $\C{C}$ denote the collection of all combs and define a probability measure $\nu$ on $\C{C}$ by letting the length of the teeth be identically and independently distributed by the measure $\mu$.  We will refer to the set $\C{C}$ equipped with the probability measure $\nu$
as a random comb. Measurable subsets $\C{A}$ of $\C{C}$ are called {\it events} and $\nu (\C{A})$ is the probability of the event $\C{A}$.
The measure of the set of combs $\C{A}$ with teeth at $n_1, n_2,\ldots ,n_k$ having lengths $\ell_1,\ell_2,\ldots ,\ell_k$ is 
\beq
\nu (\C{A})=\prod_{j=1}^k \mu (\ell_j).
\eeq
For any $\nu$-integrable function $F$ defined on $\C{C}$ we define the expectation value \footnote{Since the space of combs is discrete, the integration should be replaced by a sum. However, we will continue to use an integral sign, with the understanding that it refers to a sum over discrete structures.}
\beq
\langle F(C)\rangle = \int F(C)\,d\nu.
\eeq
We will often use the shorthand $\bar F$ for $\langle F(C)\rangle$.

\subsection{Random Walks}
We consider simple random walks on the comb $C$ and count the time $t$ in integer steps. At each time step the walker moves from its present location  at vertex $v$  to one of the neighbours of $v$ chosen with equal probabilities $\sigma(v)^{-1}$.  Unless otherwise stated  the walker
always starts at the root at time $t=0$.  

The generating function for the probability $p_C(t)$ that the walker is at the        
root at time $t$, having left it  at $t=0$, is defined by
\beq
Q_C(x)=\sum_{t=0}^\infty (1-x)^{t/2}p_C(t)
\eeq
and we denote by $P_C(x)$ the corresponding generating function for the probability that the walker returns to the root for the \emph{first} time, excluding the trivial walk of length 0. Since walks returning to the root   can be decomposed  into walks returning for the 1st, 2nd etc time we have
\beq\label{QPrelation}
Q_C(x) = \frac{1}{1-P_C(x)}.
\eeq
It is convenient to consider contributions to $P_C(x)$ and $Q_C(x)$ from walks which are  restricted. 
Let $P_C^{(n)}(x)$ denote the contribution to $P_C(x)$ from walks whose maximal distance along the spine from the root is $n$ and  define 
\beq P^{(<n)}_C(x)=\sum_{k=0}^{n-1} P_C^{(k)}(x)\eeq
which is the contribution from all walks which do not  reach the point $n$ on the spine. Similarly we define
\beq P^{(>n-1)}_C(x)=\sum_{k=n}^{\infty} P_C^{(k)}(x).\eeq
Clearly $P_C(x)$ can be recovered from $P^{(<n)}_C(x)$ by setting $n\to\infty$. We define the corresponding restricted contributions to $Q_C(x)$ in the same way.
By decomposing walks contributing to  $P^{(<n)}_C(x)$ into a step to $1$, walks returning to $1$ without visiting the root, and finally a step back to the root it is straightforward to show that
\beq\label{Precurrence} P^{(<n)}_C(x)=\frac{1-x}{3-P_{T_1}(x)-P^{(<n-1)}_{C_1}(x)},\eeq
where we have adopted the convention that for the empty tooth, $T=\emptyset$,
\beq P_\emptyset(x)=1.\eeq
The relation \eqref{Precurrence} can be used to compute the generating function explicitly for any comb with a simple periodic structure and we list some standard results in \ref{StandardResults}.

There are a number of elementary lemmas which characterise the dependence of $P_C(x)$ on the length of the teeth and the spacing between them \cite{Durhuus:2005fq}. We state them here in a slightly generalized form which is useful for our subsequent manipulations.
\begin{lemma}
\label{MonoLem1}
The function $P^{(<n)}_C(x)$ is a monotonic increasing function of $P_{T_k}(x)$ and $P^{(<n-k)}_{C_{k}}(x)$ for any $n> k\ge 1$.
\end{lemma}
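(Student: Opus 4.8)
The plan is to iterate the recurrence \eqref{Precurrence} exactly $k$ times, so as to expose $P_{T_k}(x)$ and $P^{(<n-k)}_{C_k}(x)$ at the innermost level of a finite continued fraction, and then to deduce monotonicity from the observation that each level of the fraction is an increasing function of the level below it. The case $k=1$ is immediate from \eqref{Precurrence}; the general case is obtained by composition.

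Concretely, I would introduce the abbreviations $A_j = P^{(<n-j)}_{C_j}(x)$ for $0\le j\le k$, so that $A_0 = P^{(<n)}_C(x)$ and $A_k = P^{(<n-k)}_{C_k}(x)$. Since $C_{j-1}$ has first tooth $T_j$ and satisfies $(C_{j-1})_1 = C_j$, applying \eqref{Precurrence} to $C_{j-1}$ at level $n-(j-1)$ yields
\beq
A_{j-1} = \frac{1-x}{\,3 - P_{T_j}(x) - A_j\,}, \qquad 1\le j\le k .
\eeq
Writing $\phi_j(a) = (1-x)/(3 - P_{T_j}(x) - a)$, this says $A_{j-1} = \phi_j(A_j)$, so that $A_0 = \phi_1\circ\cdots\circ\phi_k(A_k)$, and the two quantities of interest enter only through the innermost relation $A_{k-1} = \phi_k(A_k) = (1-x)/(3 - P_{T_k}(x) - A_k)$.

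Next I would pin down the signs. For $x\in[0,1]$ we have $1-x\ge 0$, and the probabilistic reading of the generating functions gives $0\le P_{T_j}(x)\le 1$ and $0\le A_j\le 1$, since each is a partial sum of first-return probabilities weighted by factors $(1-x)^{t/2}\le 1$ (the convention $P_\emptyset(x)=1$ sitting at the extreme of this range). Hence every denominator satisfies $3 - P_{T_j}(x) - A_j \ge 1 > 0$, and on this domain
\beq
\frac{\partial \phi_j}{\partial a} = \frac{1-x}{(3 - P_{T_j}(x) - a)^2} \ge 0 ,
\eeq
so each $\phi_j$ is monotonically increasing in $a$. By the identical computation, $A_{k-1}$ is increasing in both $P_{T_k}(x)$ and $A_k = P^{(<n-k)}_{C_k}(x)$.

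The conclusion then follows by composition: $A_0 = \phi_1\circ\cdots\circ\phi_{k-1}(A_{k-1})$ is an increasing function of $A_{k-1}$, which in turn increases with each of $P_{T_k}(x)$ and $P^{(<n-k)}_{C_k}(x)$, so $P^{(<n)}_C(x)$ is increasing in each. The one step that genuinely needs care, and the only place the bound $P\in[0,1]$ for $x\in[0,1]$ is used, is the strict positivity of every intermediate denominator: this is what excludes the sign reversal that would otherwise destroy monotonicity. Positivity is preserved throughout the variation because raising either $P_{T_k}(x)$ or $A_k$ only raises the intermediate values $A_j$, which remain genuine restricted return generating functions confined to $[0,1]$; consequently no additional hypotheses on $\mu$ or on the teeth are required.
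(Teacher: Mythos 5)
Your proof is correct and takes essentially the same approach as the paper: the paper's proof (which defers to the treatment in Durhuus--Jonsson--Wheater for the case $n=\infty$) likewise rests on iterating the recurrence \eqref{Precurrence}, using positivity of the denominators and the fact that each level of the resulting continued fraction is an increasing function of the level below it. You merely make explicit the composition structure and the $[0,1]$ bounds on the generating functions, which the paper leaves implicit.
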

\begin{lemma}
\label{MonoLem2}
$P^{(<n)}_C(x)$ is a decreasing function of the length, $\ell_k$, of the tooth $T_k$ for any $n>k\ge 1$.
\end{lemma}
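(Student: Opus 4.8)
The plan is to reduce the statement to a monotonicity property of a single tooth, which can then be handled by the recurrence \eqref{Precurrence} together with Lemma \ref{MonoLem1}. The length $\ell_k$ of the tooth $T_k$ enters the expression for $P^{(<n)}_C(x)$ only through the substructure $T_k$: forming the comb $C_k$ deletes the teeth $T_1,\ldots,T_k$, so neither $C_k$ nor the teeth $T_1,\ldots,T_{k-1}$ depend on $\ell_k$, and hence $P^{(<n-k)}_{C_k}(x)$ and $P_{T_1}(x),\ldots,P_{T_{k-1}}(x)$ are all unaffected when $\ell_k$ is varied. Iterating \eqref{Precurrence} $k$ times exhibits $P^{(<n)}_C(x)$ as a function of $P_{T_1}(x),\ldots,P_{T_k}(x)$ and $P^{(<n-k)}_{C_k}(x)$, and by Lemma \ref{MonoLem1} this function is monotonic increasing in $P_{T_k}(x)$. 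Consequently the sign of the dependence of $P^{(<n)}_C(x)$ on $\ell_k$ equals the sign of the dependence of $P_{T_k}(x)=P_{N_{\ell_k}}(x)$ on $\ell_k$, and it suffices to prove that $P_{N_\ell}(x)$ is a decreasing function of $\ell$.

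For this last step I would use the analogue of \eqref{Precurrence} for a standalone tooth. Since the interior vertices of a tooth have degree two rather than three, the same excursion decomposition (a step into the tooth, a first return to the first interior vertex that avoids the base, and a step back) yields
\beq
P_{N_\ell}(x)=\frac{1-x}{2-P_{N_{\ell-1}}(x)},\qquad P_{N_0}(x)=P_\emptyset(x)=1.
\eeq
As a consistency check, setting all teeth empty reduces \eqref{Precurrence} to precisely this half-line recurrence, and $P_{N_1}(x)=1-x$ agrees with a direct computation on a single edge. Writing $\psi(u)=\frac{1-x}{2-u}$, for $x\in[0,1]$ every iterate $P_{N_\ell}(x)$ lies in $[0,1]$, so $u<2$ throughout and $\psi'(u)=\frac{1-x}{(2-u)^2}\ge 0$; that is, $\psi$ is increasing on this domain.

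Monotonicity in $\ell$ then follows by induction. The base case is $P_{N_1}(x)=1-x\le 1=P_{N_0}(x)$. For the inductive step, if $P_{N_{\ell-1}}(x)\le P_{N_{\ell-2}}(x)$ then applying the increasing map $\psi$ preserves the inequality, giving $P_{N_\ell}(x)=\psi(P_{N_{\ell-1}}(x))\le\psi(P_{N_{\ell-2}}(x))=P_{N_{\ell-1}}(x)$. Hence $P_{N_\ell}(x)$ is decreasing in $\ell$ (strictly for $0<x<1$). Combining this with the reduction above—an increasing function of a decreasing quantity is decreasing—shows that $P^{(<n)}_C(x)$ is a decreasing function of $\ell_k$, as claimed.

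As for difficulty, the reduction through Lemma \ref{MonoLem1} is essentially immediate once one notes that $\ell_k$ affects only $T_k$, so the real content is the single-tooth monotonicity. The only points requiring care there are verifying that the tooth recurrence carries the coefficient $2$ in its denominator (the degree-two vertices of a tooth, rather than the degree-three spine vertices of \eqref{Precurrence}) and checking that all iterates remain in $[0,1]$ so that $\psi$ stays increasing; both are routine. If instead the explicit closed form for $P_{N_\ell}(x)$ from the standard results in \ref{StandardResults} is invoked, the monotonicity in $\ell$ can be read off directly and the induction becomes unnecessary.
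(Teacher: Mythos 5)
Your proof is correct and follows essentially the same route as the paper, which simply defers to the $n=\infty$ proofs of Durhuus--Jonsson--Wheater: reduce via \eqref{Precurrence} and Lemma \ref{MonoLem1} to monotonicity of the single-tooth generating function in $\ell$, and then establish that $P_{N_\ell}(x)$ decreases with $\ell$ --- your induction on the tooth recurrence $P_{N_\ell}=\frac{1-x}{2-P_{N_{\ell-1}}}$ being a self-contained substitute for reading this off the closed form. One caution about your final remark: the second display of \ref{StandardResults}, labelled \eqref{Pell}, carries a sign typo (it should read $1-P_\ell(x)=\sqrt{x}\tanh\left(m_\infty(x)\ell\right)$, consistent with the first display and with your check $P_{N_1}(x)=1-x$), and the accompanying claim there that $P_\ell(x)$ is strictly \emph{increasing} in $\ell$ refers to that mislabelled quantity, so reading the monotonicity off the appendix as printed would give the wrong sign; your independent derivation is the safer path.
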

\begin{lemma}
\label{RearrangeLem1}
Let $C'$ be the comb obtained from $C$ by swapping the teeth $T_k$ and $T_{k+1}$, $k<n-1$. Then $P^{(<n)}_C(x)>P^{(<n)}_{C'}(x)$ if and only if $ P_{T_k}(x)> P_{T_{k+1}}(x)$.
\end{lemma}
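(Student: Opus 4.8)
The plan is to reduce the statement to the case $k=1$ and then carry out a short explicit computation using the recurrence \eqref{Precurrence}. First I would observe that $C$ and $C'$ agree on the teeth $T_1,\ldots,T_{k-1}$, so by Lemma~\ref{MonoLem1} the quantity $P^{(<n)}_C(x)$ is a strictly increasing function of $P^{(<n-k+1)}_{C_{k-1}}(x)$ with the common ``prefix'' $T_1,\ldots,T_{k-1}$ held fixed. Hence $P^{(<n)}_C(x)>P^{(<n)}_{C'}(x)$ if and only if $P^{(<n-k+1)}_{C_{k-1}}(x)>P^{(<n-k+1)}_{C'_{k-1}}(x)$, and the combs $C_{k-1},C'_{k-1}$ differ precisely by swapping their first two teeth. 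This reduces everything to $k=1$.

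For $k=1$ I would apply \eqref{Precurrence} twice to write $P^{(<n)}_C$ in terms of $a=P_{T_1}(x)$, $b=P_{T_2}(x)$ and the tail $y=P^{(<n-2)}_{C_2}(x)$, which depends only on $T_3,T_4,\ldots$ and is therefore common to $C$ and $C'$. Writing $c=1-x$ and $u=3-y$, one gets $P^{(<n)}_C = c(u-b)/\big[(3-a)(u-b)-c\big]$, and $P^{(<n)}_{C'}$ is the same expression with $a\leftrightarrow b$. Subtracting and simplifying, the difference factors cleanly:
\beq
P^{(<n)}_C(x) - P^{(<n)}_{C'}(x) = \frac{c\,(a-b)\,\big[(u-a)(u-b)-c\big]}{D_1 D_2},
\eeq
where $D_1=(3-a)(u-b)-c$ and $D_2=(3-b)(u-a)-c$. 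The algebra is routine once the cross terms are collected, with the $(3-b)-(3-a)=a-b$ cancellation producing the explicit factor of $(a-b)=P_{T_1}(x)-P_{T_2}(x)$.

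The remaining task, and the one requiring the most care, is to show that every factor other than $(a-b)$ is strictly positive, so that the sign of the difference is governed entirely by $P_{T_1}-P_{T_2}$. This rests on elementary bounds on the generating functions: for $0<x<1$ each first-return function satisfies $0\le P_T(x)\le 1$ (with the value $1$ only on the empty-tooth convention $P_\emptyset=1$) and $0\le P^{(<m)}_{C}(x)<1$, since $(1-x)^{t/2}<1$ while the total first-return probability is at most $1$. Thus $a,b\le 1$ and $y<1$, giving $u=3-y>2$, so that $u-a,u-b>1$ and $(u-a)(u-b)>1>c$, whence $(u-a)(u-b)-c>0$; likewise $D_1\ge 2(u-b)-c>0$ and $D_2>0$. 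With $c>0$ this forces the sign of $P^{(<n)}_C-P^{(<n)}_{C'}$ to equal that of $a-b$, which is exactly the claimed equivalence. I expect the main obstacle to be bookkeeping rather than conceptual: verifying the positivity bounds uniformly (including the empty-tooth boundary cases, where $a-b=0$ makes the statement trivial) and confirming that the monotone reduction in the first step applies at the correct index $k-1<n$.
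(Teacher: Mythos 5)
Your proof is correct and takes essentially the same route as the paper, whose own proof consists of invoking the recurrence \eqref{Precurrence} and following the argument of Durhuus--Jonsson--Wheater for the $n=\infty$ case: reduce to the adjacent-swap at the front via monotonicity, then determine the sign of the difference by an explicit computation with \eqref{Precurrence}. Your factorisation of $P^{(<n)}_C-P^{(<n)}_{C'}$ and the positivity bounds on the denominators are exactly the details that reference supplies, so this is a fleshed-out version of the paper's proof rather than a different one.
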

The proofs use \eqref{Precurrence} and follow those given in \cite{Durhuus:2005fq} for the case $n=\infty$.

An important corollary, valid for any comb, of these lemmas is that
\bea \label{Qbounds} x^{-\quarter}\leq Q_C(x)\leq x^{-\half},\eea
which we will refer to as the trivial upper and lower bounds on $Q_C(x)$. The result follows from Lemma \ref{MonoLem2} with $n=\infty$, which gives
\bea
P_*(x) \leq P_C(x)\leq P_\infty(x),
\eea
 and the explicit expressions for  $P_*(x)$ and $P_\infty(x)$ given in \ref{StandardResults}.

\subsection{Two point functions}
Two point correlation functions on the comb correspond to the probability of a walk beginning at the root being at  a particular vertex on the spine at time $t$. In particular, let $p_C(t;n)$ denote the probability that a random walk that starts at the root at time zero is at the vertex $n$ on the spine at time $t$ having not visited the root in the intervening period. We will refer to the generating function for these probabilities as the two point function,  $G_C(x;n)$,  and define it by
\bea
G_C(x;n) = \sum^{\infty}_{t=1} (1-x)^{t/2} p_C(t;n).
\eea
$G_C(x;n)$ may be expressed as 
\bea
G_C(x;n)=\sigma(n)(1-x)^{-n/2}\prod_{k=0}^{n-1}P_{C_k}(x)
\eea
which may be used in conjunction with Lemma  \ref{MonoLem2} 
 to obtain the bounds,
\bea
\frac{G_*(x;n)}{3} \leq \frac{G_C(x;n)}{\sigma(n)}\leq \frac{G_\infty(x;n)}{2}.
\eea
Now let $r_C(t;n)$ denote the probability that a random walk that starts at the root at time zero is at the vertex $n$ on the spine for the first time at time $t$ having not visited the root in the intervening time. We define the modified two point function, $G^{0}_C(x;n)$, by, 
\bea
G^{0}_C(x;n) = \sum^{\infty}_{t=1} (1-x)^{t/2} r_C(t;n)
\eea
and note the following lemmas; 
\begin{lemma}
\label{P2bound}
The contribution $P_C^{(>N)}(x)$ to $P_C(x)$ from  walks whose maximal distance from the root is $N$ or greater satisfies 
\beq
P_C^{(>N-1)}(x) \leq 3x^{-1/2} G_C^0(x;N)^2.    
\eeq
\end{lemma}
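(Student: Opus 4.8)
The plan is to decompose every walk counted by $P_C^{(>N-1)}(x)$ according to its first and last visits to the vertex $N$ on the spine. Such a walk starts at the root, returns to the root for the first time, and in between reaches spine-distance at least $N$; since the spine is a path this forces it to visit $N$. Writing $t_1$ and $t_2$ for the times of the first and last visits to $N$, the Markov property splits the walk into three independent pieces: the segment up to $t_1$, which is a first passage from the root to $N$ that avoids the root and so is weighted by $G_C^0(x;N)$; the segment from $t_1$ to $t_2$, which is an arbitrary excursion from $N$ back to $N$ that never touches the root; and the segment from $t_2$ to the first return, which is a first passage from $N$ to the root that avoids $N$ thereafter. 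I would package the middle piece into a generating function $R(x)$ for returns to $N$ that avoid the root, so that $R(x)=(1-p_N(x))^{-1}$ in analogy with \eqref{QPrelation}, where $p_N$ is the corresponding first-return generating function.

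The second step is to recognise the last segment as a time-reversal of the first. For any nearest-neighbour path $v_0\to\cdots\to v_m$ on the comb the step probabilities give $\Pr[\text{path}]/\Pr[\text{reversed path}]=\sigma(v_m)/\sigma(v_0)$, and since the root has $\sigma(\mathrm{root})=1$ the reversal sends the $N$-to-root first passages bijectively and length-preservingly onto the root-to-$N$ first passages counted by $G_C^0(x;N)$, at the cost of a factor $\sigma(N)^{-1}$. Combining the three pieces then yields the exact identity
\beq
P_C^{(>N-1)}(x)=\frac{R(x)}{\sigma(N)}\,G_C^0(x;N)^2,
\eeq
so the whole lemma reduces to the bound $R(x)\le \sigma(N)\,x^{-\half}$.

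To bound $R(x)$ I would decompose a first return to $N$ by the branch it enters on its first step. The vertex $N$ has $\sigma(N)\in\{2,3\}$ branches — towards the root, away from the root, and (if present) into its tooth — and a first-return excursion stays in a single branch, giving $p_N(x)=\frac{\sqrt{1-x}}{\sigma(N)}\sum_b F_b(x)$, where $F_b$ is the first-passage generating function from the neighbour in branch $b$ back to $N$. For the branch running away from the root along the spine the extremal (largest) case is the bare half-line, giving $F_{\mathrm{away}}\le f$ with $f=(1-\sqrt{x})/\sqrt{1-x}$; for the finite root-side branch and for the tooth the extremal case is a single reflecting edge, giving $F\le\sqrt{1-x}$. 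Both extremal statements are monotonicity facts of exactly the type proved for $P^{(<n)}_C$ in Lemma \ref{MonoLem2}, since lengthening or adding a branch only delays the first passage and hence lowers its generating function. Using $\sqrt{1-x}\,f=1-\sqrt{x}$ one finds $p_N(x)\le 1-\sigma(N)^{-1}\sqrt{x}$ in both cases, whence $R(x)\le\sigma(N)\,x^{-\half}$ and therefore $P_C^{(>N-1)}(x)\le x^{-\half}G_C^0(x;N)^2\le 3x^{-\half}G_C^0(x;N)^2$; the constant $3$ in the statement is generous.

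The part I expect to be most delicate is not the algebra but the two monotonicity claims feeding the branch bound: one must check that diverting a walk into extra or longer teeth really is a stochastic delay of the first passage to $N$ and hence lowers $F_b$, and, crucially, that the gain of a factor $\sqrt{x}$ rather than $x$ comes specifically from the single infinite branch running off along the spine, which alone is controlled by $f$ rather than by $\sqrt{1-x}$. A secondary point to treat carefully is the time-reversal bookkeeping — in particular that walks avoiding the root map to walks avoiding the root, and that the degree factor is exactly $\sigma(N)^{-1}$ because the root has degree one.
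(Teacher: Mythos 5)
Your proof is correct and follows essentially the same route as the proof the thesis cites (Section 2.4 of \cite{Durhuus:2005fq}): decompose a contributing walk at its first and last visits to the spine vertex $N$, identify the outer pieces with $G_C^0(x;N)$ via time reversal (using $\sigma(\mathrm{root})=1$), and bound the generating function of root-avoiding excursions at $N$ by a renewal argument in which the infinite spine branch supplies $1-p_N(x)\geq \sqrt{x}/\sigma(N)$. Your careful tracking of the degree factors even yields the sharper constant $1$ in place of $3$, the latter corresponding to the cruder bounds $\sigma(N)\leq 3$ and $\sigma(\mathrm{root})/\sigma(N)\leq 1$.
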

The proof is given in section 2.4 of \cite{Durhuus:2005fq}.  
\begin{lemma}
\label{Mod2ptBounds}
The modified two point function satisfies 
\bea
G^{(0)}_{*}(x;n)\leq G^{(0)}_{C}(x;n) \leq G^{(0)}_{\infty}(x;n).
\eea
\end{lemma}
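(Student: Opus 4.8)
The plan is to bracket $G^0_C$ by establishing a factorised representation of the modified two point function over the spine sites, entirely parallel to the product formula $G_C(x;n)=\sigma(n)(1-x)^{-n/2}\prod_{k=0}^{n-1}P_{C_k}(x)$ quoted above, and then to compare the factorised form tooth-by-tooth using the monotonicity lemmas. Since $r_C(t;n)$ counts walks that reach the spine vertex $n$ for the \emph{first} time without ever returning to the root, I would treat the root as a killing boundary: any walk stepping from $1$ back to $0$ is simply discarded. First arrival at $n$ then factorises through the successive first-hitting times of the intermediate spine vertices $1,2,\ldots,n$. By the strong Markov property these passage times are independent, so
\beq
G^0_C(x;n) = \prod_{k=0}^{n-1} f_k(x),
\eeq
where $f_0(x)$ encodes the forced step $0\to 1$ and, for $1\le k\le n-1$, $f_k(x)$ is the generating function for the first passage from spine vertex $k$ to vertex $k+1$ of the root-killed walk.

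The key structural point is that each $f_k$ admits a continued-fraction recursion of exactly the type in \eqref{Precurrence}: at vertex $k$ the walk may excurse into the tooth $T_k$ (contributing $P_{T_k}(x)$), may step backwards to $k-1$ and return (which resolves into the lower passage factor $f_{k-1}$), or may step forward to $k+1$. Summing the geometric series of such excursions expresses $f_k$ as a decreasing function of both $P_{T_k}(x)$ and $f_{k-1}(x)$, with the degree $\sigma(k)$ entering only through the transition weights. I would then prove, by induction on $k$ and using Lemma \ref{MonoLem1} together with Lemma \ref{MonoLem2}, that each $f_k(x)$ is a monotonic decreasing function of the length $\ell_j$ of every tooth $T_j$ with $1\le j\le k$: the base case is the dependence of $f_k$ on its own tooth $T_k$, and the inductive step propagates monotonicity in $\ell_j$ upward through the dependence of $f_k$ on $f_{k-1}$.

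Granting the representation and the factorwise monotonicity, the lemma follows at once. The factor $f_0$ is independent of the teeth, while the full comb $*$ has every tooth of maximal (infinite) length and so minimises each $f_k(x)$ with $k\ge 1$, and the empty comb $\infty$ has every tooth of length $0$ and so maximises each such $f_k(x)$. Bracketing every factor of the product therefore brackets the product itself, giving $G^{(0)}_{*}(x;n)\le G^{(0)}_{C}(x;n)\le G^{(0)}_{\infty}(x;n)$ as required, in direct analogy with the bounds already obtained for $G_C(x;n)/\sigma(n)$.

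The main obstacle I anticipate is not the comparison step, which is a routine application of Lemma \ref{MonoLem2} once the product form is in hand, but rather the two ingredients underlying that product form: verifying that the successive spine passage times genuinely decouple into an independent product under the root-killing boundary, and showing that the recursion for $f_k$ transmits monotonicity in a tooth $T_j$ to every later factor $f_k$ with $k\ge j$. If one wishes to bypass re-deriving this representation, an alternative is the renewal identity $G^0_C(x;n)=G_C(x;n)\big(1-\rho_C(x;n)\big)$, where $\rho_C(x;n)$ is the first-return-to-$n$ generating function of the root-avoiding walk; combined with the established bounds on $G_C(x;n)$ this reduces the problem to bounding $\rho_C$, but at the cost of controlling an extra quantity and is therefore less direct than the factorised approach.
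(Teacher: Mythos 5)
Your overall strategy --- factorise the modified two point function over the spine and bracket each factor using tooth-length monotonicity --- is sound, and it is essentially the time reversal of the paper's argument. The paper simply quotes the product formula
\beq
G^{(0)}_{C}(x;n)=\frac{(1-x)^{-(n-2)/2}}{\sigma(n-1)}\prod_{k=0}^{n-2}P^{(<n-k)}_{C_k}(x),
\eeq
whose factors are restricted first-return functions of the outer sub-combs $C_k$ (the factor labelled $k$ sees only the teeth $T_{k+1},\ldots,T_{n-1}$), and then applies Lemma \ref{MonoLem2} to every factor at once: infinite teeth minimise each factor, absent teeth maximise it. Your factors $f_k$ (root-killed first passage from $k$ to $k+1$) instead see the teeth $T_1,\ldots,T_k$; the path decomposition through successive first hits of $1,2,\ldots,n$ is a genuine bijection, so the product representation is legitimate. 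The price of your route is that the monotonicity of the $f_k$ must be established from scratch by a recursion and an induction, whereas the paper's factors are precisely the objects that Lemma \ref{MonoLem2} already covers, so no new argument is needed there.

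The genuine problem is a sign error at the hinge of your induction. Summing the geometric series of excursions at a vertex $k\ge 2$ gives
\beq
f_k(x)=\frac{(1-x)^{1/2}}{\sigma(k)-P_{T_k}(x)-(1-x)^{1/2}\,f_{k-1}(x)}
\eeq
(for $k=1$ the backward term is absent altogether, since a step from $1$ to the root kills the walk rather than generating an excursion), so $f_k$ is an \emph{increasing} function of both $P_{T_k}$ and $f_{k-1}$ --- the same direction as in Lemma \ref{MonoLem1} --- not a decreasing one as you claim. As written, your induction cannot close: if $f_k$ were decreasing in $f_{k-1}$, the sign of the dependence of $f_k$ on a fixed tooth length $\ell_j$ would alternate with $k-j$, contradicting your assertion that every $f_k$ is decreasing in every $\ell_j$ with $j\le k$. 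With the corrected direction the proof goes through exactly as you intend: $P_{T_k}$ decreases with the tooth length, interpolating between $P_\emptyset=1$ and $P_\infty=1-\sqrt{x}$ of \eqref{Pinf}, so by induction each $f_k$ is decreasing in every $\ell_j$, $j\le k$; hence the full comb minimises and the empty comb maximises every factor, and the product yields $G^{(0)}_{*}(x;n)\le G^{(0)}_{C}(x;n)\le G^{(0)}_{\infty}(x;n)$.
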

To prove this note that 
\bea G^{(0)}_{C}(x;n)=\frac{(1-x)^{-(n-2)/2}}{\sigma(n-1)}\prod_{k=0}^{n-2}P^{(<n-k)}_{C_k}(x)
\eea 
and use Lemma  \ref{MonoLem2} .

\subsection{Spectral dimension and the continuum limit}
\label{dsdiscussion}
As was discussed in the previous chapter, we may characterise the dimension of spaces which are not manifolds by use of the spectral dimension. For spaces obtained as the continuum limit of a graph or ensemble of graphs the probability that a random walk returns to its initial position generalises the heart kernel. However, given that it is easier to work with the generating function for the return probability, $Q(x)$, it is more convenient to define the spectral dimension in terms of the behaviour of $Q(x)$ implied by \eqref{tdependence},
\beq Q_C(x)\sim x^{-1+d_s/2},\label{QCdsdef},\eeq
where by $f(x)\sim g(x)$ we mean that
\beq c g(x)\le f(x)\le c' g(x),\quad 0<x<x_0,\eeq
where $c$, $c'$ and $x_0$ are positive constants. The property \eqref{QCdsdef} was adopted in \cite{Durhuus:2005fq} as the definition of spectral dimension, assuming it exists. The spectral dimension of an ensemble average is defined in the same way, simply replacing $p_C(t)$ and $Q_C(x)$ by their respective expectation values.

Our present goal is to extend the definition of the spectral dimension to give a mathematical meaning to the notion of a scale dependent spectral dimension as required to describe the phenomenon of dimensional reduction. In order for the spectral dimension to be scale dependent there must exist a length scale $L$ in the system besides the cutoff. Furthermore, in order for this length scale to survive in the continuum, it must be scaled as the continuum limit is taken. We will therefore introduce a second parameter into the tooth-length probability distribution $\mu(\ell)$, which defines a length scale for some structure of the comb. One would hope that in more realistic models such a length scale could be generated dynamically.

We now give a precise meaning to the term continuum limit. We assign the value $a$ to the distance between adjacent vertices in the graph and take the limit $a\to 0$ and $L\to\infty$ in such a way that the scaled combs have a finite characteristic distance scale; it is this limit to which we give the name continuum limit and quantities which exist in this limit we call continuum quantities. Walks much longer than $L$ will probe different structure from walks much shorter than $L$ but nonetheless both can be very long in units of the underlying cut-off scale $a$.

In the following sections we will denote dependence of a function on a number of variables $L_i$ \footnote{In the following, the parameters $L_i$ will becomes length scales related to structures in the comb.},  $i=1,\ldots,N$, by $L_i$ passed as one of the function arguments. Given a random comb ensemble specified by $\mu(\ell; L_i)$ and the corresponding $\bar{Q}(x; L_i)$  we define the continuum limit of $\bar{Q}$ by,  
\beq\label{Qlimit}
\tilde{Q}(\xi; \lambda_i) = \lim_{a\rightarrow 0} a^{\Delta_\mu} \bar{Q}(a \xi;a^{-\Delta_i}\lambda_i^{\Delta_i}),
\eeq
where the scaling dimensions $\Delta_\mu$ and $\Delta_i$ are chosen to ensure a non-trivial limit and the combinations $\xi\lambda_i$ are dimensionless. As we shall see, the choice for $\Delta_\mu$ and $\Delta_i$ is unique given mild assumptions. The function $\tilde Q$ can be used to define the spectral dimension at short and long distances.

The variable $\xi$ of the generating function plays the role of a fugacity for the walk length. Indeed it is the Laplace conjugate variable to walk length, this is completely analogous to the way in  which the variable for the resolvent in Chapter \ref{Backgroundnew} plays the role of the continuum boundary cosmological constant and is related to the boundary length by a Laplace transform. We therefore expect $\xi \rightarrow \infty$ to correspond to very short continuum walks and $\xi \rightarrow 0$ to correspond to long continuum walks. Let us for a moment suppose that the long and short distance behaviour of $\tilde{Q}(\xi; \lambda)$ is that of a power law and further that there exists only one length scale $\lambda$. If a diffusion experiment were performed in which the diffusion time was much greater than $\lambda \gg 1$ then an experimenter would see the power law behaviour associated with the $\xi \rightarrow \infty$ limit of $\tilde{Q}(\xi; \lambda)$. Conversely, if the walk length was much less then $\lambda$ then an experimenter would observe the power law behaviour of the $\xi \rightarrow 0$ limit of $\tilde{Q}(\xi; \lambda)$. If the exponent of the power law differs in these two limits then the experimenter would conclude the spectral dimension differed on short and long distance scales.

We now give an explicit definition for the long and short distance spectral dimension and show that the above long and short distance limits only contain contributions from long and short walks respectively. We will assume for simplicity that there is just one scale $L$ and that the spectral dimension in the sense of \eqref{tdependence} exists for $\avg{p_C(t)}$ which implies that there exists a constant  $t_0$ such that $\avg{p_C(t+1)} <\avg{p_C(t)}, t>t_0$.
Note that 
\bea\label{short} 
&&\fl{\sum_{t=0}^T \avg{p_C(t)} (1-x)^{t/2}= \Qbar(x;L)-\sum_{t=T+1}^\infty \avg{p_C(t)}(1-x)^{t/2}}\nn\\
&&\qquad\qquad \fl{=\Qbar(x;L)-(1-x)^{(T+1)/2}\sum_{t=0}^\infty \avg{p_C(t+T+1)}(1-x)^{t/2}}\nn\\
&&\qquad\qquad \fl{>\Qbar(x;L)-(1-x)^{(T+1)/2}\sum_{t=0}^\infty \avg{p_C(t)}(1-x)^{t/2} \nn \\
&&\qquad\qquad\quad  + (1-x)^{(T+1)/2}\sum^{t_0}_{t=0} (p_C(t)-p_C(t+T+1))(1-x)^{t/2}.}
\eea
Since the final sum is finite it will have an analytic dependence on $a$. Now choose   
\beq T=\left\lfloor a^{-1}\frac{1}{\xi \log(1+\frac{1}{\xi\lambda})}\right\rfloor-1\eeq
and set  $x=a\xi $ and $L= a^{-\Delta}\lambda^\Delta$ in \eqref{short} to get
\bea
&&\fl{\quad a^{\Delta_\mu} \Qbar(a\xi ;a^{-\Delta}\lambda^\Delta )\left(1-\exp\left(-\xi\lambda\right)\right)+ a^{\Delta_\mu} R \exp\left(-\xi\lambda\right) \nn \\
&&\qquad< a^{\Delta_\mu}\sum_{t=0}^T \avg{p_C(t)} (1-\xi a)^{t/2} \nn \\
&&\qquad<a^\Delta_\mu\Qbar(a\xi ;a^{-\Delta}\lambda^\Delta ),}\eea
where $R$ is a constant arising form finite sum in \eqref{short}. Provided that the limit in \eqref{Qlimit} exists we see that the behaviour of $\tilde Q(\xi;\lambda)$ as $\xi\to\infty$ characterizes the properties of walks of continuum time duration less than 
\beq \lim_{\xi\to\infty}\frac{1}{\xi \log(1+\frac{1}{\xi\lambda})}=\lambda,\eeq
  and we define the  spectral dimension $d^0_s$ at short distances  by
\beq d^0_s= 2\left(1+\lim_{\xi\to\infty} \frac{\log(\tilde Q(\xi;\lambda))}{\log \xi}\right),\label{dsShort}\eeq
provided this limit exists. 

We can define the spectral dimension at long distances in a similar way. First  note that 
by \eqref{Qbounds}
\bea\fl{  \sqrt{T}\ge\sum_{t=0}^\infty \avg{p_C(t)} \left(1-\Tinv\right)^{t/2}>\sum_{t=0}^T \avg{p_C(t)} \left(1-\Tinv\right)^{t/2} >\left(1-\Tinv\right)^T\sum_{t=0}^T \avg{p_C(t)}} \eea
so that 
\bea
\Qbar(x;L) -\sqrt{T}(1-\Tinv)^{-T}<\ \sum_{t=T+1}^\infty \avg{p_C(t)}(1-x)^{t/2}< \Qbar(x;L).\eea
This time letting $T=\lfloor a^{-1}\xi^{-1} \log(1+\xi\lambda)\rfloor-1$ we get
\bea\fl{ a^{\Delta_\mu}\Qbar(a\xi;a^{-\Delta}\lambda^\Delta) -e\sqrt{\xi^{-1} \log(1+\xi\lambda)}<  a^{\Delta_\mu}\sum_{t=T+1}^\infty \avg{p_C(t)}(1-\xi a)^{t/2}< a^{\Delta_\mu}\Qbar(a\xi; a^{-\Delta}\lambda^\Delta)}.\nn\\
\eea
If the above limit exists and furthermore $\tilde Q(\xi;\lambda)$ diverges as $\xi \rightarrow 0$ then we see that in this limit $\tilde Q(\xi;\lambda)$ only receives contributions from walks whose length exceeds $\lim_{\xi\to0} \xi^{-1} \log(1+\xi\lambda)=\lambda$. We define the spectral dimension $d_s^\infty$ at long distances to be
\beq d^\infty_s= 2\left(1+\lim_{\xi\to 0} \frac{\log(\tilde Q(\xi;\lambda))}{\log \xi}\right),\label{dsLong}\eeq
provided this limit exists.

At this point it is unclear whether there exists any examples for which the short and long distance spectral dimensions as in \eqref{dsShort} and \eqref{dsLong} exist and furthermore differ in value. We will now discuss some examples based on random combs in which we realise such examples. In all the examples given in this chapter it turns out that the exponent is $\Delta_\mu=\half$.


\section{A simple comb}
\label{SimpleExample}
We now introduce a random comb whose spectral dimension differs on long and short length scales and thus illustrates that the behaviour described in section \ref{dsdiscussion} can actually occur. 
This comb is defined by the measure,
\bea \label{easycomb}
\mu(\ell;L)&=&\begin{cases}1-\frac{1}{L},& \text{$\ell=0,$}\\ \frac{1}{L},& \text{$\ell=\infty,$}\\0,&\text{otherwise.}\end{cases}
\eea
This random comb  has infinite teeth and they occur with an average separation of $L$. Intuitively we would expect that if a random walker did not move further than a distance of order $L$ from its starting position it would not see the teeth and therefore would measure a spectral dimension of one. If however it were allowed to explore the entire comb it would see something roughly equivalent to a full comb and so feel a much larger spectral dimension. 
To prove this intuition correct we proceed by computing upper and lower bounds for $\bar{Q}$ which are uniform in $L$ and for $0<x<x_0$, where the constant $x_0$ is equal to one unless otherwise stated, and then take the continuum limit to obtain bounds for $\tilde{Q}$ . 


With complete generality we may obtain a lower bound on $\bar{Q}(x)$ by use of Jensen's inequality which takes the form,
\begin{lemma}
\label{GenLBLem}
Let $\bar{P}_T(x;L_i)$ be the mean first return probability generating function of the teeth of the comb defined by $\mu(\ell; L_i)$, then 
\bea
\label{Qeqn}
\bar{Q}(x;L_i) \geq (1+x-\bar{P}_T(x;L_i))^{-\half}.
\eea
\end{lemma}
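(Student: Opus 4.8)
The plan is to realise $\Qbar(x;L_i)=\langle Q_C(x)\rangle$ as the expectation of a \emph{convex} functional of the individual tooth return-generating functions and then apply Jensen's inequality, exactly as the surrounding text anticipates. Using \eqref{QPrelation} to write $Q_C=(1-P_C)^{-1}$ and iterating the recurrence \eqref{Precurrence} to $n\to\infty$, one has
\beq
P_C(x)=\frac{1-x}{3-P_{T_1}(x)-P_{C_1}(x)},
\eeq
so that $Q_C$ is a deterministic function $\Phi$ of the random, independent and identically distributed collection $\{P_{T_k}(x)\}_{k\ge1}$. The first and main step is to show that $\Phi$ is jointly convex in these variables on the relevant domain.

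I would establish this by induction on the truncations $P^{(<n)}_C$, which depend only on the first $n-1$ teeth. Assuming $P^{(<n-1)}_{C_1}$ is jointly convex, the denominator $3-P_{T_1}-P^{(<n-1)}_{C_1}$ is jointly concave, and since $y\mapsto(1-x)/y$ is convex and decreasing for $y>0$, the composition $P^{(<n)}_C=(1-x)/(3-P_{T_1}-P^{(<n-1)}_{C_1})$ is jointly convex; the base case $P^{(<1)}_C=(1-x)/(3-P_{T_1})$ is immediate. Crucially, the bounds $P_\ast\le P_C\le P_\infty<1$ coming from Lemma \ref{MonoLem2} (cf. \eqref{Qbounds}) together with $P_{T_k}\le1$ keep the denominator $\ge1$, so we never leave the region where these operations preserve convexity. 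Passing to the limit $n\to\infty$ preserves convexity, giving joint convexity of $P_C$, and then $Q_C=(1-P_C)^{-1}$ is convex as the composition of the convex increasing map $z\mapsto(1-z)^{-1}$ (on $z<1$) with the convex $P_C$.

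With convexity in hand, Jensen's inequality applied to the random vector $\{P_{T_k}\}$ gives $\Qbar\ge\Phi(\langle P_{T_1}\rangle,\langle P_{T_2}\rangle,\ldots)$, and since the teeth are i.i.d.\ we have $\langle P_{T_k}\rangle=\bar P_T$ for every $k$. The right-hand side is therefore $Q_{\mathrm{hom}}:=\Phi(\bar P_T,\bar P_T,\ldots)$, the return generating function of the homogeneous comb in which every tooth carries the mean return function $\bar P_T$. By translation invariance of that comb, $P_{C}=P_{C_1}$, so $Q_{\mathrm{hom}}=(1-P)^{-1}$ with $P=(1-x)/(3-\bar P_T-P)$; eliminating $P$ shows $Q_{\mathrm{hom}}$ is the positive root of
\beq
(1+x-\bar P_T)\,Q^2-(1-\bar P_T)\,Q-1=0.
\eeq

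Finally I would read off the stated bound from this quadratic. Since each $P_{T_k}\le1$ we have $1-\bar P_T\ge0$, and $1+x-\bar P_T\ge x>0$, so the relation gives $(1+x-\bar P_T)Q_{\mathrm{hom}}^2=(1-\bar P_T)Q_{\mathrm{hom}}+1\ge1$, whence $Q_{\mathrm{hom}}\ge(1+x-\bar P_T)^{-\half}$ and therefore $\Qbar(x;L_i)\ge(1+x-\bar P_T(x;L_i))^{-\half}$, as claimed. I expect the convexity step to be the only real obstacle: everything downstream is a one-line consequence of Jensen and the homogeneous fixed-point equation, whereas verifying joint convexity for the infinite comb requires the inductive passage through the truncated recurrence and the positivity of the denominators to justify staying in the convex regime.
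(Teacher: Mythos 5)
Your proof is correct and follows essentially the same route as the paper's (which simply defers to \cite{Durhuus:2005fq}): Jensen's inequality applied to the i.i.d.\ tooth return functions, reduction to the homogeneous comb whose every tooth carries $\bar{P}_T$, and the fixed-point quadratic $(1+x-\bar{P}_T)Q^2-(1-\bar{P}_T)Q-1=0$ from which the bound follows since $1-\bar{P}_T\geq 0$ and $Q>0$. The only refinement worth making is to apply finite-dimensional Jensen to the truncations $P^{(<n)}_C$ (convex functions of finitely many teeth, exactly as in your induction) and then pass to the limit $n\to\infty$ by monotone convergence, rather than invoking Jensen directly for a function of infinitely many variables.
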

The proof is given in \cite{Durhuus:2005fq}. For the  comb \eqref{easycomb} we have
\bea
\bar{P}_T(x;L) = 1 - \frac{1}{L}(1-P_{\infty}(x)) = 1 - \frac{\sqrt{x}}{L} 
\eea
 which implies
\bea
\bar{Q}(x;L) \geq \left(\frac{\sqrt{x}}{L}+x\right)^{-\half}.
\eea
Letting $x = a \xi$ and $L = a^{-\half}\lambda^{\half}$ gives

\bea
\label{SimpleLB}
\tilde{Q}(\xi; \lambda)= \lim_{a\to 0} a^\half \Qbar(a \xi; a^{-\half}\lambda^{\half})\geq
\xi^{-\half}\left(\upsilon^{-\half} + 1\right)^{-\half},
\eea
where we have introduced the dimensionless variable $\upsilon = \xi \lambda$.


To find an upper bound on  $\bar{Q}(x;L)$ we follow \cite{Durhuus:2005fq} and use Lemmas \ref{MonoLem1},  \ref{MonoLem2} and  \ref{RearrangeLem1} to compare a typical comb in the ensemble with the comb consisting of a finite number of infinite teeth at regular intervals.  
First we define the event
\bea
\C A(D,k) = \{C:D_i \leq D:i=0,...,k\}.
\eea
where $D_i$ is the distance between the $i$ and $i+1$ teeth and then
 write,
\bea
\label{Qintegral}
\bar{Q}(x;L) &=& \int_{\C{C}} Q_C(x;L) d\nu \nn\\
 &=& \int_{\C{C}/\C{A}(D,k)} Q_C(x;L) d\nu + \int_{\C{A}(D,k)} Q_C(x;L) d\nu.
\eea
Since the $D_i$ are independently distributed
\beq \nu(\C A(D,k))=(1-(1-1/L)^D)^k.\eeq

Consider a comb $C \in \C A(D,k)$; then by Lemmas \ref{MonoLem1},  \ref{MonoLem2} and  \ref{RearrangeLem1}, 
\beq P_C(x;L) \leq P_{C'}(x),\eeq
where $C'$ is the comb obtained by removing all teeth beyond the $k$ tooth and moving the remaining teeth so that the spacing between each is $D$. Now we can write 
\beq P_{C'}(x) = P^{(<Dk)}_{C'}(x) + P^{(>Dk-1)}_{C'}(x).\eeq Since the walks contributing to $P^{(<Dk)}_{C'}(x)$ do not go beyond the last tooth we have \beq P^{(<Dk)}_{C'}(x) \leq P_{*D}(x),\label{bd1}\eeq where $*D$ denotes the comb consisting of infinite teeth regularly spaced and separated by a distance $D$. Using \eqref{bd1}, Lemmas \ref{P2bound} and \ref{Mod2ptBounds} we have,
\beq\label{PCupper1}
P_C(x;L) \leq P_{{*D}}(x) +  3x^{-\half} G^{(0)}_\infty(x; Dk)^2
\eeq
uniformly in $\C A$. $P_{*D}(x)$ and $G^{(0)}_\infty(x; n)$ are given in Appendix A. 
Now set $D = \lfloor\tilde{D}\rfloor$ and $k=\lceil\tilde{k}\rceil$, where,
\beq 
\tilde{D} = 2 L|\log x L^2|, \qquad \tilde{k}= (xL^2)^{-1/2}.
\eeq
Since $G^{(0)}_\infty(x; n)$ is manifestly a monotonic decreasing function of $n$  and $ P_{*D}(x)  $ an increasing function of $D$,
\beq\label{bd100}
\fl{\bar{Q}(x;L) \leq x^{-1/2} (1-(1-(1-1/L)^{\tilde{D}-1})^{\tilde{k}+1}) + Q_{U}(x)(1-(1-1/L)^{\tilde{D}})^{\tilde{k}} }
\eeq
where we have used \eqref{Qbounds} and
\beq Q_U(x) = \left[1- P_{*\tilde{D}}(x) - 3x^{-\half} G^{(0)}_\infty(x; (\tilde{D}-1)\tilde{k})^2\right]^{-1}.
\eeq
Taking the continuum limit of \eqref{bd100} and using the results of \ref{StandardResults} then gives
\beq\label{SimpleUB} \tilde{Q}(\xi;\lambda) \leq \xi^{-1/2} F(\xi\lambda),\eeq
where
\bea F(v)=\begin{cases}1+o(v^{-1}), &\text{$v\to \infty$},\\v^\quarter\sqrt{\abs{\log v^2}}+o(v^\half), &\text{$v\to 0$}.\end{cases}\eea
It follows from \eqref{dsShort}, \eqref{dsLong}, \eqref{SimpleLB} and  \eqref{SimpleUB}
that
\bea d_s^0=1,\qquad
d_s^\infty=\threehalves.\eea

\section{Combs with Power Law Measures}
\label{PowerComb}
We now consider slightly more general combs in which the measure on the teeth is a power law of the form,
\bea \label{powerdist}
\mu(\ell;L)&=&\begin{cases}1-\frac{1}{L},&\text{$\ell=0,$}\\ \frac{1}{L}C_{\alpha} \ell ^{-\alpha},& \text{$\ell>0,$}\end{cases}
\eea
where $C_\alpha$ is a normalisation constant and as before $L$ plays the role of a distance scale.  We consider laws in the range $2>\alpha>1$ as it is known that for $\alpha\ge 2$ the comb has spectral dimension $d_s=1$ in the sense of \eqref{QCdsdef} \cite{Durhuus:2005fq} and therefore it is not possible to get a spectral dimension deviating from 1 on any  scale.

To compute a lower bound on the return probability generating function for the above distribution we apply Lemma \ref{GenLBLem} and reduce the problem to computing an upper bound on $1-\bar{P}_T(x)$.
The first return generating function $P_\ell(x)$ for a tooth of length $\ell$  is recorded in \eqref{Pell}; bounding  $\tanh(u)$ above by the function $f(u) = u$ for $u<1$ and $f(u) = 1$ for $u\geq 1$ gives\footnote{In this particular case we could in fact compute $1-\bar{P}_T(x)$ exactly by the Abel summation formula. However the bound we use is good enough to give the desired result with the advantage that the calculation can be done with elementary functions.}
\bea
\label{LB1}
1-\bar{P}_T(x;L_i) &\leq&  \sqrt{x} \left[ m_{\infty} (x)\sum^{[m_\infty^{-1}]}_{\ell=1}  \mu(\ell;L_i) \ell + \sum^{\infty}_{\ell=[m_\infty^{-1}]+1} \mu(\ell;L_i)\right] \nn \\
&\leq& \sqrt{x} -m_\infty(x)\sqrt{x} \int^{\frac{1}{m_\infty(x)}}_0\left( \sum^{[u]}_{\ell=0} \mu(\ell;L_i)\right) du.
\eea
To obtain the second inequality we have applied the Abel summation formula. We therefore have,
\begin{lemma}
\label{GenLBLem2}
For a random comb defined by the measure $\mu$,
\beq
1-\bar{P}_T(x;L_i) \leq \sqrt{x} -m_\infty(x)\sqrt{x} \int^{\frac{1}{m_\infty(x)}}_0 \chi(u;L_i) du
\eeq
where the cumulative probability function $\chi(u;L_i)$ is defined by $\chi(u;L_i)  = \sum^{[u]}_{\ell=0} \mu(\ell;L_i)$.
\end{lemma}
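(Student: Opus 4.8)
The plan is to average the single-tooth first-return generating functions over the law $\mu$ and then pass from the resulting sum to the stated integral by summation by parts. Because the teeth are independent and identically distributed, the mean first-return generating function is just the $\mu$-average of the single-tooth functions, so I would start from
\beq
1-\bar{P}_T(x;L_i)=\sum_{\ell\geq 0}\mu(\ell;L_i)\bigl(1-P_\ell(x)\bigr)
\eeq
and substitute the explicit single-tooth expression \eqref{Pell}, which may be written $1-P_\ell(x)=\sqrt{x}\,\tanh\bigl(\ell\,m_\infty(x)\bigr)$; this has the correct limits, vanishing for the empty tooth ($\ell=0$) and tending to $1-P_\infty(x)=\sqrt{x}$ as $\ell\to\infty$, which is what anchors the identification of $m_\infty(x)$.

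The first genuine inequality comes from replacing $\tanh$ by the elementary majorant $f$ with $f(u)=u$ for $u<1$ and $f(u)=1$ for $u\ge 1$, which dominates $\tanh$ everywhere. Splitting the $\ell$-sum at $\ell=[m_\infty(x)^{-1}]$, the short teeth ($\ell\,m_\infty(x)<1$) contribute the truncated first moment $m_\infty(x)\sum_{\ell=1}^{[m_\infty^{-1}]}\mu(\ell;L_i)\,\ell$ and the long teeth contribute the tail $\sum_{\ell>[m_\infty^{-1}]}\mu(\ell;L_i)$, which is exactly the first line of \eqref{LB1}.

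The second step is the Abel summation. Writing $\mu(\ell)=\chi(\ell;L_i)-\chi(\ell-1;L_i)$ in the truncated first moment and summing by parts yields $\sum_{\ell=1}^{N}\ell\,\mu(\ell)=N\chi(N)-\sum_{\ell=0}^{N-1}\chi(\ell)$ with $N=[m_\infty^{-1}]$, while the tail becomes $\sum_{\ell>N}\mu(\ell)=1-\chi(N)$ using that $\mu$ is a probability measure. Since $\chi$ is the piecewise-constant cumulative function, $\int_0^{1/m_\infty}\chi(u;L_i)\,du$ equals $\sum_{\ell=0}^{N-1}\chi(\ell)$ plus the partial cell $\chi(N)\bigl(m_\infty^{-1}-N\bigr)$; I expect these pieces to combine so that the cumulative sums cancel and precisely $\sqrt{x}-m_\infty(x)\sqrt{x}\int_0^{1/m_\infty(x)}\chi(u;L_i)\,du$ remains.

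The main obstacle I anticipate is bookkeeping rather than conceptual: keeping the floor $[m_\infty(x)^{-1}]$ consistent between the summation limits and the upper limit $m_\infty(x)^{-1}$ of the integral, and verifying that the leftover partial-cell boundary term either cancels exactly or carries the sign that preserves the inequality in \eqref{LB1}. A secondary point to pin down is the exact form of \eqref{Pell} needed to license the $\tanh$ majorization; once that is fixed the argument uses nothing about $\mu$ beyond normalisation, so the bound is automatically uniform in the scale parameters $L_i$, which is exactly what the subsequent continuum-limit estimates require.
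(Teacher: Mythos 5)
Your proposal is correct and follows essentially the same route as the paper: average the single-tooth generating functions $1-P_\ell(x)=\sqrt{x}\tanh(\ell\,m_\infty(x))$ over $\mu$, majorize $\tanh$ by the piecewise-linear function $f$, and convert the resulting truncated first moment plus tail into the integral of $\chi$ via Abel summation. The bookkeeping you flag as the main risk in fact closes exactly: with $N=[m_\infty^{-1}]$ one finds $\sum_{\ell=1}^{N}\ell\,\mu(\ell)+\int_0^{1/m_\infty}\chi(u)\,du = m_\infty^{-1}\chi(N)$, so the second step of \eqref{LB1} holds as an identity and the inequality is preserved.
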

We will see shortly that all behaviour of the spectral dimension of the continuum comb is encoded in the asymptotic expansion of ${\chi}(u;L_i)$ as $u$ goes to infinity \footnote{In general it is not obvious that this asymptotic expansion exists due to the discontinuous nature of $\chi$. We will address this issue later when we consider generic measures.}.
In the present case ${\chi}(u;L)$ is trivially related to the partial sum of the Riemann $\zeta$-function whose leading asymptotic behaviour is well known and we find 
\bea
{\chi}(u;L) &=&  1 - \frac{C_\alpha}{L}\frac{u^{1-\alpha}}{\alpha-1} +\delta(u),
\eea
where 
\bea \abs{\delta(u)}<\frac{c}{L}u^{-\alpha}, \quad u\ge 2.\eea
It follows that for $x<x_0$, where $m_\infty(x_0)=\half$,
\bea 1-\bar{P}_T(x;L) \leq  m_\infty(x)\sqrt{x}\left(\frac{b_1}{L}m_\infty(x)^{\alpha-2}+\frac{b_2}{L}m_\infty(x)^{\alpha-1}+\frac{b_3}{L}\right),\eea
with $b_{1,2,3}$ being  constants depending only on $\alpha$ and $b_1>0$.
Choosing  $L= a^{-\Delta'} \lambda^{\Delta'}$ with $\Delta' = 1-\alpha/2$  yields a lower bound on the continuum return generating function,
\beq
\label{PowerLB}
\tilde{Q}(\xi,\lambda)
\geq \xi ^{-1/2}\left(1+b_1 (\xi\lambda)^{-(1-\alpha/2)}\right)^{-1/2}. \eeq


To obtain a comparable upper bound 
we need 
\begin{lemma}
\label{GenUBLem}
For any random comb and positive integers $H$, $D$ and $k$, the return probability generating function is bounded above by
\bea
\label{GeneralUB}
\bar{Q}(x;L_i) \leq x^{-1/2} (1-(1-(1-p)^{{D}})^{{k}}) + Q_{U}(x)(1-(1-p)^{{D}})^{{k}} ,
\eea
where
\bea p &=& \sum^{\infty}_{\ell = H+1} \mu(\ell;L_i),  \nn\\ 
\label{GeneralQU}
Q_U(x) &=& \left[1- P_{{H},*{D}}(x) - 3x^{-\half} G^{(0)}_\infty(x; {D}{k})^2\right]^{-1},
\eea
and $P_{H, *D}$ is the first return probability generating function for the comb with teeth of length $H+1$ equally spaced at intervals of $D$.
\end{lemma}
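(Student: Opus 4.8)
The plan is to generalize the argument used for the simple comb in Section \ref{SimpleExample} almost verbatim, replacing ``infinite tooth / empty tooth'' by ``long tooth (length $>H$) / short tooth (length $\le H$)''. I would call a tooth \emph{long} if its length exceeds $H$, so that a given tooth is long with probability $p=\sum_{\ell=H+1}^\infty\mu(\ell;L_i)$, and reuse the event $\C A(D,k)$, now interpreted as the set of combs in which the first $k$ gaps $D_i$ between consecutive long teeth all satisfy $D_i\le D$. Independence of the tooth lengths gives $\nu(\C A(D,k))=(1-(1-p)^D)^k$ exactly as in the simple case, since $\nu(D_i>D)=(1-p)^D$. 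I would then split
\beq
\bar Q(x;L_i)=\int_{\C C\setminus\C A(D,k)}Q_C(x)\,d\nu+\int_{\C A(D,k)}Q_C(x)\,d\nu
\eeq
and bound the first integrand by the trivial bound $Q_C\le x^{-1/2}$ from \eqref{Qbounds}.

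The heart of the proof is a uniform upper bound on $Q_C$ for $C\in\C A(D,k)$. Using \eqref{QPrelation} it suffices to bound $P_C$ above, which I split as $P_C=P^{(<Dk)}_C+P^{(>Dk-1)}_C$. For the tail $P^{(>Dk-1)}_C$ I invoke Lemma \ref{P2bound} together with Lemma \ref{Mod2ptBounds} to obtain the comb-independent bound $P^{(>Dk-1)}_C\le 3x^{-1/2}G^{(0)}_\infty(x;Dk)^2$. For the restricted part $P^{(<Dk)}_C$ I apply the monotonicity and rearrangement Lemmas \ref{MonoLem1}, \ref{MonoLem2} and \ref{RearrangeLem1}: by Lemma \ref{MonoLem2} I may shorten every long tooth to length exactly $H+1$ and delete every short tooth without decreasing $P^{(<Dk)}_C$; by Lemma \ref{RearrangeLem1} I may then push the remaining teeth outward until they sit at the regular positions $D,2D,\dots,kD$ (each such elementary move swaps a length-$(H+1)$ tooth with the empty tooth nearer the root, for which $P_\emptyset=1$ is maximal, so $P^{(<Dk)}$ only increases). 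This yields $P^{(<Dk)}_C\le P^{(<Dk)}_{H,*D}\le P_{H,*D}(x)$, and hence $P_C\le P_{H,*D}(x)+3x^{-1/2}G^{(0)}_\infty(x;Dk)^2$ uniformly on $\C A(D,k)$, so that $Q_C\le Q_U(x)$ with $Q_U$ as in \eqref{GeneralQU}.

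Combining the two integrals with $\nu(\C A(D,k))=(1-(1-p)^D)^k$ then gives \eqref{GeneralUB}. The step I expect to require the most care is the rearrangement argument: I must check that ``spreading teeth apart'' genuinely increases $P^{(<Dk)}$, i.e. that each swap of a length-$(H+1)$ tooth with an adjacent empty tooth farther from the root increases the restricted generating function. This is precisely the content of Lemma \ref{RearrangeLem1} once one notes $P_{T}<1=P_\emptyset$ for the length-$(H+1)$ tooth, but the bookkeeping, namely that after shortening and deleting there remain at least $k$ teeth at spacing $\le D$ inside distance $Dk$ so that they can be spread to the regular positions $D,2D,\dots,kD$, relies on the defining property of $\C A(D,k)$. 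I would also record the mild proviso that the bound is only useful where the denominator in $Q_U$ stays positive, i.e. for $H$, $D$, $k$ and $x$ chosen so that $1-P_{H,*D}(x)-3x^{-1/2}G^{(0)}_\infty(x;Dk)^2>0$.
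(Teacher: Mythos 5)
Your proof is correct and follows essentially the same route as the paper: the same event $\C{A}(D,k)$ with $\nu(\C{A}(D,k))=(1-(1-p)^D)^k$, the trivial bound $x^{-1/2}$ on the complement, and Lemmas \ref{MonoLem2}, \ref{RearrangeLem1}, \ref{P2bound} and \ref{Mod2ptBounds} to bound $P_C$ uniformly on $\C{A}(D,k)$ by $P_{H,*D}(x)+3x^{-1/2}G^{(0)}_\infty(x;Dk)^2$. The only differences---performing the split $P_C=P^{(<Dk)}_C+P^{(>Dk-1)}_C$ before rather than after the tooth-shortening and rearrangement step, and your explicit caveat that the bound is vacuous unless the denominator of $Q_U$ is positive---are inessential.
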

The proof is a slight modification of the upper bound argument used in Section \ref{SimpleExample}.
First define a long tooth to be one whose length is greater than $H$; then the probability that a tooth at a particular vertex is long is
\beq p = \sum^{\infty}_{\ell = H+1} \mu(\ell;L_i).  \eeq
Define the event
\bea
\label{Aevent}
\C A(D,k) = \{C:D_i \leq D:i=0,...,k\}
\eea
where now $D_i$ is the distance between the $i$ and $i+1$ long teeth so that
\bea
\bar{Q}(x;L_i) &=& \int_{\C{C}} Q_C(x;L_i) d\nu \nn\\
&=& \int_{\C{C}/\C{A}(D,k)} Q_C(x;L_i) d\nu + \int_{\C{A}(D,k)} Q_C(x;L_i) d\nu.
\eea
Since the $D_i$ are independently distributed
\beq\label{nuD} \nu(\C A(D,k))=(1-(1-p)^D)^k.\eeq
Now use Lemmas \ref{MonoLem2} and \ref{RearrangeLem1} in turn  to note that for 
\beq P_{C\in \C A(D,k)}(x,L) \leq P_{C'}(x,L)\eeq
where $C'$ is the comb in which all teeth but the first $k$ long teeth have been removed and the remaining long teeth have been arranged so that they have length $H+1$ and a constant inter-tooth distance  $D$. By the same arguments as we used in Section \ref{SimpleExample} to get \eqref{PCupper1} we obtain the bound 
\beq
\label{PCpexp}
P_{C\in \C A(D,k)}(x,L) \leq P_{{H+1}, *{D}} +3 x^{-1/2} G^{(0)}_\infty (x,Dk)^2.
\eeq
Lemma \ref{GenUBLem} then follows from \eqref{Qbounds},  \eqref{nuD} and \eqref{PCpexp}. We now specialise to the power law measure \eqref{powerdist} and set $H=\lfloor\tilde{H}\rfloor$, $D = \lfloor\tilde{D}\rfloor$ and $k=\lceil\tilde{k}\rceil$, where
\bea
\label{LambdaK}
\tilde{H} &=& x^{-1/2} \nn\\
\tilde{D} &=& (\Delta'+1) \frac{\alpha-1}{c_\alpha} x^{\Delta'-1/2} L |\log x L^{1/\Delta'}| \\
\tilde{k} &=& (x L^{1/\Delta'})^{-\Delta'}.\nn 
\eea

Using Lemma \ref{GenUBLem}, the scaling expressions for $P_{H, *D}$ and $G_\infty^{(0)}$ given in \eqref{Pelln}, and taking the continuum limit, gives, after a substantial amount of algebra,
\beq\label{PowerUB1} \tilde{Q}(\xi;\lambda) \leq \xi^{-1/2} F(\xi\lambda),\eeq
where
\bea F(v)=\begin{cases}1+O(v^{-1}), &\text{$v\to \infty$},\\c\, v^{1/2-\alpha/4}  \sqrt{\abs{\log v^2}}+O(v^{\Delta'}), &\text{$v\to 0$}.\end{cases}\eea
The main result of this section is  
\begin{theorem}
The comb with the power law measure \eqref{powerdist} for the tooth length has
 \bea d_s^0=1,\qquad
d_s^\infty=2-\frac{\alpha}{2}.\eea
\end{theorem}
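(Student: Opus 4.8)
The plan is to read off the two spectral dimensions directly from the continuum bounds \eqref{PowerLB} and \eqref{PowerUB1} that have already been established, by feeding them into the definitions \eqref{dsShort} and \eqref{dsLong}. Since those bounds squeeze $\tilde{Q}(\xi;\lambda)$ between elementary expressions that share the same leading power of $\xi$ in each of the limits $\xi\to\infty$ and $\xi\to 0$, no further combinatorial input is needed; the remaining work is purely the asymptotic analysis of the two sides. The heavy lifting is already packaged in Lemmas \ref{GenLBLem}, \ref{GenLBLem2} and \ref{GenUBLem}.

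First I would settle the short-distance dimension. As $\xi\to\infty$ (with $\lambda$ fixed) the combination $\xi\lambda$ also diverges, and because $1<\alpha<2$ the exponent $1-\alpha/2$ is strictly positive, so $(\xi\lambda)^{-(1-\alpha/2)}\to 0$. The lower bound \eqref{PowerLB} then reads $\tilde{Q}\ge\xi^{-\half}(1+o(1))$, while the upper bound \eqref{PowerUB1} with $F(v)=1+O(v^{-1})$ gives $\tilde{Q}\le\xi^{-\half}(1+o(1))$. Taking logarithms, dividing by $\log\xi$, and letting $\xi\to\infty$, both sides pin down $\lim_{\xi\to\infty}\log\tilde{Q}/\log\xi=-\half$, whence \eqref{dsShort} yields $d_s^0=2(1-\half)=1$.

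Next I would treat the long-distance dimension, now with $\xi\to 0$ and hence $v=\xi\lambda\to 0$. In the lower bound the exponent $-(1-\alpha/2)$ is negative, so the second term dominates inside the bracket and $\tilde{Q}\ge\xi^{-\half}b_1^{-\half}(\xi\lambda)^{\half-\alpha/4}(1+o(1))$, whose leading power is $\xi^{-\alpha/4}$. In the upper bound I must first confirm which of the two terms of $F(v)$ controls the $v\to 0$ behaviour: since $1-\alpha/2>\half-\alpha/4$ precisely when $\alpha<2$, the term $v^{\half-\alpha/4}\sqrt{\abs{\log v^2}}$ dominates $v^{\Delta'}=v^{1-\alpha/2}$, giving $\tilde{Q}\le\xi^{-\alpha/4}\sqrt{\abs{\log(\xi\lambda)^2}}(1+o(1))$ up to constants. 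The logarithmic factor is the one point requiring attention: writing $\log\sqrt{\abs{\log(\xi\lambda)^2}}\sim\half\log\log(1/\xi)$, this contribution divided by $\log\xi$ vanishes as $\xi\to 0$, so it leaves the exponent untouched. Both bounds therefore give $\lim_{\xi\to 0}\log\tilde{Q}/\log\xi=-\alpha/4$, and \eqref{dsLong} delivers $d_s^\infty=2(1-\alpha/4)=2-\frac{\alpha}{2}$.

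The only genuine subtlety, and thus the step I would guard most carefully, is the bookkeeping of signs when dividing by $\log\xi$ in the long-distance limit: as $\xi\to 0^+$ one has $\log\xi\to-\infty$, so all inequalities reverse, and I must check that the constant prefactors $b_1,c$ and the iterated-logarithm factor contribute only terms of the form $(\text{const})/\log\xi$ or $\log\log(1/\xi)/\log\xi$, each tending to zero. This is what upgrades the two one-sided estimates into the exact value $-\alpha/4$ of the limit. Everything else is a routine squeeze of the bounds \eqref{PowerLB} and \eqref{PowerUB1}.
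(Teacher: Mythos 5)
Your proposal is correct and follows exactly the paper's own route: the paper's proof consists precisely of the remark that the theorem ``follows immediately from \eqref{dsShort}, \eqref{dsLong}, \eqref{PowerLB} and \eqref{PowerUB1},'' and your argument simply fills in the asymptotic bookkeeping (including the correct identification of the dominant term of $F(v)$ as $v\to 0$ and the vanishing of the $\log\log$ correction after dividing by $\log\xi$) that the paper leaves implicit.
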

The result follows immediately from \eqref{dsShort}, \eqref{dsLong}, \eqref{PowerLB} and  \eqref{PowerUB1}.

\section{Multiple Scales}
\label{multiscale}
Given the results for the power law distribution it is natural to investigate the behaviour for a  random comb that has a hierarchy of length scales. The easiest way to achieve such a comb is through a double power law distribution,
\bea \label{2powerdist}
\mu(\ell;L_i)&=&\begin{cases}1-L_1^{-1}-L_2^{-1},&\text{$\ell=0,$}\\ \frac{1}{L_1}C_{1} l ^{-\alpha_1}+\frac{1}{L_2} C_{2} l ^{-\alpha_2} ,& \text{$\ell>0.$}\end{cases}
\eea
We may assume without loss of generality that the length scales $L_i$ scale in the continuum limit to lengths $\lambda_i$ such that $\lambda_1< \lambda_{2}$ and that $1<\alpha_i<2$.  

Following the procedure of previous sections a lower bound on $\tilde Q(\xi;\lambda_i)$ is obtained  by using Lemma \ref{GenLBLem}  and noting that $\chi(x;L_i)$ for this comb is essentially the sum of the cumulative probability functions for each power law. This gives
\bea 
1-\bar{P}_T(x;L_i) \leq  m_\infty(x)\sqrt{x}\sum_{i=1}^2\left(\frac{b_{1i}}{L_i}m_\infty(x)^{\alpha_i-2}+\frac{b_{2i}}{L_i}m_\infty(x)^{\alpha_i-1}+\frac{b_{3i}}{L_i}\right).\nn \\\eea
Choosing $L_i$ to scale like $L_i=a^{-\Delta'_i} \lambda_i^{\Delta'_i}$, where $\Delta'_i = 1-\alpha_i/2$, gives a bound on the continuum return generating function of,
\bea
\label{DoubleLB}
{\xi^{-1/2}\left(c_0 +c_1 (\xi \lambda_1)^{-(1-\alpha_1/2)}+c_2( \xi\lambda_2)^{-(1-\alpha_2/2)} \right)^{-1/2} \leq \tilde{Q}(\xi)}.
\eea
An upper-bound on $\tilde{Q}(\xi)$ is obtained by application of Lemma \ref{GenUBLem} in which we set $H=\lfloor\tilde{H}\rfloor$, $D = \lfloor\tilde{D}\rfloor$ and $k=\lceil\tilde{k}\rceil$, where
\bea
\label{DoubleLambdaK}
\tilde{H} &=& x^{-1/2} \nn\\
\tilde{D} &=& \beta x^{-1/2} G(x L_1^{1/\Delta'_1},x L_2^{1/\Delta'_2})^{-1} |\log x L_1^{1/\Delta'_1}| \\
\tilde{k} &=& G(x L_1^{1/\Delta'_1},x L_2^{1/\Delta'_2}) \nn 
\eea
and for convenience we have introduced the function,
\beq
G(\upsilon_1,\upsilon_2) = \frac{C_1}{\alpha_1-1} \upsilon_1^{-\Delta'_1}+\frac{C_2}{\alpha_2-1} \upsilon_2^{-\Delta'_2} .
\eeq
Using  Lemma \ref{GenUBLem} and the scaling expressions in \ref{StandardResults} then gives
\bea
\label{DoubleUB}
&&\tilde{Q}(x;\lambda_i) \leq \xi^{-1/2}\Bigg[1-(1-\upsilon_1^{-s\beta})^G + \nn \\
&&\frac{ (1-\upsilon_1^{-s\beta})^{G} }{3\mathrm{cosech}^2(|\log\upsilon_1^\beta|)-\gamma+\sqrt{\gamma^2+1+2\gamma \coth(|\log\upsilon_1^\beta|/G)}} \Bigg]
\eea
where $\upsilon_i=\xi\lambda_i$,  $s=\sgn(\log\upsilon_1)$, $\gamma=\tanh(1)$ and  we have suppressed the arguments of $G(\upsilon_1,\upsilon_2)$ in order to maintain readability. 

We can now examine  \eqref{DoubleLB} and \eqref{DoubleUB} to see what they tell us about the behaviour of $\tilde{Q}(\xi;\lambda_i)$ on various length scales. 
\begin{itemize}
\item When $\xi\gg\lambda_1^{-1}$ both upper and lower bounds of $\tilde Q(\xi;\lambda_i)$ are dominated by the $\xi^{-\half}$ behaviour so taking the $\xi\to\infty$ limit leads to $d_s^0=1$ as in the previous sections.

\item If $\alpha_1 <  \alpha_2$ then when $\xi\ll\lambda_1^{-1}$ both upper and lower bounds of $\tilde Q(\xi;\lambda_i)$ are dominated by the $\xi^{-\alpha_1/4}$ behaviour so taking the $\xi\to 0$ limit leads to $d_s^\infty=2-\alpha_1/2 $. There is no regime in which $\alpha_2$ controls the behaviour.

\item If $\alpha_2 <  \alpha_1$ then when $\xi\ll\Lambda^{-1}$ where
\beq \Lambda^{-1}=\lambda_1^{(2-\alpha_1)/(\alpha_1-\alpha_2)}\lambda_2^{(2-\alpha_2)/(\alpha_2-\alpha_1)}\eeq
both upper and lower bounds of  $\tilde Q(\xi;\lambda_i)$ are dominated by the $\xi^{-\alpha_2/4}$ behaviour so taking the $\xi\to 0$ limit leads to $d_s^\infty=2-\alpha_2/2 $. However there is an intermediate regime $\Lambda^{-1}\ll\xi\ll\lambda_1^{-1}$ where the $\xi^{-\alpha_1/4}$ behaviour dominates and $\tilde Q(\xi;\lambda_i)$ lies in the envelope given by
\beq c_1 \xi^{-\alpha_1/4} <\tilde Q(\xi;\lambda_i)< c_2\xi^{-\alpha_1/4}\sqrt{\abs{\log\xi^\beta}}, \eeq
where the upper and lower bounds will have corrections suppressed by powers of $\xi \lambda_2$ and the upper bound will also have corrections of order $\xi^\beta$. Both $\lambda_2$ and $\beta$ may be chosen to make the corrections arbitrarily small in this scale range. The system therefore appears to have spectral dimension $\delta_S=2-\alpha_1/2$ in this regime. This is a fairly weak statement because $Q(\xi)$ could in principle exhibit a wide variety of behaviours between its upper and lower bounds; this region is just a part of the crossover regime from $d_S^0$ to $d_S^\infty$. However, as we are free to chose $\lambda_2$ to be as large as we like compared to $\lambda_1$, this regime can exist over a scale range of arbitrarily large size. We therefore can force the leading behaviour of $\tilde Q(\xi;\lambda_i)$ in this range to be as close to a power law with exponent $\delta_S=2-\alpha_1/2$ as we like. This is what might be observed, for example, in a numerical simulation; if the difference between the scales $\lambda_1$ and $\lambda_2$ is large then there will be a substantial range of walk lengths in which the data will indicate a spectral dimension of $\delta_S$. We will refer to a spectral dimension that appears in this weaker way as an {\emph{apparent spectral dimension}} and denote it by $\delta_S$ rather than $d_S$.

\end{itemize}

\section{Generic Distributions}
\label{genericcomb}
So far we have considered combs in which the distribution of tooth lengths has been governed by power laws or double power laws. In this section we  extend the results of the previous sections to the case were the form of the tooth length distribution is left arbitrary. The most general situation is that the measure on the combs is a continuous function of some parameters $w_i$. The continuum limit of such a comb is obtained in the usual way but with parameters $w_i$ scaling in a non-trivial way; $w_i = w_{c_i} + a^{d_i} \omega_i$. Given a random comb with such a measure we would like to know how many distinct continuum limits exist and for each compute how the spectral dimension depends on the length scale.

The approach we adopt here closely mimics the arguments of the preceding sections, indeed the main complication is technical. As we have seen the properties of the continuum comb are controlled by the asymptotic expansion of $\chi(u)$ as $u$ goes to infinity. The main difference in the generic case is that we may arrange matters so that the scaling dimensions of the coefficients in the asymptotic expansion are such that sub-leading terms appear in the continuum. For the generic case we obviously have no way of knowing the full asymptotic expansion of $\chi(u; w_i)$. However, we will see that for a large class of measures, the form of the asymptotic expansion is encoded in the asymptotic expansion of a particular generating function for $\mu(\ell; w_i)$.

Our first task is to introduce this generating function and relate it to the asymptotic expansion of $\chi(u; w_i)$. To this end we introduce the notion of a smoothed sum \cite{tao},
\bea
\fl{{\chi_\pm}(u;{w_i}) = \sum^\infty_{\ell=0} \mu(\ell;{w_i}) \eta_\pm(\ell/u)=\mu(0;w_i)+ \sum^\infty_{\ell=1} \mu(\ell;{w_i}) \eta_\pm(\ell/u)} \equiv\mu(0) +\chi_\pm^{(1)}(u;{w_i}), \nn \\
\eea
where $\eta_\pm$ is the smooth cut-off function introduced in \ref{AppendixBump} and $u$ controls where the cut-off occurs. Such smoothed sums are related to ${\chi}(u; w_i)$ by,
\bea
\label{chibound}
{\chi_-}(u;{w_i}) \leq \chi(u;{w_i}) \leq {\chi_+}(u;{w_i}). 
\eea
The reason for introducing the smoothed sums is that we may use powerful techniques from complex analysis to compute their asymptotic expansion (see eg  \cite{Flajolet}). The generating function to which the asymptotic expansion of the smoothed sums is related is  the Dirichlet series generating function of $\mu(\ell;{w_i})$,
\beq
\C{D}_\mu(s;{w_i}) = \sum^{\infty}_{\ell=1} \frac{\mu(\ell;{w_i})}{\ell^s}.
\eeq

\begin{figure}[t]
  \begin{center}
    \includegraphics[width=5cm]{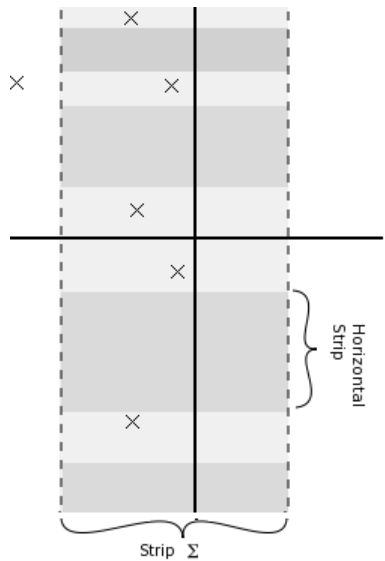}
    \caption{An illustration of the strip $\Sigma$ in which $\C{D}_\mu(s;{w_i})$ satisfies property (1) given given in the text. The poles of $\C{D}_\mu(s;{w_i})$ are denoted by crosses and the horizontal strips in which the growth condition holds are indicated by the dark
grey regions.}
    \label{fig2}
  \end{center}
\end{figure}

We now introduce a number of results and notations,
\begin{itemize}
\item[1.] For a strip in the complex plane $\Sigma(b,a) \equiv \{z: a< \mathrm{Re}[z] < b\}$ where $b>a$, we say $\C{D}_\mu(s;{w_i})$ has {\emph{slow growth}} in $\Sigma(b,a)$ if for all $s \in \Sigma$ we have $\C{D}_\mu(s;{w_i}) \sim O(s^{r})$ for some $r>0$ as $\mathrm{Im}[s] \rightarrow \pm \infty$. We say $\C{D}_\mu(s;{w_i})$ has {\emph {weak slow growth}} if the above property only holds for a countable number of horizontal regions across the strip. See figure \ref{fig2}.
\item[2.] Define $S_\pm$ to be the set containing the triples $(-\sigma+i \tau,-k,-r_\pm)$ such that the Laurent expansion of $\C{D}_\mu(s;{w_i}) \C{M}[\Psi_{\pm\epsilon}](s+1)/s$ about the point $-\sigma+i \tau$ contains the term $r_\pm/(s +\sigma-i \tau)^k$ where $k>0$ and $-\sigma+i \tau \neq0$. Here $\C{M}$ denotes the Mellin transformation and $\Psi_{\pm\epsilon}$ is introduced in \ref{AppendixBump}. We will often find it useful to refer to the elements of $S_\pm$ using an index $j$ and denote the $j$th element of $S_\pm$ by $(-\sigma_j+i \tau_j,-k_j,-r_{j,\pm})$. Note that since $\C{M}[\Psi_{\pm\epsilon}](s+1)$ is analytic in $\Sigma$ then the positions of the poles are determined only by $\C{D}_\mu(s;{w_i})$.

\item[3.] Define the indexing sets $J_R$ and $J_C$, such that if $j \in J_R$ then $\tau_j = 0$ whereas if $j \in J_C$ then $\tau_j \neq 0$ and for both $-\sigma_j+i \tau_j\in \Sigma$ i.e. they index the poles in $\Sigma$ which lie on the real line and off the real line respectively. 
\item[4.] For a Dirichlet series with positive coefficients the abscissa of absolute and conditional convergence coincide. Furthermore, since $\sum^\infty_{l=0} \mu(\ell;{w_i}) = 1$ the abscissa of convergence is less than zero.
\item[5.] Landau's theorem: For a Dirichlet series with positive coefficients there exists a pole at the abscissa of convergence. A corollary of this is that the coefficient of the most singular term in the Laurent expansion about the abscissa of convergence is positive and furthermore it is the right-most pole of the Dirichlet series in the complex plane.
\end{itemize}

If $\C{D}_\mu$ is of slow growth in $\Sigma(a,b)$ and no pole occurs to the right of $\Sigma(a,b)$, then by using the expression \eqref{genasymp} the asymptotic expansion of $\chi_\pm(u;{w_i})$ is, 
\bea
\label{ChiAsymp}
&&{\chi_\pm(u;{w_i}) = \mu(0;{w_i})+\frac{1}{2 \pi i}\oint_C u^{s} \C{D}_\mu(s;{w_i}) \frac{M[\Psi_{\pm\epsilon}](s+1)}{s} ds -R(u;{w_i})} \nn \\
&&= 1 - \sum_{j\in J_R}r_{j,\pm}({w_i})\frac{(\log u)^{k_j-1} }{(k_j-1)!}u^{-\sigma_j}-\\
&&\half \sum_{j\in J_C}\left(r_{j,\pm}({w_i}) u^{i \tau_j}+{r_{j,\pm}}^*({w_i}) u^{-i \tau_j} \right)\frac{(\log u)^{k_j-1}}{(k_j-1)!} u^{-\sigma_j}-R(u;{w_i}), \nn
\eea
where $C$ is the rectangular contour introduced in \ref{AppendixAsymp}\footnote{Since we have assumed there are no poles to the right of $\Sigma$ we may choose $c = b$, where $c$ is the constant appearing in the definition of the contour.}, $r_{j,\pm}$ are the coefficients of the Laurent expansions of $\C{D}_\mu(s;{w_i}) \C{M}[\Psi_{\pm\epsilon}](s+1)/s$, and $R(u)$ is a remainder function which satisfies $R(u) \sim u^{-N}$, where $N>\sigma_j$, as $u$ goes to infinity. Note that the difference between the coefficients $r_{j,+}$ and $r_{j,-}$ is of order $\epsilon$, the parameter introduced in \ref{AppendixBump}, which can be taken to be arbitrarily small and so $r_{j,+}$ and $r_{j,-}$ are for all purposes equal. 

We now are in a position to prove the main result of this section. Note that in the following we require the continuum comb is such that $\Delta_\mu= 1/2$ and so has spectral dimension one on the smallest scales. It seems very likely that this restriction could be lifted
but we will not  pursue such a generalisation here.

\begin{theorem}
\label{mainresult}
For a comb in which the teeth are distributed according to the measure $\mu$ then if $\C{D}_\mu$ has slow growth in the strip $\Sigma = \{z:-1 < Re[z]  \leq 0\}$ and has no poles on its line of convergence besides at the abscissa then continuum limits of the comb exist with $d_s^0=1$ and $d_s^\infty$ taking values in the set $\{\frac{3-\sigma_j}{2}:0<\sigma_j< 1, j\in J_R\}$.\end{theorem}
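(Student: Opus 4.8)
The plan is to run the same two–sided bounding argument used for the power law measure in Section~\ref{PowerComb}, exploiting the fact that the entire computation there depended on the measure only through the leading large-$u$ behaviour of the cumulative function $\chi(u;w_i)$. Under the correspondence $\alpha\leftrightarrow\sigma_j+1$ between the power-law exponent and a real pole of $\C{D}_\mu$ the power-law answer $d_s^\infty=2-\alpha/2$ becomes $(3-\sigma_j)/2$, so all that is genuinely new is (i) extracting the asymptotics of $\chi$ from the pole data of $\C{D}_\mu$, which is exactly the content of \eqref{ChiAsymp}, and (ii) choosing, for each target real pole $\sigma_{j_0}\in(0,1)$, a continuum limit in which that pole controls the leading correction to $\chi$. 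First I would fix $j_0\in J_R$ and choose the scaling dimensions of the parameters $w_i$ so that the coefficient $r_{j_0}(w_i)$ survives at the rate $r_{j_0}\sim a^{(1-\sigma_{j_0})/2}$ while the contributions of every pole lying to the right of $-\sigma_{j_0}$ scale to zero. Landau's theorem together with the hypothesis that $\C{D}_\mu$ has no poles on its line of convergence other than the abscissa guarantees that the surviving dominant term is then a clean, positive power $u^{-\sigma_{j_0}}$, up to the logarithmic factor $(\log u)^{k_{j_0}-1}$, with no oscillatory competitor of equal order coming from $J_C$.

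For the lower bound I would apply Lemma~\ref{GenLBLem} and Lemma~\ref{GenLBLem2} as in the power-law case, inserting \eqref{ChiAsymp} into $\int_0^{1/m_\infty(x)}\chi(u;w_i)\,du$. The constant term reproduces the $\sqrt{x}$ piece which cancels, while the $\sigma_{j_0}$ term contributes to $1-\bar{P}_T(x;w_i)$ a piece of order $r_{j_0}(w_i)\,x^{(\sigma_{j_0}+1)/2}$; slow growth in $\Sigma$ ensures that the remainder $R(u)$ and the $J_C$ terms are subleading. With $x=a\xi$ and the chosen scalings, $x+(1-\bar{P}_T)$ factorises as $a\xi\bigl(1+b\,(\xi\lambda)^{-(1-\sigma_{j_0})/2}+\cdots\bigr)$, so that $a\to 0$ yields
\beq
\tilde{Q}(\xi;\lambda)\geq \xi^{-1/2}\left(1+b\,(\xi\lambda)^{-(1-\sigma_{j_0})/2}\right)^{-1/2},
\eeq
the exact analogue of \eqref{PowerLB}, whose $\xi\to\infty$ and $\xi\to 0$ limits already exhibit the exponents $-1/2$ and $-1/2+(1-\sigma_{j_0})/4$.

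For the matching upper bound I would invoke Lemma~\ref{GenUBLem} with $H=\lfloor\tilde{H}\rfloor$, $D=\lfloor\tilde{D}\rfloor$, $k=\lceil\tilde{k}\rceil$ chosen as in \eqref{LambdaK} but with $\alpha$ replaced by $\sigma_{j_0}+1$ and the cumulative probability replaced by its \eqref{ChiAsymp} asymptotics, so that the long-tooth probability $p$ and the spacing $D$ track the correct scale. Feeding the scaling forms of $P_{H,*D}$ and $G^{(0)}_\infty$ from \ref{StandardResults} into \eqref{GeneralUB} and taking the continuum limit gives, after the same algebra as in Section~\ref{PowerComb},
\beq
\tilde{Q}(\xi;\lambda)\leq \xi^{-1/2}F(\xi\lambda),\qquad
F(v)=\begin{cases}1+O(v^{-1}),&v\to\infty,\\ c\,v^{(1-\sigma_{j_0})/4}\sqrt{\abs{\log v^2}}+O(v^{(1-\sigma_{j_0})/2}),&v\to 0.\end{cases}
\eeq
Since the upper and lower bounds share their leading $\xi\to\infty$ and $\xi\to 0$ behaviour, \eqref{dsShort} and \eqref{dsLong} pin $d_s^0=1$ and $d_s^\infty=(3-\sigma_{j_0})/2$; letting $j_0$ range over $J_R$ with $0<\sigma_{j_0}<1$ produces the whole set.

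The hard part will be step (ii): controlling \eqref{ChiAsymp} for a genuinely arbitrary measure. This means justifying the smoothed-sum sandwich \eqref{chibound}, closing the Mellin contour under the slow-growth assumption so that only the poles inside $\Sigma$ contribute, and bounding both $R(u)$ and the oscillatory $J_C$ terms so that they remain subdominant to $u^{-\sigma_{j_0}}$ uniformly on $0<x<x_0$ and in the scale parameters. The most delicate point is arranging the multi-parameter scaling so that precisely the chosen pole dominates: poles to the right of $-\sigma_{j_0}$ must be tuned away, and it is here that Landau's theorem, through the positivity and rightmost-ness of the abscissa coefficient, together with the assumed absence of other poles on the line of convergence, are indispensable, since they prevent a hidden more singular or oscillatory contribution from overriding the target power law.
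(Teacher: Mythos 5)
Your proposal follows the same skeleton as the paper's proof of Theorem \ref{mainresult} --- Lemmas \ref{GenLBLem} and \ref{GenLBLem2} combined with \eqref{ChiAsymp} for the lower bound, Lemma \ref{GenUBLem} with power-law-style choices of $H$, $D$, $k$ for the upper bound, and the Section \ref{PowerComb} algebra with $\alpha$ replaced by $\sigma_{j_0}+1$ --- but it has a genuine gap in the treatment of the complex poles indexed by $J_C$. You assert that Landau's theorem, together with the hypothesis that $\C{D}_\mu$ has no poles on its line of convergence besides the abscissa, guarantees that the surviving dominant term is a clean positive power with ``no oscillatory competitor of equal order''. This is false: those hypotheses constrain only the line of convergence, whereas complex poles may sit strictly inside $\Sigma$, with real part $-\sigma_c$ satisfying $\gamma_0 < \sigma_c < \sigma_{j_0}$ (here $-\gamma_0$ is the abscissa), i.e.\ to the \emph{right} of your target pole whenever the target is not the abscissa pole itself. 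Their contributions to $1-\chi_\pm(u;w_i)$ are of order $u^{-\sigma_c}(\log u)^{k_c-1}$ times an oscillating factor, which decays \emph{more slowly} than the target $u^{-\sigma_{j_0}}$. Slow growth in the strip only justifies closing the Mellin contour and bounding the remainder $R(u)$; it says nothing about the relative size of the pole contributions. So nothing in your argument prevents such an oscillatory term from dominating, or even surviving in, the continuum limit you are constructing.

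The paper closes exactly this hole with a positivity argument that your proposal is missing: if the coefficient of any oscillatory term is given a scaling of the form \eqref{rscaling} so that the term survives the continuum limit, then one can choose a scale window in which that term dominates $1-\bar{P}_T$; since it oscillates about zero mean it forces $1-\bar{P}_T$ (hence $\bar{P}_T$) to violate the constraints of a probability generating function as $a\to 0$, so any such scaling is unphysical and the $J_C$ coefficients \emph{must} be scaled away. This is what legitimises working with the restricted index set $\tilde{J}_R$ only. You need either to reproduce this argument or to verify explicitly that your chosen scaling of the $w_i$ can set all $J_C$ (and all unwanted $J_R$) coefficients to vanish in the continuum compatibly with $\mu(\ell;w_i)\geq 0$ and $\sum_\ell \mu(\ell;w_i)=1$; invoking Landau's theorem does not do it. A secondary point: by scaling in one pole at a time you obtain only the existence of a continuum limit realising each value $(3-\sigma_{j_0})/2$, whereas the paper's proof also treats limits in which several real poles survive simultaneously, showing that then $d_s^\infty=(3-\sigma_{\hat{j}})/2$ with $\sigma_{\hat{j}}$ the smallest surviving exponent; this is what makes the advertised set an exhaustive description of the possible long-distance behaviours and yields the intermediate apparent spectral dimensions discussed after the theorem.
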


\begin{proof}
The proof proceeds in much the same way as the proofs in the previous sections; we first derive upper and lower bounds on $\tilde{Q}_C$ and then use these bounds to deduce the behaviour of $\tilde{Q}_C$ in different scale ranges.

We begin with the lower bound which by Jensen's inequality amounts to finding an upper bound on $1-\bar{P}_T$. Note that \eqref{Qeqn} implies that the scaling dimension of $1-\bar{P}_T$ must be greater than or equal to one in order for $\Delta_\mu = 1/2$ and only contributes to the continuum limit if it has scaling dimension one. From Lemma \ref{GenLBLem2} we have,
\bea
\label{LB2}
&&\fl{1-\bar{P}_T(x;{w_i}) \leq  m_\infty(x)\sqrt{x} \int^{\frac{1}{m_\infty(x)}}_0 (1-\chi_-(u;{w_i})) du} \nn \\
&&\qquad\qquad\fl{\leq \sqrt{x}m_\infty(x)C(K;{w_i}) + m_\infty(x)\sqrt{x} \int^{\frac{1}{m_\infty(x)}}_K (1-\chi_-(u;{w_i})) du},
\eea
where $C(K) = \int^K_0 (1-\chi_-(u;{w_i})) du$ is a constant independent of $x$. It is important to recall that since $0\leq \chi_\pm(u;{w_i}) \leq 1$ for all values of $u$ and $w_1,\ldots,w_M$, in particular when $w_1,\ldots,w_M$ assume their critical values, that the scaling dimensions of the coefficients $r_{j,\pm}$ appearing in \eqref{ChiAsymp} must be positive as otherwise $\chi_\pm(u;{w_i})$ would diverge if we were to set $w_1,\ldots,w_M$ to their critical values. Upon performing the integration in \eqref{LB2} we will find that a given $r_{j,\pm}$ now is the coefficient for a number of terms of increasing scaling dimension. Since we require the scaling dimension of $1-\bar{P}_T$ to be greater than or equal to one, only the term with smallest scaling dimension can appear in the continuum limit and we will drop any term that does not appear in the continuum limit.
 If we now substitute \eqref{ChiAsymp} into \eqref{LB2} we find
\bea
\label{PTLB}
&&\fl{1-\bar{P}_T(x;{w_i}) \leq  \sqrt{x}m_\infty(x)C(K;{w_i}) + \sqrt{x}\sum_{j\in J_R}\frac{r_{j,-}({w_i})}{(k_j-1)!}\frac{(-\log m_\infty)^{k_j-1} }{1-\sigma_j}m_\infty^{\sigma_j}+}\nn \\
&&\frac{\sqrt{x}}{2} \sum_{j\in J_C}\left(\frac{r_{j,-}({w_i}) m_\infty^{-i \tau_j}}{1-\sigma_j+i \tau_j}+\frac{{r_{j,-}({w_i})}^*m_\infty^{i \tau_j}}{1-\sigma_j-i\tau_j}  \right)\frac{(-\log m_\infty)^{k_j-1}}{(k_j-1)!}  m_\infty^{\sigma_j},
\eea
where the remainder term $R$ disappears since it cannot appear in the scaling limit. We now suppose we may choose the critical values and scaling dimensions of the parameters $w_i$ such that $r_{j,\pm}$ has a scaling form that can be written as,
\beq
\label{rscaling}
r_{j,\pm} = \left(\frac{a}{\lambda_{j,\pm}}\right)^{\theta_j} \left(-\frac{1}{2}\log a\right)^{{\hat\theta}_j},
\eeq
where $\theta_j$, ${\hat \theta}_j$ and $\lambda_{j,\pm}$ are constants and we have included the factor of $\frac{1}{2}$ for later convenience. From \eqref{PTLB} one can see that since we require $\Delta_\mu = 1/2$ the only terms which appear in the continuum limit are those for which $\theta_j = (1-\sigma_j)/{2}$ and $\hat{\theta}_j = -(k_j-1)$ and so it is useful to define a restricted indexing set
\bea
\tilde{J}_R = \{j \in J_R : \theta_j = (1-\sigma_j)/{2}\quad \mathrm{and} \quad \hat{\theta}_j = -(k_j-1)\} 
\eea
and an equivalent one $\tilde{J}_C$ for $J_C$. The continuum limit is thus,
\bea
&&\fl{1-\bar{P}_T(x;w_i) \leq  a C \xi+\sum_{j\in \tilde{J}_R}\frac{a}{(k_j-1)!(1-\sigma_j)}\xi \left(\xi\lambda_{j,-}\right)^{-\frac{1-\sigma_j}{2}}+} \\
&&\qquad \qquad \fl{\sum_{j\in \tilde{J}_C}\frac{a}{(k_j-1)!}\Bigg(\mathrm{Re}[\phi_j]\cos(\frac{\tau_j}{2} \log (a\xi))+\mathrm{Im}[\phi_j]\sin(\frac{\tau_j}{2} \log (a\xi)) \Bigg)\xi \left(\xi|\lambda_{j,-}|\right)^{-\frac{1-\sigma_j}{2}}}, \nn
\eea
where $\phi_j = (|\lambda_{j,-}|/\lambda_{j,-})^{(1-\sigma_j)/2}(1-\sigma_j+i \tau_j)^{-1}$. The lower bound on $\tilde{Q}_C$ is then obtained from Lemma \ref{GenLBLem}.

Before deriving an upper bound on $\tilde{Q}_C$, we must analyse the consequences of any of the oscillatory terms in the second sum appearing in the continuum limit (i.e. if $\tilde{J}_C$ is non-empty). In the case of the double power law measure we saw that one could obtain intermediate behaviour in which an effective spectral dimension was measured that differed from the UV and IR spectral dimensions. A similar phenomenon occurs in the generic case when the various length scales of the continuum limit are well separated. If we scale the coefficients of the oscillatory terms such that these terms appear in the continuum limit then we are lead to an inconsistency by the following argument, 
\begin{itemize}
\item[1.] Let the term associated with an oscillatory term have index $j_0$. Consider choosing the scaling form of $x$ and $r_{j,-}$ such that $\xi |\lambda_{j_0,-}| \ll 1$ but $\xi |\lambda_{j,-}| \gg 1$ if $|\lambda_{j,-}| \gg |\lambda_{j_0,-}|$.
\item[2.] If we were to take the scaling limit in such a scenario then the term that dominates the size of $1-\bar{P}_T$ is the one associated with the length scale $|\lambda_{j_0,-}|$. This term will be oscillating around a mean of zero and hence must be going negative infinitely often as we approach the continuum.
\item[3.] Since $\bar{P}_T$ is a probability generating function it cannot be negative, hence showing we have an unphysical limit.
\end{itemize}
The cause of this behaviour is that we declared by fiat that the parameters $w_i$ must be scaled 
to give \eqref{rscaling}; however the parameters must also satisfy the constraints $\mu(\ell;{w_i}) > 0$ and $\sum^\infty_{l=0}  \mu(\ell;{w_i}) = 1$ for all values of the parameters and we have not ensured these constraints are compatible with \eqref{rscaling}. It is sufficient for our purposes to understand that the oscillatory terms prevent the continuum limit being taken for certain walk lengths so they are certainly unphysical; we therefore must scale them so that they disappear in the continuum. We therefore have as a lower bound on $Q_C$,
\bea
\tilde{Q}(\xi;\omega_i) \geq \xi^{-1/2}\left(1+\delta_{\Delta{C(K)},0}C+\sum_{j\in \tilde{J}_R}\frac{1}{(k_j-1)!(1-\sigma_j)}\upsilon_{j,-}^{-\frac{1-\sigma_j}{2}}\right)^{-1/2}.\nn\\\label{GeneLB}
\eea

We now consider the upper bound. Without loss of generality we may arrange that the indexing set $J_R$ has the property that if $j_1 < j_2$ then $\lambda_{j_1,\pm} < \lambda_{j_2,\pm}$. Furthermore define
\bea
&&\fl{H(x) = [\tilde{H}(x)] = [x^{-1/2}]}, \nn \\
&&\fl{D(x) = [\tilde{D}(x)] = [\frac{2\beta}{1-\sigma_I} (1-\chi(\tilde{H}(x)))^{-1} |\log( r_{I,+} x^\frac{\sigma_I-1}{2} (-\frac{1}{2}\log x)^{k_I-1})|]}, \\
&&\fl{k(x) = [\tilde{k}(x)] = [x^{-1/2}(1-\chi(\tilde{H}(x)))]}, \nn
\eea
where $I = \inf \tilde{J}_R$, i.e. $\lambda_{I,\pm}$ is the smallest length scale with a spectral dimension differing from one. Some modified versions of the above quantities will also be needed,
\bea
&&\fl{D_\pm(x) = [\tilde{D}_\pm(x)] = [\frac{2\beta}{1-\sigma_I} (1-\chi_\pm(\tilde{H}(x)))^{-1} |\log( r_{I,+} x^\frac{\sigma_I-1}{2} (-\frac{1}{2}\log x)^{k_I-1})|]},\\
&&\fl{k_\pm(x) = [\tilde{k}_\pm(x)] = [x^{-1/2}(1-\chi_\pm(\tilde{H}(x)))]}. \nn
\eea
It is clear that $\tilde{D}_-(x) \leq \tilde{D}(x) \leq \tilde{D}_+(x)$ and $\tilde{k}_+(x) \leq \tilde{k}(x) \leq \tilde{k}_-(x)$ which together with Lemma \ref{GenUBLem} allows us to conclude that the continuum limit is,
\bea
\label{GeneUB}
&&\fl{\tilde{Q}(x;\omega_i) \leq \xi^{-1/2}\Bigg[1-e^{G_- \log(1-\upsilon_I^{-s\beta})} +} \nn \\
&&\qquad\qquad\fl{\frac{e^{(G_+-1) \log(1-\upsilon_I^{-s\beta})}}{3\mathrm{cosech}^2(|\log\upsilon_I^\beta|(1-1/G_+))-\gamma+\sqrt{\gamma^2+1+2\gamma \coth(|\log\upsilon_I^\beta|/G_+)}} \Bigg]}
\eea
where we have defined $G_\pm(\xi,\lambda_1,\ldots) = \sum_{j\in \tilde{J}_R} \upsilon_{j,\pm}^{-\frac{1-\sigma_j}{2}}/(k_j-1)!$ and arranged that no oscillatory terms appear.

We are now in a position to prove Theorem \ref{mainresult} by analysing the behaviour of \eqref{GeneLB} and  \eqref{GeneUB} for various walk lengths. We first note that both the upper and lower bounds are controlled by the relative sizes of the quantities $\upsilon_j^{-1+\sigma_j/2}$. In particular the largest of these quantities, $\Upsilon\equiv\mathrm{sup}\{\upsilon_j^{-1+\sigma_j/2}: j\in\tilde{J}_R\}$, will determine the behaviour in a particular scale range; if we suppose $\Upsilon = \upsilon_j^{-1+\sigma_j/2}$ for some $j$ then the leading contribution to both the upper and lower bound will be proportional to $\xi^{-(1+\sigma_{j})/4}$. Using this we find the following behaviour,
\begin{itemize}
\item On very short scales corresponding to $\xi\gg\lambda_j^{-1}$ for all $j$ then the lower bound of $\tilde Q(\xi)$ is dominated by the $\xi^{-\half}$ behaviour, which together with the trivial upper bound means that taking the $\xi\to\infty$ limit leads to $d_s^0=1$ as in the previous sections.
\item On very long scales corresponding to $\xi\ll\lambda_j^{-1}$ for all $j$ then there will exist a length scale $\Lambda$ such that for $0< \xi < \Lambda^{-1}$, $\Upsilon = \upsilon_{\hat{j}}^{-1+\sigma_{\hat{j}}/2}$ where $\sigma_{\hat{j}} \leq \sigma_{j}$ for all $j$. Explicitly $\Lambda$ will be given by, 
\beq \Lambda=\sup\{\lambda_j^{-(2-\sigma_j)/(\sigma_j-\sigma_{\hat{j}})}\lambda_{\hat{j}}^{(2-\sigma_{\hat{j}})/(\sigma_{j}-\sigma_{\hat{j}})}:j\in \tilde{J}_R\}.\eeq
If we take the limit $\xi \rightarrow 0$ we obtain $d_s^\infty=(3-\sigma_{\hat{j}})/2$.
\end{itemize}

We see that the spectral dimension increases monotonically on successive length scales and the spectral dimension has measured values given by $d_s = (3-\sigma_{\hat{j}})/2$ therefore proving Theorem \ref{mainresult}.
\end{proof}

It is interesting to note that a constraint on the crossover behaviour exists for generic measures much as it did for the case of the double power law measure. In particular on intermediate scales there exists $J \in \tilde{J}_R$ such that $\xi\ll\lambda_j^{-1}$ for $j\leq J$ and $\xi\gg\lambda_j^{-1}$ for $j > J$. Hence  $\upsilon_j^{-1+\sigma_j/2} \ll 1$ if $j>J$ and so $\Upsilon=\mathrm{sup}\{\upsilon_j^{-1+\sigma_j/2}: j\in\tilde{J}_R, j\leq J\}$. This means there will exist a length scale $\Lambda_J$ such that for ${\lambda_{J+1}}^{-1}< \xi < {\Lambda_J}^{-1}$, $\Upsilon = \upsilon_{\hat{j}(J)}^{-1+\sigma_{\hat{j}(J)}/2}$ where we have defined $\hat{j}(J)$ implicitly by $\sigma_{\hat{j}(J)} \leq \sigma_{j}$ for all $j\leq J$. Of course this does not ensure that ${\lambda_{J+1}}^{-1} < {\Lambda_J}^{-1}$, indeed if the length scales $\lambda_j$ are not sufficiently separated this may not be true and we would not have a scale range in which this term dominated. The expression for ${\Lambda_J}$ is,
\beq \Lambda_J=\sup\{\lambda_j^{-(2-\sigma_j)/(\sigma_j-\sigma_{\hat{j}(J)})}\lambda_{\hat{j}(J)}^{(2-\sigma_{\hat{j}(J)})/(\sigma_{j}-\sigma_{\hat{j}(J)})} :j\in \tilde{J}_R, j \leq J\}\eeq
and so we may choose $\lambda_{J+1}$ independent of $\Lambda_J$ thereby allowing the scale range over which this behaviour exists to be arbitrarily large. This would result in an apparent spectral dimension of $\delta_s=(3-\sigma_{\hat{j}(J)})/2$.

Finally, an interesting application of the techniques used to prove Theorem \ref{mainresult} is that they allow one to analyse a wider class of combs than the class for which the results of \cite{Durhuus:2005fq} are valid. In particular it was proven in \cite{Durhuus:2005fq} that for a random comb the spectral dimension is $d_s = (3-\gamma_0)/2$ where $\gamma_0=\sup\{\gamma\geq 0: I_\gamma <\infty \}$ and $I_\gamma =\sum_{\ell =0}^\infty \mu_\ell\, \ell ^{\gamma}$. This was proved subject to the assumption that there exists $d>0$ such that, 
\beq
\label{RCassmp} \sum^{\infty}_{l=[x^{-1/2}]} \mu(\ell) \sim x^d
\eeq
as $x$ goes to zero.

Given the results in this section we see that $-\gamma_0$ may be interpreted as the abscissa of convergence for the Dirichlet series generating function. Furthermore, it is clear from our results that there are distributions where the assumption \eqref{RCassmp} does not hold and that we may use the techniques we have developed to analyse these cases. Recalling that for a random comb we may compute the spectral dimension using the relation \eqref{QCdsdef} then we must perform a similar analysis to that done for the continuum comb but now only scaling $x$ to zero.

Due to Landau's theorem there always exists a pole at the abscissa. Suppose it is of order $k$ and consider $1-\chi_\pm(u)$, 
\bea
&&\fl{1-\chi_\pm(u) = \left[r_{0,\pm} + \sum^\infty_{j=1} \mathrm{Re}[r_{j,\pm}] \cos(-{\tau_j} \log u) + \mathrm{Im}[r_{j,\pm}] \sin(-\tau_j \log u)\right] \frac{(\log u)^{k-1}}{(k-1)!} u^{-\gamma_0}} \nn \\
&&\qquad\qquad\fl{\equiv \Omega(u^{-1}; \{r_{j,\pm}\}) \frac{(\log u)^{k-1}}{(k-1)!} u^{-\gamma_0}} 
\eea
where $j$ runs over the poles on the line of convergence and $\Omega(y; \{r_{j,\pm}\})$ is implicitly defined in the second line. We have not included any poles of order less than $k$ since these will not be leading order as $y$ goes to zero and we have assumed the there are no poles of order greater than $k$ as then the above quantity would go negative due to the oscillating terms. Applying the same argument as we did for the continuum comb we obtain the lower bound,
\beq
\fl{\bar{Q}(x) \geq \left[ \frac{c'(k-1)!}{\Omega(x^{1/2}; r_{j,\pm}(1-\gamma_0 \pm i \tau_j)^{-1})}\right]^{1/2}\left(\half|\log x|\right)^{(1-k)/2} x^{-\frac{\gamma_0+1}{4}}.}
\eeq
where $c'$ is a constant. To obtain an upper bound we apply a very similar argument as used for the continuum comb; the only difference being that we choose,
\bea
H &=& [\tilde{H}] = \beta x^{-1/2} |\log x|, \nn \\
D &=& [\tilde{D}] = \beta (1-\chi(\tilde{H}))^{-1} |\log x|, \nn \\
k &=& [\tilde{k}] = x^{-1/2} (1-\chi(\tilde{H})), \\
D_\pm &=& [\tilde{D}_\pm] = \beta (1-\chi_\pm(\tilde{H}))^{-1} |\log x|, \nn \\
k_\pm &=& [\tilde{k}_\pm] = x^{-1/2} (1-\chi_\pm(\tilde{H})) \nn
\eea
and compute the behaviour as $x$ goes to zero of $G^{(0)}_\infty(x,(\tilde{D}-1)(\tilde{k}-1))$, $P_{\tilde{H}, *\tilde{D}_+}$ and the size of the set $\C A$ in \eqref{Aevent}. The result is,
\beq
\fl{\bar{Q}(x) \leq  \left[\frac{c''(k-1)!}{\Omega(\frac{x^{1/2}}{|\log x^\beta|}; r_{j,+})} \right]^{1/2} \left(|\log\left[ \frac{x^{1/2}}{|\log x^\beta|}\right]|\right)^{(1-k)/2} |\log x^\beta|^{\gamma_0/2}x^{-\frac{\gamma_0+1}{4}}.}
\eeq
where $c''$ is a constant. We see that we reproduce the result of \cite{Durhuus:2005fq} when $k=1$ and there are no poles on the line of convergence. If $k\neq 1$, the spectral dimension is the same as the $k=1$ case as only logarithmic corrections are introduced. If we allow poles to appear on the line of convergence then $\Omega(x; r_{j,\pm}) $ will have an oscillating $x$ dependence which, if the bounds above are tight enough, would also imply that the functional form of $\bar{Q}$ is not a power law and so the spectral dimension does not exist. Whether the above bounds are tight enough to make this conclusion we leave for future work. 

\section{Discussion}
\label{DiscussComb}
In this chapter we demonstrated that there exist models in which a CDT-like scale dependent spectral dimension could be shown to exist analytically. That we could do this was important, as all evidence for a scale dependent spectral dimension in CDT has been numerical in nature and therefore does not provide any understanding of the mechanism causing the reduction in dimensionality. We hope the work begun in this chapter may be extended to shed some light on this question.

We have given in Theorem \ref{mainresult} a reasonably complete classification of what behaviours one can find on a continuum comb and indeed it is likely that the cases not covered by the theorem do not have a well defined spectral dimension. The behaviour for the combs covered by Theorem \ref{mainresult} is fairly rich as there exists a cross-over region between the $UV$ and $IR$ behaviours in which a hierarchy of apparent spectral dimensions exist. One might expect that if one were able to extend the results of this chapter to more CDT like models one could find examples in which the spectral dimension again depends on the scale. Such an expectation is not unreasonable since the proof of the bijection between two-dimensional CDT and trees in \cite{Durhuus:2009sm} shows that any ensemble of trees is in bijection with a CDT-like theory. It is only when the random tree has the uniform measure that we obtain precisely CDT on the other side of the bijection. One could therefore imagine a variation of CDT, constructed from an ensemble of random trees with a measure that differs from the uniform measure and with a dependence on a length scale. This length scale could then be scaled while taking a continuum limit, as we have done in the work here. Such a model should display a spectral dimension that differs in the UV and IR.

One may query how closely related a random comb is to an actual model of quantum gravity. For the case of both CDT and random trees one may show that the evolution of the spatial volume with time is generated by a Hamiltonian. A crucial difference when one considers random combs is that the growth of a comb is not a Markovian process \footnote{In fact the growth process of a comb is Markovian if the tooth length distribution is of exponential form, however we have seen this does not lead to any interesting scale dependent behaviour of the spectral dimension.}; the probability of having a certain number of vertices at height $h$ from the root may not be computed from the state of the system at height $h-1$, meaning a Hamiltonian formulation of the evolution of a comb-like universe is not possible. An extension of the work described in this chapter, using the techniques of \cite{Durhuus:2006vk}, to the case of random trees would be very interesting as this would constitute an example of CDT-like dimensional reduction for a model of quantum gravity which admits a Hamiltonian formulation. As was briefly mentioned earlier, a random graph that is even more closely related to the random graphs of CDT was introduced in \cite{Durhuus:2009sm} by identifying all verticies belong to the same spatial slice. This leads to a line like graph with multiple links between each node. The question of whether the above phenomenon can be extended to this case would also be interesting to pursue.




\chapter{DT with Matter, String Theory and Branes\label{ChapString}}

In the previous chapters we considered the emergent dimension of spacetime in models of pure quantum gravity in addition to the disc function and the string susceptibility. This exhausts the list of observables that can be studied in such theories. We expect the theory to become richer upon the addition of matter and it is this topic we will address in this chapter. We will also discuss the relation of the metric-is-fundamental approach, including matter, to string theory. In fact the consideration of the string interpretation of these models will lead to a large class of two dimensional quantum gravity models known as $(p,q)$ minimal gravity or $(p,q)$ minimal string theory.

\section{Adding Matter using Matrix Models}
Of the approaches to discrete two dimensional quantum gravity considered in the preceding chapters the one which most easily admits a generalisation to models that describe the interaction of matter with gravity is the matrix model approach \footnote{This does not mean the matrix model approach allows any matter to be coupled. In general it is restricted to CFTs arising from spin systems such as the Ising model. Furthermore, transfer matrix techniques have not been pursued to the same extent as matrix models in quantum gravity and therefore could yet yield a description of matter interacting with gravity.}. An immediate consequence of this is that since the standard method used to compute amplitudes in CDT is the combinatorial method, it has so far not been possible to analytically investigate the consequences of adding matter to CDT. There have been a number of attempts to formulate CDT as a matrix model \cite{Benedetti:2008hc,Ambjorn:2008gk,Ambjorn:2008jf}, however these have so far been too complicated to solve or resisted the generalisations necessary to include matter. Hence, for the rest of this thesis we will work within the DT approach.
 
We begin by considering a naive method by which one can construct a matrix model that describes matter interacting with two dimensional gravity. In particular we know that the Ising model on a fixed lattice possesses a critical point at which the theory is described by a continuum quantum field theory. Therefore if we attach a spin to each triangular face in a triangulation and take a scaling limit in which both the lattice and the Ising model on it become critical, we expect we might recover the critical Ising model interacting with two dimensional gravity. For such a model, we expect the partition function for the discrete theory to be of the form,
\beq
Z^\mathrm{DT}_h = \sum_{T\in\mathrm{Tri}_h} e^{-\sum_{<i,j>} J \sigma_i\sigma_j-\Lambda A(T)},
\eeq
where $J$ is the coupling between spins and $\sigma_i$ is the spin on the $i$th triangle in the triangulation, which can take on values of $\pm 1$. For purposes of illustration and to make contact with graph theory terminology it is usual to refer to the spins as ``colours'' so that an assignment of spins to a triangulation becomes a ``colouring'' of the graph. For a triangulation $T$, let $n_{+-}(T)$ denote the number of edges shared by triangles differing in colour and let $n_p(T)$ be the total number of edges. The action may then be written as,
\beq
J\left(n_{+-}(T) - \left[n_p(T) - n_{+-}(T)\right]\right) - \Lambda |T| = 2J n_{+-}(T) - (\Lambda + 3J/2)|T|,
\eeq
which implies that the partition function, assuming the couplings are small, may be written as,
\beq
\label{ColouredDT}
Z^\mathrm{DT}_h = \sum^\infty_{n=0} \sum^\infty_{m = 0} \C{N}_h(n;m) \tilde{g}^{n} k^m,
\eeq
where $k = \exp{(-2J)}$, $\tilde{g} = \exp(-\Lambda - 3J/2)$ and $\C{N}_h(n;m)$ is the number of triangulation containing $n$ triangles and $m$ edges separating triangles of differing colours. It is this partition function we must realise using a matrix model.

\begin{figure}[t]
\centering
\subfigure[The Feynman rules generated by the matrix model \eqref{2MMZ} \cite{EynardReview}.]{
\includegraphics[scale=0.5]{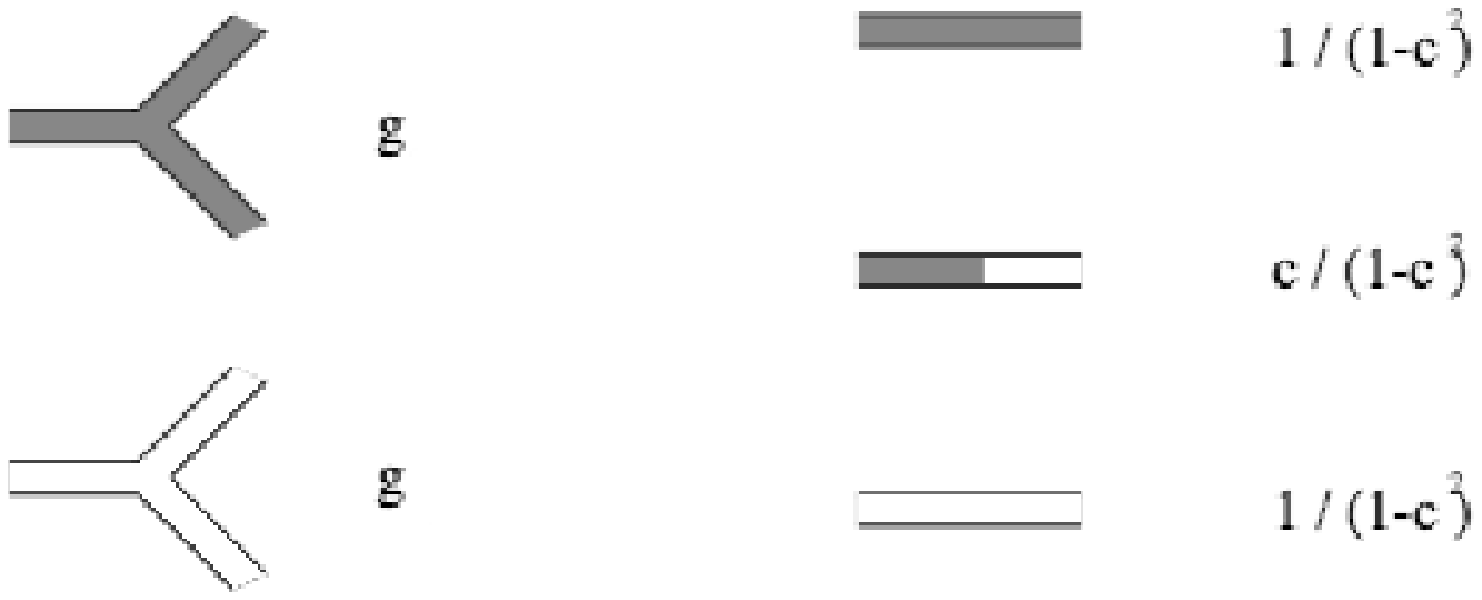}
\label{IsingFeynmanRules}
}
\qquad
\subfigure[An example of a Feynman diagram generated by the Feynman rules in fig \ref{IsingFeynmanRules} \cite{EynardReview}.]{
\includegraphics[scale=0.35]{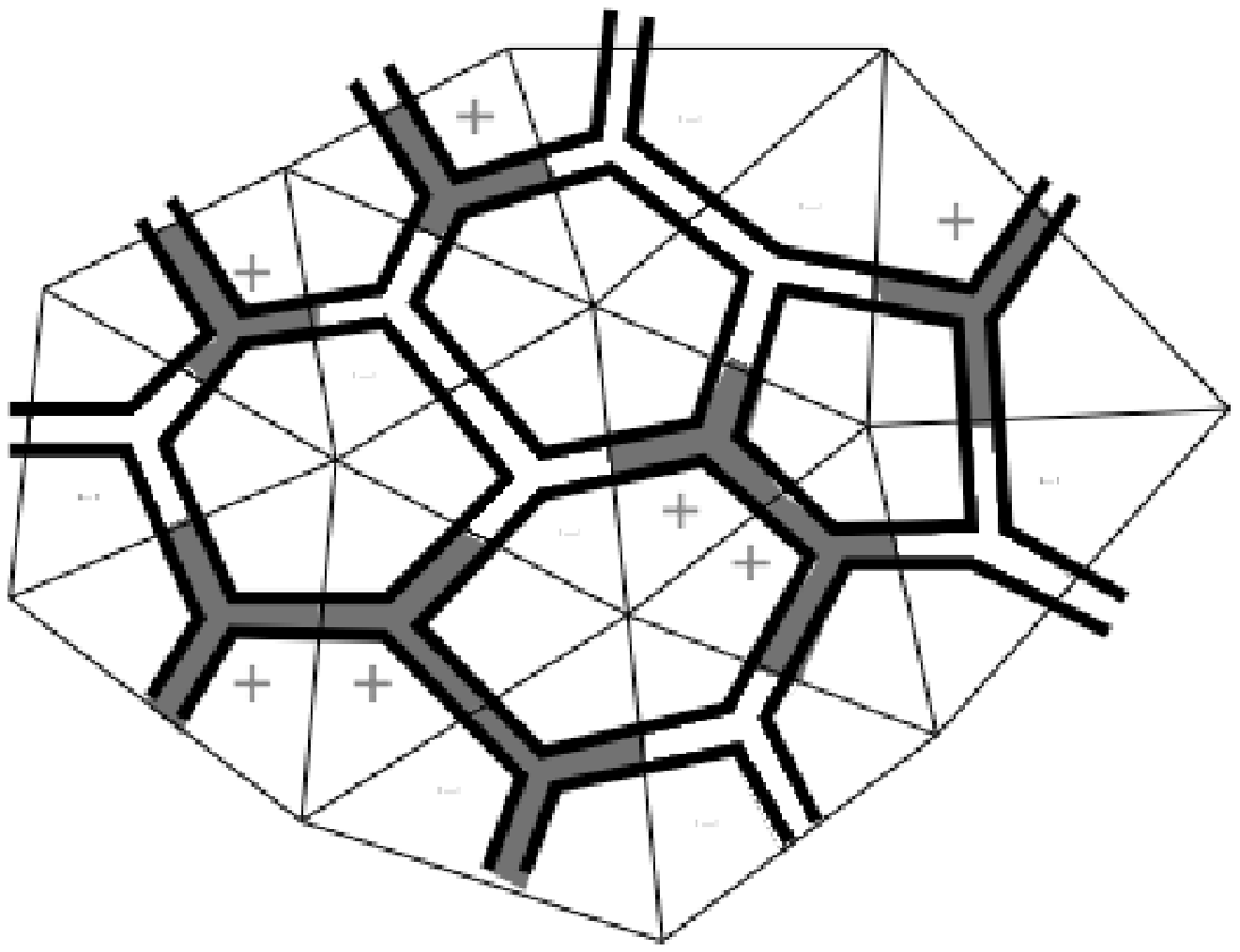}
\label{IsingFeynmanGraph}
}
\end{figure}
The key observation is that we can generate labelled graphs (i.e. containing colours) using a matrix model by considering models which contain more than one matrix. A matrix model containing more than one matrix can produce Feynman diagrams containing more than one type of ``particle'' species, which can then be considered as a label. As an example, consider the matrix model defined by,
\bea
\label{2MMIsing}
\Z=\int dM_1\,dM_2\exp\left(-N\Tr \left(-M_1 M_2+\half(M_1^2+M_2^2)+\frac{g}{3}(M_1^3+M_2^3)\right)\right).
\eea
It is straightforward to see that this model will produce Feynman diagrams containing propagators for the $M_1$ and $M_2$ matrices, which will be denoted by grey and white double lines respectively. Furthermore, due to the first term in the action, the model also produces a mixed propagator that interpolates between grey and white. Finally, the cubic interaction terms will produce grey and white 3-vertex interactions. The various propagators and vertices are shown in fig \ref{IsingFeynmanRules} together with their Feynman rules. An example of a Feynman diagram produced by this model is shown in fig \ref{IsingFeynmanGraph}.

To simplify the algebra let the contribution of a colour changing propagator be $\gamma$ and that of a colour preserving propagator be $\delta$. The value of a Feynman diagram composed of $n$ triangles with $n_{+-}$ colour changing propagators and $n_p$ propagators in total is,
\beq
\frac{1}{\Omega} g^n \gamma^{n_{+-}} \delta^{n_p - n_{+-}} = \frac{1}{\Omega} (g\delta^{3/2})^n \left(\frac{\gamma}{\delta}\right)^{n_{+-}},
\eeq
where $\Omega$ is the symmetry factor of the diagram. We can now compute the free energy for the matrix model by summing over all connected graphs and for each graph summing over $n_{+-}$. If we let $g\delta^{3/2} = \tilde{g}$ and $\frac{\gamma}{\delta} = k$ then we obtain precisely \eqref{ColouredDT}.

\subsection{The Continuum Limit and Multicritical Points}
The matrix model \eqref{2MMIsing} belongs to a general class of two matrix models of the form,
\bea
\Z=\int dM_1\,dM_2\exp\left(-N\Tr \left(- cM_1 M_2+V_1(M_1)+V_2(M_2)\right)\right).
\eea
These models have been solved \cite{Eynard:2002kg} using loop equations yielding an equation for the resolvent $W_{M_1}^{(0)}(x) = \avg{\Tr(x-M_1)^{-1}}$;
\beq
\label{Mastereqn}
(V'_1(x)-y)(V'_2(y)-x)-P(x,y)+1 = 0,
\eeq
where $y = V_1(x) - W_{M_1}^{(0)}(x)$ and $P(x,y)$ is a polynomial in $x$ and $y$ with unknown coefficients. The equation is sometimes known as the master equation. The equation for $W_{M_2}^{(0)}(y)$ can be obtained by letting $x = V_2(y) - W_{M_2}^{(0)}(y)$ in the master equation. If we specialise to the Ising matrix model considered above we see that the loop equation resulting from \eqref{Mastereqn} is a 3rd order polynomial in $W_{M_1}^{(0)}(x)$. In order to compute the unknown constants appearing in $P$ we enforce the condition that the master equation must be a genus zero curve. We may then compute the unknown constants by requiring that the resulting solution for the resolvent has the correct behaviour as $x \rightarrow \infty$, i.e. there is a sheet of the resolvent in which $W_M^{(0)}(x) \sim 1/x$. In order to compute the critical point of the matrix model we again consider the quantity $w_1(g,c)$ and look for the value of $g$ at which $w_1$ is non-analytic. In this case however we see that since we have a second variable, $c$, the matrix model will possess a critical line in $g,c$ space rather than having a single critical point. In the presence of a critical line there is the possibility that there is a higher order critical point lying on the line. This is in fact the case for the Ising matrix model. 

We may qualitatively understand these critical lines and points by first letting $c = 0$; in this case the Ising spins do not interact and we have two copies of the one matrix model considered previously. We therefore expect that any critical point obtained by tuning $g$ so that we hit the critical line, corresponds to pure gravity. At the critical point however, $c$ has a very particular value and therefore we expect it is at this point that we obtain a theory of critical Ising matter interacting with gravity. Such higher order critical points are known as multi-critical points. 

One may wonder what the two matrix model with general potentials corresponds to. In this case there will be a number of free parameters contained in each potential and therefore we have the possibility of a large number of critical hypersurfaces of decreasing dimension. By tuning the coupling so that the model lies on some critical hypersurface we can obtain a variety of continuum models \cite{Ginsparg:1993is,EynardReview}. We shall see that in this way we may obtain a family of different field theories known as $(p,q)$ minimal models interacting with gravity. The Ising model corresponds in this classification to the $(3,4)$ model. 

\section{Conformal Field Theory and Minimal Models}
\label{CFTandMMs}
One manner in which one could specify a particular quantum field theory is to give the symmetries of the theory. One can then specify the dynamical objects in the theory, by which we mean we specify which representation of the symmetry group they form. These two pieces of information largely fix the dynamics of the theory. In the usual formulation of quantum field theory these properties are introduced using the action and Feynman path integral, which can then be used to construct a perturbative series that computes correlation functions. However, in two dimensions the group of conformal transformations is infinite dimensional. If we have a QFT invariant under conformal transformations, this allows correlation functions to be computed without the use of an action or path integral. We will now review this construction following \cite{YB}.

It is often customary when discussing two dimensional field theories to use complex coordinates, defined by $z = x + i y$ and $\bar{z} = x - iy$. A field $\phi(z,\bar{z})$ is defined to be {\emph{primary}} if under a conformal transformation, $z \rightarrow w= w(z)$, $\bar{z} \rightarrow \bar{w}= \bar{w}(\bar{z})$, it transforms as,
\beq
\phi'(w,\bar{w}) = \left(\frac{dw}{dz}\right)^{-h} \left(\frac{d\bar{w}}{d\bar{z}}\right)^{-\bar{h}} \phi(z,\bar{z}).
\eeq
Fields which do not transform in this way are called secondary, of which an important example is the stress-energy tensor. The importance of the stress-energy tensor is that its Fourier coefficients generate conformal transformations. In terms of complex coordinates the stress energy tensor has two non-zero components that lie on the diagonal, we denote them $T(z)$ and $\bar{T}(\bar{z})$. The Fourier expansion of the stress energy tensor may be written as $T(z) = \sum_{n \in \BB{Z}} z^{-n-2}L_n$ and $\bar{T}(\bar{z}) = \sum_{n \in \BB{Z}} \bar{z}^{n-2} \bar{L}_n$. We will refer to $L_n$ and $\bar{L}_n$ as the generators of the conformal transformations. The algebra for the generators of conformal transformations is,
\bea
\left[L_n,L_m\right] &=& (n-m)L_{n+m} + \delta_{n+m,0} \frac{c}{12}n(n^2-1), \nn \\
\left[\bar{L}_n,\bar{L}_m\right] &=& (n-m)\bar{L}_{n+m} + \delta_{n+m,0} \frac{c}{12}n(n^2-1), \nn \\
\left[L_n,\bar{L}_m\right] &=& 0,
\eea
and is known as the Virasoro algebra, the real number $c$ is known as the central charge.

Since we work in a complexified Euclidean space the choice of how to slice the space into surfaces of constant time is arbitrary. This is equivalent to the freedom one has in splitting up a statistical mechanical system into subsystems in order to define a transfer matrix. It is useful in conformal field theory to take the surfaces of constant time to be concentric circles centred on the origin. The Hamiltonian for the system then generates the transformation $z \rightarrow \lambda z$, for $\lambda \in \BB{R}$. The origin is taken to be at $t=-\infty$ and $t=\infty$ is at complex infinity. This setup is known as radial quantisation. Because the states at $t=-\infty$ are localised at the origin there is a one-to-one correspondence in conformal field theories between local operators and states known as the state-operator correspondence. Hence for every primary field there exists a state known as a primary state. The primary states are the highest weight states when constructing the highest weight representations of the Virasoro algebra. The states obtained from the primary state by use of the lowering ladder operators, $L_n$ and $\bar{L_n}$ where $n<0$, are referred to as descendant states and they in turn give rise to fields known as descendant fields. A primary field together with its descendants is known as a conformal family.

A final important ingredient is the operator product algebra of the fields. This algebra arises by considering the correlation function of a product of local operators and studying its behaviour as two of the operators are brought towards the same point. In general the correlation function will diverge as the operators are brought together, however we can characterise this divergence by use of a Laurent like series in which the coefficients are other local operators. This series is known as the operator product expansion (OPE). Crucially which operators appear as coefficients in the OPE is independent of the other operators appearing in the correlation function, i.e.,
\beq
\avg{\ldots \C{O}_1(z_1) \C{O}_2(z_2) \ldots} = \sum^{\infty}_{n=-\infty} \frac{\avg{\ldots \C{O}_n(z_1) \ldots}}{(z_1-z_2)^n}.
\eeq
Since the OPE is independent of the other operators appearing in the correlation function, this relation is usually written as,
\beq
\C{O}_1(z_1) \C{O}_2(z_2) = \sum^{\infty}_{n=-\infty} \C{O}_n(z_1){(z_1-z_2)^n}.
\eeq
The operator product algebra is defined as the collection of all OPEs of the fields in the theory.
It turns out that the operator product algebra is entirely fixed by the value of the central charge together with the 3-point correlation functions of the theory.

In general the OPE of two operators will include other local operators in the expansion besides the two original operators and their descendants. Generically this will mean that in order to have an operator product algebra that closes the theory must contain an infinite number of primary fields. However, for special values for the central charge there exist finite sets of primary fields for which the operator algebra closes. These special conformal field theories are known as minimal models. In particular, given two positive coprime integers $p,q$ such that $p <  q$ then the conformal field theory with central charge
\beq
c = 1- 6\frac{(q-p)^2}{pq}
\eeq
and whose primary fields have weights given by
\beq
h_{r,s} = \frac{(qr-ps)^2 - (q-p)^2}{4pq},
\eeq
where $r$ and $s$ are integers satisfying, $1 \leq r < p$ and $1 \leq s < q$, is known as the $(p,q)$ minimal model.

\section{Boundary States of Minimal Models}
\label{MMBCFT}
To study minimal models in the presence of boundaries it is most useful to consider the case of the minimal model on a finite length cylinder. The boundaries of the cylinder carry boundary conditions we call $\alpha$ and $\beta$. Using real coordinates $x$ and $t$ we will analyse the case when the boundaries of the cylinder are curves of constant $t$. A necessary condition that any theory must satisfy at a boundary is that there is no flow of energy or momentum across the boundary. In terms of the components this requirement corresponds to $T_{tx} = 0$ at the boundary. If we complexify the coordinates by changing to the coordinates $z = t+ix$ and $\bar{z} = t-ix$, then the condition $T_{tx}=0$ becomes $T(z) - \bar{T}(\bar{z}) = 0$. Since the boundary is a curve of constant time, the system at the boundary should correspond to some state in the Hilbert space. We will refer to these boundary states as $\ket{\alpha}$ and $\ket{\beta}$. As was mentioned previously, calculations in conformal field theory are usually done in a coordinate system in which the slices on constant time form concentric circles centred on the origin. We can transform to such a coordinate system via the map $w = e^{\frac{2\pi}{L} z}$ resulting in the cylinder being mapped to an annulus. The condition on the boundary states can then be written as,
\beq
(L_n - \bar{L}_{-n})\ket{\alpha} = 0.
\eeq
The solutions to this equation are known as the Ishibashi states and have the explicit form,
\beq
\ket{j} = \sum_n \ket{j;n}\otimes \ket{\bar{j};\bar{n}}
\eeq
where the sum is over all states in the irreducible representation of the Virasoro algebra associated to the primary field $j$. Note that there is one Ishibashi state for each primary field in the theory. We therefore appear to have a space, with dimension equal to the number of primary fields in the theory, of boundary states. However, it turns out that there exist only a finite number of consistent boundary states, which are known as Cardy states and are in one-to-one correspondence with the primary fields in the theory.

\section{String Theory}
The central hypothesis of string theory is that fundamental particles may be modelled as quantum strings \footnote{This is unlikely to be the correct underlying principle but it is how the subject developed historically.}. The space in which the string is embedded is known as the target space. The simplest action for the string is known as the Polyakov action and takes the form,
\beq
S_P[X,g,h] \propto \int_{M_h} d^2\sigma \sqrt{g} \partial_a X^\mu(\sigma) \partial^a X_\mu(\sigma),
\eeq
where $X$ is the map embedding the string in the target space and $g_{ab}$ is the metric on the worldsheet, $M_h$, which is a two-dimensional Riemannian manifold of genus $h$. We have taken both the worldsheet and target space to be Euclidean as then the corresponding quantum theory is well defined. Later we will show that the quantum theory is equivalent to a theory on a flat worldsheet and so may be Wick rotated back to a Lorentzian theory.

The Polyakov action corresponds to a string whose excitations are purely bosonic and is therefore known as the bosonic string. 
The classical bosonic string has a number of symmetries that are manifest in the Polyakov action;

{\bf{Target space Poincare Symmetry:}} The action is invariant under the transformation $X^\mu \rightarrow {\Lambda^\mu}_\nu X^{\nu} + a^\mu$, where $\Lambda^\mu$ is a Lorentz transformation.

{\bf{Worldsheet Diffeomorphisms:}} The action is invariant under worldsheet diffeomorphisms, $\sigma^a \rightarrow f^a(\sigma^b)$, where $f$ is a smooth function. Under such transformations the embedding fields $X$ behave as scalars.

{\bf{Weyl Symmetry:}} Finally the action is also invariant under Weyl rescalings of the metric, $g_{ab}(\sigma) \rightarrow e^{2\phi(x)}g_{ab}(\sigma)$, where $\phi(x)$ is a smooth function of $x$.

To quantise the string we integrate over the embedding fields $X$ and the worldsheet metric $g$. We must only count each physically equivalent configuration once and it is therefore important to decide which symmetries of the action we consider gauge symmetries. For now we will consider Weyl transformations and worldsheet diffeomorphism to be gauge symmetries. This has the consequence that we must divide by the volume of these symmetry groups in the path integral. The expression for the partition function is then, 
\beq
Z_h =  \int \frac{[dg]}{\mathrm{Diff} \times \mathrm{Weyl}} \int [dX]e^{-S_P[X,g,h]}.
\eeq
In order to perform this integral we must define the measures $[dX]$ and $[dg]$. One might already note that the partition function for the string is that of the metric-is-fundamental approach but with the addition of a set of scalar fields and a lack of cosmological constant. 

For now we will keep the measures in the integral as formal objects and instead work on gauge-fixing the integral. The diffeomorphism gauge symmetry allows $g_{ab}$ to be brought to the form $e^{2\phi} \tilde{g}_{ab}(\tau_i)$ where $\tilde{g}$ depends only on a finite number of parameters $\tau_i$, known as the moduli. Furthermore, due to the extra Weyl symmetry in this theory the metric may be brought to the form $\tilde{g}_{ab}(\tau_i)$. Inserting the Faddeev-Popov determinant necessary to achieve this change of variable we obtain,
\beq
\label{gaugefixedZ}
Z_h =  \int d\tau \int \frac{[d\phi][d\zeta]}{\mathrm{Diff} \times \mathrm{Weyl}} \int [db][dc] \int [dX]e^{-S_P[X,\tilde{g},h]-S_\mathrm{ghost}[b,c,\tilde{g},h]},
\eeq
where $b$ and $c$ are ghost fields. In order for the integrals over $d\phi$ and $d\xi$ to cancel the volume of the gauge group we need the integral,
\beq
\label{Xandghosts}
Z'_h[\tilde{g},\phi, \xi] = \int [db][dc] \int [dX]e^{-S_P[X,\tilde{g},h]-S_\mathrm{ghost}[b,c,\tilde{g},h]},
\eeq
to be independent of $\phi$ and $\xi$. The action itself is invariant under Weyl transformations and diffeomorphisms as these are classical symmetries. Hence the only potential problems lie in the definition of the measures $[dX]$, $[db]$ and $[dc]$. If these measures have a dependence on $\phi$ or $\xi$ then we will have a Weyl or diffeomorphism anomaly, respectively. 
In fact, it happens that the measures are invariant under diffeomorphisms but the case of Weyl transformations is more complicated and it turns out that there is indeed a Weyl anomaly present. The Weyl anomaly can be most easily seen by recalling that a field theory is invariant under conformal and hence Weyl transformations if the trace of the stress energy tensor vanishes i.e. $\avg{T^\mu_\mu} = g_{\mu \nu} \avg{T^{\mu \nu}} = 0$.

If one computes the trace of stress energy tensor for the theory defined by \eqref{Xandghosts} one obtains $\avg{T^a_a} = -\frac{D-26}{12} R[\tilde{g}]$, where $D$ is the number of scalar embedding fields present and $R$ is the Riemann tensor for the worldsheet. At this point it is easy to see how the critical dimension of the bosonic string arises; if the number of dimensions $D$ is twenty-six then the Weyl anomaly is not present and the integral over $\phi(x)$ trivially cancels the remaining gauge group volume. This also has a number of important consequences. Firstly, since the geometry of the worldsheet is entirely a gauge degree of freedom we need not worry about which particular class of worldsheet geometries, for example Euclidean of Lorentzian, we sum over. Furthermore, since the worldsheet metric can be made flat everywhere, the computation of worldsheet correlation functions, which are important in the calculation of S-matrix elements, are reduced to correlation functions in a flat space conformal field theory. 

\begin{figure}[t]
  \begin{center}
    \includegraphics[width=10cm]{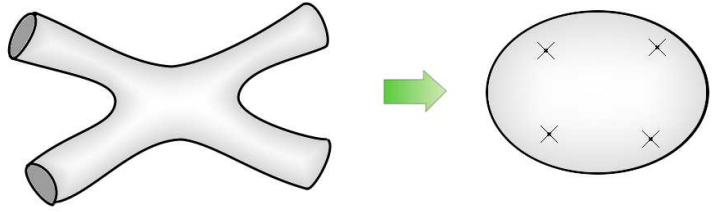}
    \caption{The worldsheet corresponding to the interaction of two strings may be transformed via a conformal transformation to a worldsheet of equivalent topology but with all asymptotic string states transformed into local operators.}
    \label{stringperturb}
  \end{center}
\end{figure}

Having defined the partition function for the string, an ansatz for how to compute S-matrix elements is required. If one considers a diagram for the interaction of two strings one can note that, by utilising the conformal symmetry of the worldsheet theory, the strings propagating from infinity may be encoded in local worldsheet operators, which we denote by $V$. Hence the string interaction diagrams may be reduced to the computation of correlation functions in the worldsheet theory. This is shown in fig \ref{stringperturb}. Furthermore, the genus of the worldsheet determines the power of the string coupling appearing in the amplitude. The ansatz is therefore that S-matrix elements for string states may be computed perturbatively by computing
\beq
\avg{V_1 \ldots V_N} = \sum^\infty_{h=0} g_s^{2-2h} \avg{V_1 \ldots V_N}_h,
\eeq
where $\avg{V_1 \ldots V_N}_h$ is the $N$-point correlation function on a surface of genus $h$ and $g_s$ is the string coupling constant. Clearly this construction fails when the string coupling becomes of order one. In such cases one would need a non-perturbative definition of string theory in order to proceed. Unfortunately, such a formulation is currently incomplete.

Although, a full definition of string theory is still lacking there has been much progress in understanding some of the crucial ingredients of such a theory. In particular it was found that Dirichlet boundary conditions of open strings correspond to the interaction of a string with a dynamical object which has become known as a D-brane. An important question in any string background then is what is the spectrum of D-branes, as this will have consequences for the non-perturbative physics.

Another aspect in which the above formulation is incomplete is that the target space was taken to be flat Minkowski space of dimension $D$. Obviously there are other spacetimes of great importance that arise as solutions to general relativity; namely the black hole spacetimes such as the Schwarschild metric and time dependent solutions, relevant to cosmology, such as the FRW metric. In particular it has proved difficult to gain any insight into the Big Bang singularity due to the difficulty of computing in a time dependent background.

Finally, a point related to the problem of backgrounds is that in the most straightforward interpretation given in this section, the number of dimensions in the target space is determined by the Weyl anomaly. However, the Weyl anomaly is not as stringent as this; it only requires that the overall central charge of the gauge fixed theory is zero. The ghost sector will always arise during the gauge fixing process and comes with central charge of $-26$, we therefore must start with a worldsheet theory with a central charge of $26$. There are many ways one could arrange this besides requiring $26$ non-interacting scalar fields to live on the worldsheet and indeed, one must modify the string model in this section as there are clearly not 26 non-compact dimensions! The most studied way to resolve this problem is to postulate that some of the target space dimensions are compact; in the case of the superstring these compact dimensions are usually taken to be a Calabi-Yau manifold. However there is nothing preventing us from cancelling the anomaly instead by introducing a conformal field theory on the world sheet with no geometric interpretation at all; this would be interpreted as degrees of freedom internal to the string. This will form the topic of the next section.

\subsection{Liouville Theory and Non-Critical String Theory}
Usually the dimension of the target space is defined to be equal to the number of worldsheet scalar fields and we can therefore formulate string theory in a lower number of dimensions by incorporating a CFT with a non-geometric interpretation into the worldsheet theory. There is a well studied CFT which has a tuneable central charge satisfying the constraint $c\geq 25$; it is known as Liouville theory. The action for Liouville theory takes the form,
\beq
\label{LiouvilleAction}
S_\mathrm{Liouville}[\phi,h] = \frac{1}{4\pi} \int d^2\sigma \sqrt{g}\left[ \partial_a \phi\partial^a \phi + Q\phi R[g] + 4\pi \mu e^{2 b \phi}\right],
\eeq
where $Q = b+1/b$ and $b$ and $\mu$ are free parameters. The field $\phi(\sigma)$ transforms as $\phi(\sigma) \rightarrow \phi(\sigma) - \omega(\sigma) /(2b)$ under the Weyl transform $g \rightarrow \exp(2\omega(\sigma)) g$. More generally under a conformal transformation $z = z(w)$, we have $\phi(z) = \phi'(w) - \frac{Q}{2} \log |\frac{dz}{dw}|^2$. Finally, the metric appearing in this expression is defined to be Euclidean. This theory may be quantised, while preserving the classical conformal symmetry it possesses, to produce a theory with central charge $c = 1+ 6Q^2$ and primary fields of the form $V_{\alpha}=e^{2\alpha\phi}$.

If we include the Liouville theory in the worldsheet theory then we are left with a central charge of $c\leq 1$ and so we still require some other sector in order to completely cancel the Weyl anomaly. In the case when $c=1$ a natural choice is to include a single scalar field. This is known as the $c=1$ string and has been studied quite extensively, see  \cite{Nakayama:2004vk} and references therein. However, if we choose $b$ in \eqref{LiouvilleAction} such that $c<1$ then instead of scalar fields, the natural candidate for the matter sector is a $(p,q)$ minimal model. The resulting string theory is known as $(p,q)$ minimal string theory. It is these string theories that will be studied in the remainder of this thesis.

One might wonder, since the $(p,q)$ minimal string has no embedding fields, if we have lost all geometric interpretation of the target space. However, there is a geometric way of interpreting the $(p,q)$ strings. The $(p,q)$ minimal models often arise as the continuum theory describing a statistical mechanical lattice model \cite{EynardReview,Ginsparg:1993is}, such as the Ising or Potts model, at its critical point. Such lattice models assign a degree of freedom to each lattice point which can take values in a finite set $G$ of points. If we interpret this degree of freedom as an embedding field then these model describe the embedding of the lattice into $G$. Hence we can think of the $(p,q)$ minimal strings as being strings embedded in a discrete target space.

We may go further and show that the Liouville sector of the worldsheet theory also has a geometric interpretation. Let us turn our attention back to \eqref{gaugefixedZ} in which we have treated both the Weyl transformations and worldsheet diffeomorphisms as gauge transformations. Because of this it is important that there is no anomaly in either of these symmetries. However, it is possible to construct a different theory if we drop the interpretation of the Weyl transformations as a gauge symmetry. In this case there is nothing preventing the appearance of a cosmological constant term, resulting in,
\beq
Z_h =  \int d\tau \int \frac{[d\phi][d\zeta]}{\mathrm{Diff}} \int [db][dc] \int [dX]e^{-S_P[X,\tilde{g}]-S_\mathrm{ghost}[b,c,\tilde{g},h]-\int_{M_h} d^2\sigma \sqrt{g} \mu}.
\eeq
In analysing this theory it will be useful to give a concrete definition of the measures in the integral, which we do by following \cite{Mansfield:1990tu,Polyakov:1981rd,D'Hoker:1988ta}. To define a measure on a space of functions we must introduce a metric in that space. For the two dimensional spacetime metric $g$ this metric is unique if we require it to respect diffeomorphism invariance and contain no dependence on derivatives of $g$; a condition known as ultralocality. The metric then takes the form,
\beq
(\delta g,\delta g') = \int_{\C M} d^2\sigma \sqrt{g} \left(u \delta g_{a b} \delta g'_{c d} g^{a b}g^{c d} + \delta g_{a b} \delta g'_{c d} g^{a c}g^{b d} \right),
\eeq
where $u$ is an arbitrary constant. Since this theory lacks a Weyl symmetry the diffeomorphism gauge symmetry only allows $g_{ab}$ to be brought to the form $e^{2\phi(\sigma)} \tilde{g}_{ab}(\tau_i)$. Inserting the Faddeev-Popov determinant for this transformation we obtain,
\beq
\label{gaugefixedZnoWeyl}
Z_h =  \int d\tau \int \frac{[d\phi][d\xi]}{\mathrm{Diff}} \int [db][dc][dX] e^{-S_\mathrm{ghost}[b,c,\tilde{g},h]-\int_{\C M} d^2 \sigma \sqrt{g}(\sigma) \mu}.
\eeq
We can integrate over diffeomorphisms $\xi$ as both the integrand and the measures are invariant under them. However, as we have seen, the measure for the ghosts and embedding fields still contains a dependence on $\phi$. Let us consider how this works explicitly. The metric on the space of embedding functions $X$ is unique if we require it to respect diffeomorphism invariance and be ultralocal. The metric then takes the form,
\beq
(\delta X,\delta X') = \int_{M_h} d^2\sigma \sqrt{g} \delta X(\sigma) \delta X'(\sigma),
\eeq
with a similar expression for the ghosts. As can be seen, both of these metrics have a dependence on $\phi$. Using this definition for the measures it is possible to show that the $\phi$ dependence of the measures is given by,
\bea
\int [dX]_{e^\phi \tilde{g}} &=& \int [dX]_{\tilde{g}} e^{\frac{D}{48 \pi} S_{L}[\phi]}, \nn \\
\int [dbdc]_{e^\phi \tilde{g}} &=& \int [dbdc]_{\tilde{g}}e^{\frac{-26}{48 \pi} S_{L}[\phi]}
\eea
where $S_L$ is the Liouville action with $Q=1$. We therefore obtain,
\beq
Z_h =  \int d\tau \int [d\phi] \int [db][dc] \int [dX]e^{-S_P[X,\tilde{g}]-S_\mathrm{ghost}[b,c,\tilde{g},h]-\frac{D-26}{48 \pi}S_L[\phi,\tilde{g}]},
\eeq
where the original cosmological constant has been absorbed into the cosmological constant appearing in the Liouville action. Naively, we appear to have solved the problem of how to integrate over $\phi$, however this conclusion is premature. The measure for $\phi$ is inherited from the measure on $g_{ab}$ and therefore depends on $\phi$ itself. It has the form, 
\beq
(\delta \phi, \delta \phi') = \int_{M_h} d^2\sigma e^{2\phi} \sqrt{\tilde{g}} \delta \phi \delta \phi'.
\eeq
The dependence on $\phi$ makes it difficult to split the measure into a product of Fourier coefficients which can easily be integrated over. In order to complete the calculation of $Z$ we must relate this $\phi$ dependent measure to one that is independent of $\phi$. It was conjectured in \cite{Distler:1988jt} that the effect of changing from this metric to a flat one, given by,
\beq
(\delta \phi, \delta \phi') = \int_{M_h} d^2\sigma \sqrt{\tilde{g}} \delta \phi \delta \phi',
\eeq
would only renormalise the constants appearing in the Liouville action. This conjecture is borne out by comparison to the matrix model approaches. We therefore see that although classical gravity is entirely topological, the measure in the path integral generates dynamics for the metric. Furthermore, the Liouville action we added to the worldsheet at the beginning of this section can be understood as describing the degrees of freedom internal to the worldsheet that correspond to the scale factor of the metric. The consequence of this on the statistical lattice model picture of the minimal models is that the Liouville field describes the configuration of the lattice itself. We therefore see how the program of DT naturally emerges from the path integral.

\subsection{Wave-functions and Correlation functions in Liouville Theory}
\label{superspaceAndCorrelationN}
We will now discuss an approximation, known as the mini-superspace \footnote{The term ``superspace'' in this context refers to the space of geometries of the worldsheet rather than a supersymmetric space.} approximation that has proved useful in the study of Liouville theory. This approximation is obtained by neglecting all oscillations of the string, leaving the zero modes as the only dynamical quantity. Beginning with the Liouville action we may obtain the Hamiltonian and by setting the transverse oscillations to zero we obtain the mini-superspace Hamiltonian,
\beq
H=\frac{1}{2}p^2+ 2 \pi  \mu e^{2b\phi_0},
\eeq
where $\phi_0$ is the zero mode of $\phi$ and $p$ is its conjugate momentum. It will turn out to be useful to study this Hamiltonian using the Schrodinger representation in which states in the Hilbert space becomes functionals of the field variable. The energy eigenstates then satisfy,
\beq
\label{miniH}
\left(-\frac{1}{2}\left(\frac{\partial}{\partial \phi_0} \right)^2+2 \pi \mu e^{2b\phi_0}\right) \psi[\phi] = E^2 \psi[\phi].
\eeq
An interpretation of the wave function for the zero mode may be found by consideration of the state-operator correspondence. Although discussed previously in Section \ref{CFTandMMs}, the state-operator correspondence can be given a more explicit realisation using path integrals \cite{Polchinski1}. Working in the radial quantisation picture we can associate any local operator to a wave-functional by the following map,
\beq
\Psi[\phi_b] = \int^{\phi_b} [d \phi] e^{-S[\phi]} \C{O}(\phi(0)),
\eeq
where the path integral is taken over the interior of the unit disc and the field $\phi(x,t)$ assumes the value $\phi_b(x,1)$ at the boundary. We therefore expect the mini-superspace wave functions to be approximations to disc functions in which the Liouville field has some specified value on the boundary. In order to compare the mini-superspace wave-functions to later quantities it is useful to note that in the mini-superspace approximation the value of $\phi_b$ is completely fixed if we specify the boundary length of the disc. Indeed, the two are related by $l = 2 \pi e^{b \phi_b}$ and so we may rewrite \eqref{miniH} to give,
\beq
\left(-\frac{1}{2}\left(l \frac{\partial}{\partial l} \right)^2+\frac{\mu}{2 \pi b^2} l^2\right) \psi[\phi] = \frac{E^2}{b^2} \psi[\phi],
\eeq
where we have now dropped the subscript $b$ on the field $\phi$. The solutions to this equation are the probability distribution for the boundary length of the disc and they have the form,
\beq
\label{miniLwave}
\psi(l) = A K_{\sqrt{2}E i /b}\left[\sqrt{\frac{\mu}{\pi b^2}}l\right],
\eeq
where $A$ is a constant. Furthermore, given the state operator correspondence we expect the parameter $E$ to label which local state has been inserted in the disc. It is this quantity we will compare with exact results from Liouville and matrix model calculations.

We now turn to the problem of computing correlation functions in Liouville theory. As was mentioned in section \ref{CFTandMMs}, the only input necessary to fix the operator algebra of a conformal field theory is the three point function coefficients. We therefore need only compute the three point correlation function. It is unnecessary for our purposes to review the computation of the three point function in detail, however it will be useful to consider the general approach. Historically, the Liouville three point function was first computed in \cite{Dorn:1994xn,Zamolodchikov:1995aa} using the method of \cite{Goulian:1990qr}, however a more efficient method was introduced later in \cite{Teschner:1995yf}. The method of \cite{Goulian:1990qr} is based on the observation that for particular combinations of operators and worldsheet topology the correlation function may be expressed in terms of free field correlators. Explicitly, consider the following $n$-point correlation function,
\beq
\avg{\prod^n_{i=0} e^{2\alpha_i \phi}} = \int [d\phi] \prod^n_{i=0} e^{2\alpha_i \phi} e^{-S_L[\phi]}.
\eeq
We proceed by splitting the integral into an integral over the zero mode $\phi_0$ and the non-zero modes $\bar{\phi}$,
\beq
\avg{\prod^n_{i=0} e^{2\alpha_i \phi}} = \int [d\bar{\phi}] \prod^n_{i=0} e^{2\alpha_i \bar{\phi}} e^{-S^{(0)}_L[\bar{\phi}]} \int^\infty_{-\infty} d\phi_0 e^{-\left(Q\chi - 2\sum^n_{i=0}\alpha_i\right)\phi_0} e^{-e^{2b\phi_0} \mu \int d^2 x \sqrt{g} e^{2b\bar{\phi}}},
\eeq
where $S^{(0)}_L[\bar{\phi}]$ is the free field action obtained by setting $\mu = 0$ in $S_L[\phi]$ and $\chi$ is the Euler characteristic of the worldsheet. We will now focus our attention on the zero mode integration,
\beq
\label{zeromodeInt}
\frac{1}{2b} \int^\infty_{0} dt \left[t^{-s-1} e^{-t \mu \int d^x \sqrt{g} e^{\bar{\phi}}}\right],
\eeq
where $s \equiv (Q\chi - 2\sum^n_{i=0}\alpha_i)/(2b)$ is known as the KPZ exponent and we have introduced $ t= \exp(2b\phi_0)$. It is clear that this integral is divergent for $s>0$ and therefore must be regularised in those cases.  
The method employed in \cite{Goulian:1990qr} is to compute the zero mode integral for $\mathrm{Re}[s] < 0$  and then analytically continue, giving the result,
\beq
\label{LigoulianZ}
\avg{\prod^n_{i=0} e^{2\alpha_i \phi}}=\frac{\Gamma(-s)}{2b} \mu^s \avg{\left(\int d^2 x \sqrt{g} e^{2b\bar{\phi}}\right)^s  \prod^n_{i=0} e^{2\alpha_i \bar{\phi}}}_0,
\eeq
where the average is taken using the free action $S^{(0)}_L[\bar{\phi}]$. We now make the observation that, if we arrange the operators and topology of the worldsheet to be such that $s \in \BB{Z}^+$, then the remaining correlation function in \eqref{LigoulianZ} may be computed as it is a free field theory correlator. One problem we must address is the prefactor of $\Gamma[-s]$ which is infinite at precisely the points for which the correlator becomes a free field correlator. In \cite{Zamolodchikov:1995aa} this property was interpreted as implying that the three point coefficients have a pole precisely when the relation $s \in \BB{Z}^+$ is satisfied. The result \eqref{LigoulianZ} therefore gives the residue at these poles. Knowing the residue of the three point function at these poles allowed \cite{Zamolodchikov:1995aa} and \cite{Dorn:1994xn} to guess the correct analytic continuation of these results to all $s$.

The above regularisation of the integral in the case $s>0$ was achieved via analytic continuation. A more brutal regularisation would be to simply cut off the $\phi_0$ integral. Unusually, the divergence of the zero mode has both a UV and IR interpretation. With respect to the background metric $\tilde{g}$ the divergence of the zero mode integral looks like an infrared divergence which should be solved by placing the system in a finite sized box. Alternatively, with respect to the original metric $g =e^{2 p \phi} \tilde{g}$, the divergence of the integral due to contributions from $\phi \rightarrow -\infty$ corresponds to a divergence due to the contributions from worldsheets in which all distances shrink to zero, which is a UV problem. We therefore may think of placing a cutoff at $\phi = -\frac{1}{2b} \log \Lambda$ as being either a short or long distance cutoff. If we implement the cutoff in \eqref{zeromodeInt}, we obtain,
\beq
\frac{1}{2b} \int^\infty_{\Lambda^{-1}} dt \left[t^{-s-1} e^{-t \mu \int d^x \sqrt{g} e^{\bar{\phi}}}\right] =  \frac{\Gamma(-s)}{2b} \mu^s \left(\int d^2 x \sqrt{g} e^{2b\bar{\phi}}\right)^s + P(\mu,\Lambda),
\eeq
where $P(\mu,\Lambda)$ is a cutoff dependent polynomial in $\mu$. The property that the cutoff dependent part of the above integral is polynomial in $\mu$ prompted \cite{Seiberg:1990eb} to classify contributions to correlation functions that are analytic in $\mu$ to be non-universal. Another way to think of this is that since these contribution arise from zero area worldsheets, then if they were to arise in the matrix model formulation of the theory such contributions would feel the effects of the lattice and hence be non-universal. Due to the exponential interaction term in the Liouville action the theory does not probe far into the region $\phi > -\frac{1}{2b}\log \mu$. Hence, in the presence of the cutoff, the volume of the Liouville direction is $V_\phi  =  -\frac{1}{2b}\log \left(\frac{\Lambda}{\mu}\right)$.

Recall that in the presence of boundaries we could compute amplitudes with either the boundary cosmological constant or the boundary length fixed, with these amplitudes related to one another via the Laplace transform \eqref{Wfixedl}. One use of the fixed boundary length amplitudes is that one expects worldsheets of zero extent to not contribute since a surface can not disappear if the boundary length is non-zero \footnote{We make a distinction between vanishing worldsheets and one with zero area. Branched polymers, as generated by the Gaussian matrix model, represent an example of worldsheets with zero area but non-zero boundary length and therefore have universal meaning. See \cite{Moore:1991ir} for further discussion.}.

Contributions to the amplitude from degenerate worldsheets could only reappear precisely when the boundary lengths go to zero. We therefore have two ways to determine which terms are non-universal; they are analytic in $\mu$ and also they should only have support when all boundary lengths are zero.

\subsection{Minimal String Theory}
Minimal string theories are obtained by interpreting Liouville theory coupled to a $(p,q)$ minimal model as the worldsheet theory for a string. The requirement that the total central charge of the system is zero, together with the expression for the central charge for the minimal models, Liouville theory and ghosts leads to $b^2=\frac{p}{q}$. The physical operators of the $(p,q)$ minimal string were first given by \cite{Lian:1992aj} as $\C{U}_n \equiv \C{O}_n V_{\alpha(n)}$ where,
\beq
\label{LZstates}
\alpha(n) =b \frac{p+q-n}{2p} \quad n \neq 0 \mod p \quad n \neq 0 \mod q \quad n \in \BB{Z},
\eeq
and $\C{O}_n$ is an operator containing contributions from the ghost and matter sector in addition to derivatives of the Liouville field. An important physical operator is obtained by gravitationally dressing the primary fields $\Oh_{r,s}$ of the minimal model; such operators are referred to as tachyons and take the form,
\bea
\C{T}_{r,s}&=&c\bar{c}\Oh_{r,s}V_{\alpha_{r,s}}, \\ 
\alpha_{r,s}&=& \frac{p+q-|qr-ps|}{2\sqrt{pq}},
\eea
where $c$ and $\bar{c}$ are ghost fields. The expression for $\alpha_{r,s}$ arises from the requirement that the conformal dimension of the total tachyon operator is one as this allows for the integral of the operator over the worldsheet to be conformally invariant.

\subsection{FZZT Branes and the Seiberg-Shih Relation}
Let us return briefly to consider the standard critical string. It is well known that string theory contains other degrees of freedom which do not appear in the perturbative string expansion. The most well known of these are D-branes. Such objects correspond to extended membranes thereby generalising the idea of a string. The D-branes are particularly interesting as they do admit a description in terms of perturbative string theory; although they do not appear in the closed string expansion they can be described by open strings in which the boundaries carry Dirichlet boundary conditions. If $26-n$ of the string embedding coordinates carry Dirichlet boundary conditions then this means the end of the string is confined to move in a $n$ dimensional hyperplane, thereby giving rise to the membrane like structure of the D-brane, which in this case would be called an $n$-brane. Since D-branes are described by worldsheets with boundaries the worldsheet theory is known as a boundary conformal field theory.
 
Given this motivation it is natural to enquire whether the $(p,q)$ minimal strings contain any branes. The answer to such a question would involve classifying the possible conformally invariant boundary conditions in the worldsheet CFT. A review of the result of such a program for the $(p,q)$ minimal model not coupled to gravity was given in Section \ref{MMBCFT}. We now consider a similar question for Liouville theory and hence the $(p,q)$ minimal string. The Liouville theory on a manifold with boundary is defined by the Lagrangian,
\beq
S_L=\frac{1}{4\pi}\int_M d^2x\sqrt{g}\left(g^{ab}\partial_a\phi\partial_b\phi+Q\phi R[g]+4\pi\mu e^{2b\phi}\right) + \int_{\partial M} \mu_B e^{b\phi} dx,
\eeq
where $\mu_B$ is known as the boundary cosmological constant. The full conformal bootstrap of Liouville theory on the disc was performed in \cite{Fateev:2000ik}\cite{Ponsot:2001ng}, resulting in the identification of a consistent boundary condition for the Liouville theory; the FZZT brane. As part of the conformal bootstrap, the bulk one-point function on a disc, i.e. the bulk one-point function with FZZT boundary, was calculated:
\beq
\label{FZZTstate} \Psi_{\sigma_1}(P) \equiv \langle v_P |\sigma\rangle_{FZZT}=(\pi \mu \gamma(b^2))^ {-iP/b}\Gamma(1+2ibP)\Gamma(1+2iP/b)\frac{\cos(2\pi \sigma P)}{iP}
\eeq 
where $\sigma$ is defined implicitly by, $\mu \cosh^2\pi b \sigma = {\mu_B}^2\sin(\pi b^2)$ and $\ket{v_P}$ is the primary state associated to $V_{Q/2+iP}$.

In the minimal string the full FZZT brane is $\ket{\sigma;k,l}=|\sigma\rangle_{FZZT} \otimes |k,l\rangle$, where $|k,l\rangle$ is the $(k,l)$ Cardy boundary state of the minimal model. We will find it convenient to follow \cite{Seiberg:2003nm} in rescaling the bulk and boundary cosmological constant so that $\pi \mu \gamma(b^2) \rightarrow \mu$ and $\mu_B \sqrt{\pi\gamma(b^2)\sin(\pi b^2)} \rightarrow \mu_B$. This modifies the relation between $\mu$ and $\sigma$ to $\mu_B = \sqrt{\mu} \cosh(\pi b \sigma)$. 

It was noted in \cite{Dorn:1994xn} \cite{Zamolodchikov:1995aa} that the results of the conformal bootstrap on the sphere are invariant under the duality transformation $b\rightarrow \frac{1}{b}$, $\mu \rightarrow \tilde{\mu}=\mu^{\frac{1}{b^2}}$. This self duality is present in the bootstrap on the disc if the boundary cosmological constant transforms as, $\mu_B \rightarrow \tilde{\mu}_B$, where $\frac{\tilde{\mu}_B}{\sqrt{\tilde{\mu}}}=\cosh \frac{\pi \sigma}{b}$. The transformed boundary condition is referred to as the dual FZZT brane and provides a physically equivalent description of the FZZT brane. Given that all computed amplitudes exhibit this duality it is in our opinion important for this symmetry to be maintained in any modifications of amplitudes, such as those we consider later.

The amplitude in the minimal string for the insertion of a tachyon in the bulk of a disc is obtained using \eqref{FZZTstate} \cite{Seiberg:2003nm},
\bea\label{tachyonDisk}
\braket{\C{T}_{r,s}}{\sigma;k,l}=&&A_D\left[\sin\left(\frac{\pi t}{p}\right)\sin\left(\frac{\pi t}{q}\right)\right]^{\frac{1}{2}} \Gamma(2b \alpha -b^2) \Gamma(2b^{-1} \alpha -b^{-2}-1) \times \nn\\
&&U_{k-1}\left(\cos\frac{\pi t}{p}\right)U_{l-1}\left(\cos\frac{\pi t}{q}\right)\mu^\frac{t}{2p}\cosh\left(\frac{\pi t \sigma}{\sqrt{pq}}\right), 
\eea
where $t=qr-ps$, $2\alpha=Q-\frac{t}{\sqrt{pq}}$, $A_D$ is a constant independent of $k$, $r$, $s$ and $l$ and $U_n(x)$ is the $n$th Chebyshev polynomial of the second kind. This expression for the disc amplitudes motivated Seiberg and Shih to conjecture in \cite{Seiberg:2003nm} that the construction just outlined over counts the number of FZZT branes and that, up to BRST null states, the following identification holds,
\beq \label{SSequiv}
\ket{\sigma;k,l}=\sum^{k-1}_{n=-(k-1),2} \sum^{l-1}_{m=-(l-1),2}  \ket{\sigma+i\frac{mp+nq}{\sqrt{pq}};1,1}.
\eeq 
This relation, which we will refer to as the SS (for Seiberg-Shih) relation, was originally obtained by inspection of \eqref{tachyonDisk} but  was later derived for discs using the ground ring in \cite{Basu:2005sda}. Essentially it states that there is only a single FZZT brane $\ket{\sigma;1,1}$ and that all the others are related to it by complex shifts in the boundary cosmological constant which has somehow absorbed all the information about the matter boundary condition.

It is worth pausing at this point to compare the above exact result for the disc function amplitude in the presence of FZZT boundary conditions, with the result obtained via the mini-superspace approximation. To compare the above result \eqref{tachyonDisk} to \eqref{miniLwave}, we must take the inverse Laplace transform. This may be achieved by use of the identity,
\beq
\label{InverseLaplaceK}
\frac{\pi b \cos(2 \pi P \sigma)}{2 P \sinh(\frac{2\pi P}{b})} = \int^\infty_0 \frac{dl}{l} e^{-M l \cosh(\pi b \sigma)} K_{\frac{2iP}{b}}(M l).
\eeq
Setting $M = \sqrt{\mu}$, we find the length distribution for the $1$ point tachyon disc amplitude to be,
\beq
\C{L}^{-1}\braket{\C{T}_{r,s}}{\sigma;k,l} \propto K_{t/p}(\sqrt{\mu} l),
\eeq
hence identifying $E/b = n/p$, where $n$ is the integer defining the operator in \eqref{LZstates}. We see that the exact calculation of the Liouville disc function reproduces the mini-superspace result up to rescalings of $l$ and the normalisation constant. This has in fact been observed to be a generic feature of the mini-superspace approximation in this setting; it will give the correct functional form for any amplitude. This implies that for general disc amplitudes we have,
\beq
\label{LZdisc}
\braket{\C{U}_{n}}{\sigma;k,l} \propto \mu^\frac{n}{2p} \cosh\left(\frac{\pi n \sigma}{\sqrt{pq}}\right).
\eeq
This is consistent with the application of \eqref{FZZTstate} with ${Q/2+iP} = \alpha(n)$.

\subsection{Cylinder Amplitudes and the Seiberg-Shih Relation}
The minimal string cylinder amplitude was first computed in \cite{Martinec:2003ka} for the case of $\ket{\sigma;a,1}$ boundary conditions. The cylinder amplitude with arbitrary boundary conditions was given in \cite{Gesser:2010fi} and we have reproduced the derivation in Appendix \ref{Cylindercalc}. A compact expression for the general cylinder amplitude is,
\beq
\label{Zintegralc}
\C{Z}(k,l;\sigma_1|r,s;\sigma_2)= \frac{2\pi^2 }{\sqrt{2}pq} \int^{\infty}_0 dP \frac{P}{P^2+\epsilon^2} \frac{\cos(2\pi\sigma_1 P)\cos(2\pi\sigma_2 P)\sinh(\frac{2\pi P}{\sqrt{pq}})}{\sinh(2\pi b P)\sinh(\frac{2\pi P}{b})} F_{k,l,r,s}(\frac{2\pi i P}{\sqrt{pq}}),\\
\eeq
where $\epsilon$ is an IR cutoff and,
\bea 
F_{k,l,r,s}(z)=\sum^{\lambda_p(k,r)}_{\eta=|k-r|+1,2} \sum^{\lambda_q(l,s)}_{\rho=|l-s|+1,2} \sum^{p-1}_{a=1} \sum^{q-1}_{b=-(q-1)} &&\Bigg[\sin(\frac{\pi t}{p})\sin(\frac{\pi t}{q}) \times \\ 
&&U_{\eta-1}(\cos\frac{\pi t}{p})U_{\rho-1}(\cos\frac{\pi t}{q}) \frac{1}{\cos{\frac{\pi t}{pq}}-\cos{z}}\Bigg]\nn,
\eea
where $\lambda_p(k,r)=\mathrm{min}(k+r-1,2p-1-k-r)$ and $t=qa+pb$. It is the purpose of this section to investigate whether the SS relation is satisfied when applied to cylinder amplitudes. For this purpose it is useful to define the deviation from the SS relations for a particular amplitude $\C{A}(k,l;\sigma|X)$ as
\beq 
\label{SS1}
\Delta \C{A}(k,l;\sigma|X)=\C{A}(k,l;\sigma|X)-\sum^{k-1}_{n=-(k-1),2} \sum^{l-1}_{m=-(l-1),2}  \C{A}(1,1;\sigma+i\frac{mp+nq}{\sqrt{pq}}|X)
\eeq
where the presence of other boundaries and operator insertions is denoted by $X$. Furthermore, it will be useful to have at hand some explicit formulae for amplitudes resulting from the application of \eqref{Zintegralc}. For the $(3,4)$ minimal string the full set of cylinder amplitudes are \footnote{These expressions are correct up to an unimportant common normalisation constant and additions of numerical constants.},
\bea
\label{34amps}
&&\C{Z}(1,1;\sigma_1|1,1;\sigma_2)=\C{Z}(2,1;\sigma_1|2,1;\sigma_2)=\ln(\frac{z_1-z_2}{x_1-x_2})+\frac{1}{2\sqrt{3}\epsilon}, \\
&&\C{Z}(2,1;\sigma_1|1,1;\sigma_2)=-\ln(z_1+z_2)+\frac{1}{4\sqrt{3}\epsilon}, \nn\\
&&\C{Z}(1,1;\sigma_1|1,2;\sigma_2)=\C{Z}(2,1;\sigma_1|1,2;\sigma_2)=-\ln(-1 + 2 z_1^2 + 
  2 \sqrt{2} z_1 z_2 + 2 z_2^2)+\frac{1}{2\sqrt{3}\epsilon},\nn\\
&&\C{Z}(1,2;\sigma_1|1,2;\sigma_2)=\ln(\frac{z_1-z_2}{(z_1+z_2)(x_1-x_2)})+\frac{3}{4\sqrt{3}\epsilon}.\nn
\eea
where we have introduced the following notation,
\beq
z \equiv \cosh\left(\frac{\pi \sigma}{\sqrt{pq}}\right), \qquad x \equiv \frac{\mu_B}{\sqrt{\mu}} = \cosh\left(\frac{\pi p \sigma}{\sqrt{pq}}\right), \qquad \tilde{x} \equiv \frac{\tilde{\mu}_B}{\sqrt{\tilde{\mu}}} = \cosh\left(\frac{\pi q \sigma}{\sqrt{pq}}\right).
\eeq
The first check of the SS relation was performed in \cite{Kutasov:2004fg} in which it was applied to cylinder amplitudes in which all boundaries carried boundary states of the form $\ket{\sigma;a,1}$. It was found that the SS relations are not satisfied identically in this case, however it was argued that the deviation was non-universal and therefore should be discarded. The evidence for non-universality was two-fold, corresponding to the two properties of non-universal terms introduced in Section \ref{superspaceAndCorrelationN}. Firstly it was noted that the deviation possessed the following properties;

\begin{property}{1}
The deviation could be made independent of the bulk cosmological constant, and hence analytic in $\mu$, by setting the IR cutoff $\epsilon$ in \eqref{Zintegralc} to the volume of the Liouville direction.  \label{prop1}
\end{property}
As an example consider the second of the $(3,4)$ minimal string cylinder amplitudes above,
\beq
\label{exampleDev}
\Delta\C{Z}(2,1;\sigma_1|1,1;\sigma_2) = -\frac{\sqrt{3}}{4\epsilon}-\log(x_1+x_2).
\eeq
Choosing $\epsilon$ equal to the volume of the Liouville direction, $\frac{1}{\epsilon} = \frac{1}{b}\log\frac{\Lambda}{\mu}$ where $\Lambda$ is a constant and recalling that $x = \mu_B/\sqrt{\mu}$ we find,
\bea
\Delta\C{Z}(2,1;\sigma_1|1,1;\sigma_2) = -\frac{1}{2}\log\Lambda-\log({\mu_B}_1+{\mu_B}_2).
\eea
Since this is independent of $\mu$ it was interpreted as non-universal and therefore set to zero. For a demonstration that all cylinder amplitudes with boundary conditions given by the states $\ket{\sigma;a,1}$ have this property the reader is referred to the appendix of \cite{Kutasov:2004fg}
\\

\begin{property}{2}
The inverse Laplace transform, with respect to all boundary cosmological constants, of the deviation minus any regularisation dependent parts is zero almost everywhere i.e. it is supported only at points. \label{prop2}
\end{property}

This property was uncovered by considering \eqref{Zintegralc}. Using \eqref{InverseLaplaceK} in \eqref{Zintegralc} we find,
\beq
\C{L}^{-1}[\C{Z}](k,l;L_1|r,s;L_2) \propto \int^{\infty}_0 dP \psi_P(L_1)\psi_P(L_2) \frac{\sinh(\frac{2 \pi P}{\sqrt{pq}}) }{\sinh(2\pi P b)} F_{k,l,r,s}(\frac{2\pi i P}{\sqrt{pq}}),
\eeq
where $\psi_P(L) = \sqrt{P \sinh(2 \pi P/b)} K_{2iP/b}(L)$. By noting that the integrand is symmetric, the integral may be computed by closing the integral along the entire real axis by a semicircular contour. Since the Seiberg-Shih deviation is merely a linear combination of amplitudes its double inverse Laplace transform with respect to each boundary cosmological constant also exists and is given by
\beq
\label{invLZint}
\C{L}^{-1}[\Delta\C{Z}](k,l;L_1|r,s;L_2) \propto \int^{\infty}_0 dP \psi_P(L_1)\psi_P(L_2) \frac{\sinh(\frac{2 \pi P}{\sqrt{pq}}) }{\sinh(2\pi P b)} \Delta F_{k,l,r,s}(\frac{2\pi i P}{\sqrt{pq}})
\eeq
where
\beq
\Delta F_{k,l,r,s}(z) \equiv F_{k,l,r,s}(z) - U_{k-1}(\cos qz) U_{l-1}(\cos pz) F_{1,1,r,s}(z)
\eeq
and we have assumed that the Seiberg-Shih transformation has been applied to the $(k,l)$ boundary. An important property of $\Delta F_{k,l,r,s}(z)$ is that it is an entire function and therefore the only contributions to the integral comes from the poles at $P = \frac{in}{2b}$ where $n \in \BB{Z}$. In the case that $(k,l) = (a,1)$ and $(l,s) = (a',1)$ the residues of these poles are zero and the above expression is zero.

Although contrary to the original motivation of P2, we have allowed for the possibility of the amplitude being non-zero at points away from zero. We merely require the weaker condition of having point-like support.

We now want to review the results of \cite{Gesser:2010fi} in which all boundary conditions on the cylinder amplitudes were considered and the consequences of this on the SS relations. Before proceeding it is important to note that the sum in the Seiberg-Shih relation does not respect the symmetry of the Kac table and so for a given amplitude there are two possible Seiberg-Shih relations that might possess P1 or P2. However, there are cases where both possibilities lead to deviations that do not possess either property P1 or P2. 

For an example of a deviation that does not possess property P1 consider,
\beq
\Delta\C{Z}(1,2;\sigma_1|1,1;\sigma_2)= \ln\left[\frac{-1 + 2{x_1}^2 + 2\sqrt{2} x_1 x_2 + 2{x_2}^2}{2(\tilde{x}_1 + \tilde{x}_2)}\right],
\eeq
or, since $\C{Z}(2,2;\sigma_1|1,1;\sigma_2)=\C{Z}(1,2;\sigma_1|1,1;\sigma_2)$,
\beq
\Delta\C{Z}(2,2;\sigma_1|1,1;\sigma_2)= \ln \left[ (-1 + 2{x_1}^2 - 2\sqrt{2} x_1 x_2 + 2{x_2}^2)\right],
\eeq
where we have suppressed the $\epsilon$ dependent term. It is clear that since the argument of both expressions is not a homogeneous polynomial in $x$ the dependence on $\mu$ cannot be factored out and placed in a separate term. Hence the IR cutoff cannot be chosen to cancel all dependency on the bulk cosmological constant.

An example of an amplitude not possessing P2 need only have the property that the residues are non-zero \eqref{invLZint}, leading to a function with global support. As an example consider,
\bea
\label{P2examplea}
\Delta\C{Z}(1,2;\sigma_1|1,2;\sigma_2)= \ln\left[\frac{\tilde{x}_1 - \tilde{x}_2}{2(x_1 - x_2)}\right]
\eea
or, since $\C{Z}(2,2;\sigma_1|2,2;\sigma_2)=\C{Z}(1,2;\sigma_1|1,2;\sigma_2)$,
\bea
\label{P2exampleb}
\Delta\C{Z}(2,2;\sigma_1|2,2;\sigma_2)= \ln\left[\frac{(x_1 - x_2)(x_1^2+x_2^2-1)}{2(\tilde{x}_1 - \tilde{x}_2)}\right].
\eea
By applying \eqref{invLZint} it is now easy to show P1 and P2 are not satisfied for \eqref{P2examplea} and \eqref{P2exampleb}; in both cases there exist poles when $P = \frac{in}{2b}$ and therefore when the integral contour is closed around them it will result in a sum of terms containing Bessel functions of the form $K_\frac{4n}{3}(L)$ which have global support in $L$.

Having shown that the cylinder amplitudes possess deviations from the SS relations which do not fit the non-universal classification advanced in \cite{Kutasov:2004fg} we now wish to motivate a possible extension of the criteria by which non-universality is judged in the hope of rescuing the SS relations. The observation, is that, the terms that disappear in \eqref{Zintegralc} under an inverse Laplace transform are dual under the Liouville duality to the terms that produce the troublesome poles in the above test of P2 for arbitrary amplitudes. We therefore propose that non-universal deviations should possess,

\begin{property}{3}
The deviation may be written in terms of contributions for which either the inverse Laplace transform with respect to all boundary cosmological constants $\mu_B$ or all their duals, $\tilde{\mu}_B$, has point-like support. \label{prop3}
\end{property}

We devote the remainder of this section to a more careful motivation of P3. Consider the integral representation of the amplitude \eqref{Zintegralc}; it may be computed by extending the region of integration to the entire real line and then splitting the integrand up into terms for which the contour may be closed in either the upper or lower half plane. Assuming that $\sigma_1 > \sigma_2$, then upon substituting in \eqref{Fresidues} and \eqref{Gresidues} this results in the expression,
\bea
\label{vExplicitZ}
&&\C{Z}(k,l;\sigma_1|r,s;\sigma_2) \propto F_{k,l,r,s}(0)(\frac{1}{2\epsilon}-\pi \sigma_1)\frac{1}{\sqrt{pq}} + \\
&&\sum_{\substack{n=1\\ n\neq 0modp\\ n\neq 0modq }}\frac{4pq}{n} U_{k-1}(\cos\frac{\pi n}{p})U_{r-1}(\cos\frac{\pi n}{p}) U_{l-1}(\cos\frac{\pi n}{q})U_{s-1}(\cos\frac{\pi n}{q}) e^{-\frac{\pi n \sigma_1}{\sqrt{pq}}} \cosh \frac{\pi n \sigma_2}{\sqrt{pq}} + \nn \\
&&\sum^{\infty}_{\substack{n=1\\ n\neq 0\mod q}}  \frac{2}{n} (-1)^n F_{k,l,r,s}(\frac{\pi n}{q})\frac{\sin(\pi n /q)}{\sin(\pi p n /q)} e^{-\frac{\pi p n \sigma_1}{\sqrt{pq}}} \cosh(\pi b n \sigma_2) +\nn\\
&&\sum^{\infty}_{\substack{n=1\\ n\neq 0\mod p}} \frac{2}{n} (-1)^n  F_{k,l,r,s}(\frac{\pi n}{p})\frac{\sin(\pi n /p)}{\sin(\pi q n /p)}e^{-\frac{\pi q n \sigma_1}{\sqrt{pq}}} \cosh (\pi n \sigma_2/b) + \nn \\
&&\sum^{\infty}_{n=1}  \frac{2}{npq} (-1)^{n(p+q+1)} F_{k,l,r,s}(\pi n) e^{-\frac{\pi p q n \sigma_1}{\sqrt{pq}}} \cosh \frac{\pi p q n \sigma_2}{\sqrt{pq}}. \nn
\eea
The above expression may be understood as representing a cylinder amplitude as a sum over disc amplitudes with various local or boundary operators inserted. If we compute the deviation using this expression the first sum cancels while the others remain with the replacement $F_{k,l,r,s} \rightarrow \Delta F_{k,l,r,s}$. The sum in which each term is proportional to $\cosh(\pi b n \sigma_2)$ arises from the poles in \eqref{Zintegralc} due to the factor of $\sinh(\frac{2\pi P}{b})$ in the denominator; it is precisely these terms which vanish when we take the double inverse Laplace transform with respect to the boundary cosmological constant. These terms have the interpretation of some form of descendant dual boundary length operator inserted on the boundary of a disc. On the other hand, the sum in which each term is proportional to $\cosh(\pi n \sigma_2/b)$ arises from the poles in \eqref{Zintegralc} due to the factor of $\sinh(2\pi b P)$ in the denominator; it is these terms which, under a double inverse Laplace transform with respect to the boundary cosmological constant, become the terms proportional to Bessel functions of the form $K_\frac{4n}{3}(L_2)$. Similarly, if we were instead to take the double inverse Laplace transform with respect to the dual boundary cosmological constant, the terms proportional to $\cosh(\pi n \sigma_2/b)$ would vanish and the terms proportional to $\cosh(\pi n b \sigma_2)$ would become terms proportional to $K_\frac{3n}{4}(L_2)$. We therefore see that the deviation computed from \eqref{vExplicitZ} can be understood as a sum of terms which vanish under a double inverse Laplace transform with respect to the boundary cosmological constant or its dual.

\chapter{FZZT Branes in the $(3,4)$ Minimal String at Higher Genus\label{ChapBrane}}


\section{Testing the Seiberg-Shih relations using a Matrix Model}
\label{MatrixSec}
In the last chapter we argued that the deviations for all cylinder amplitudes have property P3. Obviously it would be interesting to know whether this is true for deviations of any amplitude. Unfortunately amplitudes more complicated than the disc and cylinder have never been computed using the Liouville approach. An alternative is to use the matrix model formulation of minimal string theory in which the computation of amplitudes with arbitrary numbers of boundaries and handles is straightforward. 


The disadvantage of using the matrix model is that it gives the amplitude in a fully integrated and summed form and so it is hard to understand the structure of the amplitude in terms of continuum concepts such as states circulating in a loop. Furthermore, generally it is obscure how the graph labels map to the conformal field theory degrees of freedom and even more importantly, the matrix model appears to not contain all the boundary states present in the minimal string. Indeed, this was used as evidence in \cite{Seiberg:2003nm} in support of the Seiberg-Shih relation. However, there are special cases where the relation is manifest; we saw that the $(3,4)$ minimal string admits a formulation as a matrix model \eqref{2MMIsing} in which the matrices are directly related to the spin degrees of freedom. In this model it is straight forward to construct all conformal boundary states. This is the model we will study in this chapter. In recent work of \cite{Ishiki:2010wb,Bourgine:2010ja} a number of other boundary states of the $(p,q)$ minimal string have been constructed in the matrix model formulation and it would be interesting to extend our results to those cases.

In this chapter we will find it useful to change the definition of the coupling constants from those considered in \eqref{2MMIsing}. Instead we will use the matrix model defined by,
\bea
\label{2MMZ}
\Z=\int dM_1\,dM_2\exp\left(-\frac{N}{g}\,\Tr \left(- cM_1 M_2+\half(M_1^2+M_2^2)+\third(M_1^3+M_2^3)\right)\right)\label{ZXY}
\eea
The coupling between neighbouring spins is controlled by the parameter $c$ whereas $g$ controls the cost associated with adding more vertices to the graph. It is $g$ and $c$ we use to tune the matrix model to its critical point which for our matrix model we do by setting,
\bea
\label{gscale}
g = g_c(1-a^2 \eta \mu), \qquad c = c_c,
\eea
where $\eta$ is some constant which can be determined by comparing with the Liouville theory and the critical point is achieved by letting $a \rightarrow 0$\footnote{The constants $g_c$ and $c_c$ have the well known values $g_c=10c_c^3$ and $c_c = \frac{1}{27}({2\sqrt{7}-1})$.}. We will often refer to this limit as the scaling or continuum limit.

The conformally invariant boundary conditions of minimal CFTs are in one-to-one correspondence with the primary fields of the theory. For a CFT that describes the critical point of some discrete model the boundary conditions can sometimes be understood as universality classes of boundary conditions in the discrete model. As we approach the critical point of a discrete model all boundary conditions in a given class will flow to the same boundary condition in the CFT. In particular for the $(3,4)$ minimal string its three conformal boundary conditions are the continuum limit of the discrete configurations of boundary spins consisting of all spin $+$, all spin $-$, or free spins. It is these boundary conditions we want to implement in the matrix model.

The usual way of inserting a boundary in the matrix model is to compute the resolvent $W_{M_1}$ as it is a generating function for triangulations of discs. We will adopt the following notation for more general quantities,
\bea
\label{genresolve}
\lefteqn{W^{Q_1}{}_{F_1}{}^{Q_2}{}_{F_2\ldots;}{}^{Q_3}{}_{G_1}{}^{Q_4}{}_{G_2\ldots;\ldots}(x_1,x_2,\ldots;y_1,y_2,\ldots;\ldots)=}\nn\\
&&\avg{\frac{1}{N}\Tr\Big(\frac{Q_1(M_1,M_2)}{x_1-F_1(M_1,M_2)}\frac{Q_2(M_1,M_2)}{x_2-F_2(M_1,M_2)}\dots\right)\ldots \\ 
&&\frac{1}{N}\Tr\left((\frac{Q_3(M_1,M_2)}{y_1-G_1(M_1,M_2)}\frac{Q_4(M_1,M_2)}{y_2-G_2(M_1,M_2)}\dots\Big)\ldots}\nn
\eea
By tuning the $x_i, y_i, \ldots$ as we take the matrix model to its critical point we can extract continuum quantities corresponding to amplitudes with macroscopic boundaries. By way of example, for $W_{M_1}(x)$ we use \eqref{gscale} and set,
\bea
\label{xscale}
x = x_c(1-a^{d} \kappa \mu_B)
\eea
where $d$ is chosen to produce a non-trivial limit and $\kappa$ is chosen to agree with Liouville theory. As we let $a \rightarrow 0$, $W_{M_1}$ will have an expansion in powers of $a$ \footnote{When taking the scaling limit in the remainder of the chapter we will choose $\eta = 1$ and $\kappa = \sqrt{g_c/c_c}$. It should be kept in mind though that by changing $\eta$ and $\kappa$ we can renormalise $\mu_B$ and $\mu$.},
\bea
W_{M_1}(x_c(1-a^{d} \kappa \mu_B)) = \sum_i a^{d_i} W_i(\mu_B,\mu) + a^{d_W} \tW_{M_1}(\mu_B,\mu) + h.o.t \quad.
\eea
In this expression $W_i(\mu_B,\mu)$ is analytic in both arguments whereas $\tW_{M_1}$ is defined as being the first term non-analytic in $\mu_B$ and $\mu$. It is $\tW_{M_1}$ that corresponds to the continuum quantity and so in this case is the partition function of the continuum theory defined on a surface with a finite sized boundary with the boundary condition determined by which universality class the discrete quantities $\avg{\Tr M^n}$ belong to. This is slightly complicated by the fact that the resolvent actually corresponds to a boundary with a marked point and therefore must be integrated with respect to the boundary cosmological constant before comparing to Liouville theory. This motivates introducing the integrated quantity,
\bea
\omega({\mu_B}_1,...,{\mu_B}_n,\mu) \equiv \int \prod_i d{\mu_B}_i \tW({\mu_B}_1,...,{\mu_B}_n,\mu)
\eea
where we have allowed for more general amplitudes which have more than one boundary. The generalisation to amplitudes of the form \eqref{genresolve} is obvious.

For the matrix model \eqref{ZXY} it is easy to construct resolvents which generate boundary conditions that flow to the three different boundary conditions in the CFT:
\bea
\label{boundaryResolve}
W_{M_1}(x)&=&\avg{\frac{1}{N}\Tr\frac{1}{x-M_1}}, \nn \\
W_{M_2}(x)&=&\avg{\frac{1}{N}\Tr\frac{1}{x-M_2}},  \\
W_{M_1+M_2}(x)&=&\avg{\frac{1}{N}\Tr\frac{1}{x-(M_1+M_2)}}. \nn
\eea
The first two resolvents $W_{M_i}$ generate surfaces on which only $M_i$ vertices appear on the boundary. Since the $M_i$ vertices map directly to the spin degrees of freedom these flow to the fixed spin boundary conditions in the continuum limit. The final resolvent generates boundaries in which both types of vertices appear with equal weighting. This will therefore flow to the free spin boundary condition. We can parameterise these resolvents in the following way,
\bea
\label{genXresolvent}
W_X(x) = \avg{\frac{1}{N}\Tr\frac{1}{x-X}} \qquad X = \frac{(1 - \alpha)}{2}M_1-\frac{(1 + \alpha)}{2}M_2-(1 + c) \frac{\alpha}{2}.
\eea
This generates graphs with weighting corresponding to the Ising model on a random lattice but with a boundary magnetic field, to which the parameter $\alpha$ is related. Clearly, we can choose $\alpha=\pm 1$ to reproduce the resolvents $W_{M_i}$. By taking $\alpha$ to infinity we may also obtain the resolvent $W_{M_1+M_2}$. One method to compute the resolvent $W_X$ is to make the following change of variable in \eqref{ZXY},
\bea
\label{SXdef}
S = M_1+M_2+1+c, \qquad
X = \frac{(1 - \alpha)}{2}M_1-\frac{(1 + \alpha)}{2}M_2-(1 + c) \frac{\alpha}{2}.
\eea
The partition function in these new variables takes the form,
\bea
\label{ZSX}
\mathcal{Z}=\int [dS dX] \exp{\left[-\frac{N}{g}\Tr\left(X^2 S+\alpha XS^2+V(S)\right)\right]}
\eea
where,
\beq 
\label{VSX}
V(S)=\frac{1}{12}(1+3\alpha^2)S^3-\frac{c}{2}S^2+\quarter(3c-1)(1+c)S.
\eeq
Generic values of $\alpha$ (i.e. the Ising model on a random lattice in the presence of a boundary magnetic field) were investigated using matrix model techniques in \cite{Carroll:1995qd}, in which the disc amplitudes were calculated. The method employed in \cite{Carroll:1995qd} to solve the matrix model is combinatorial and so is not easily generalised to compute more complicated amplitudes and so we will not employ it here. However, it is clear that if we are only interested in amplitudes in which all the boundary conditions are of the free spin form we may set $\alpha = 0$ in \eqref{ZSX} to obtain the $O(1)$ model whose solution is well known. Similarly if we are only interested in amplitudes in which all the boundary conditions are of the fixed spin form we may simply use \eqref{2MMZ}. This is in fact the case for most of the results we require for the remainder of this section and so they can be obtained form the literature  \cite{Eynard:2002kg}\cite{Eynard:1992cn, Eynard:1995nv}. For more complicated amplitudes involving both fixed and free spin boundary conditions it is necessary to solve the matrix model \eqref{ZSX}; this may be done using loop equations and we give details of this in the next section.


\subsection{Loop equations for SX Matrix Model}
\label{MMloops}
The resolvents for the matrix model \eqref{ZSX} can be obtained by generating an appropriate set of loop equations. We begin by obtaining the resolvents $W_X$ and $W_S$. Consider the following change of variables in the matrix model,
\bea
X \rightarrow X+\epsilon\frac{1}{z-S} 
\eea
this gives the loop equation,
\bea
\label{LE1d}
W^X{}_S(z)=\frac{\alpha}{2} - \frac{\alpha z}{2} W_S(z)
\eea
Now consider,
\bea
\label{LE1a}
X &\rightarrow X+\epsilon \left( \frac{1}{x-X} \frac{1}{z-S}+\frac{1}{z-S} \frac{1}{x-X} \right)\\
\label{LE1b}
S &\rightarrow S+\epsilon \left( \frac{1}{x-X} \frac{1}{z-S}+\frac{1}{z-S} \frac{1}{x-X} \right)\\
\label{LE1c}
X &\rightarrow X+\epsilon \left( \frac{1}{z-S}\frac{1}{x-X}\frac{1}{-z-S}+h.c \right)
\eea
Transformation \eqref{LE1a} together with \eqref{LE1b} and \eqref{LE1d} gives,
\bea
\label{LE1}
-\frac{1}{g} \Omega^P_X(z,x)=\Omega_S(z,x) \Omega_{SX}(z,x)+\frac{1}{N^2}\left(W_{SX;S}(z,x;z)-\alpha W_{SX;X}(z,x;x) \right)
\eea 
where we have introduced the following functions,
\bea \nonumber
&&\Omega_S(z,x) = W_S(z)-\alpha W_X(x) - \frac{1}{g} (x^2-\alpha^2 z^2 + V'(z))\\ 
\label{omegadef}
&&\Omega_{SX} = W_{SX}(z,x)+\frac{1}{g}(x-\frac{\alpha z}{2})\\ \nonumber 
&&\Omega^P_X =W^P_X(z,x)+\frac{\alpha}{2}+\frac{1}{g}(x-\frac{\alpha z}{2})(x^2-\alpha^2 z^2 + V'(z))
\eea
where $P=Q-\alpha^2 S+\alpha x -\frac{3\alpha^2 z}{2}$ and $Q = (V'(z)-V'(S))/(z-S)$.

Finally, the third transformation \eqref{LE1b} gives,
\bea
\label{LE2}
\Omega_{SX}(z,x) \Omega_{SX}(-z,x)=&&\frac{1}{g} \left(W_S(z)+W_S(-z)+\alpha W_X(x)+\frac{1}{g}(x^2-\frac{\alpha^2 z^2}{4}) \right)\\ \nn
-&&\frac{1}{N^2}W_{SX;SX}(z,x;-z,x)
\eea
Eliminating $\Omega_{SX}(z,x)$ between this and (\ref{LE1}) gives,
\bea
\nonumber
&&\frac{1}{g}\Omega^P_{X}(z,x) \Omega^P_{X}(-z,x)=\Omega_S(z,x)\Omega_S(-z,x) \left(W_S(z)+W_S(-z)+\alpha W_X(x)+\frac{1}{g}(x^2-\frac{\alpha^2 z^2}{4}) \right)\\ 
&&-\frac{1}{N^2} \Big(g \Omega_S(z,x)\Omega_S(-z,x)W_{SX;SX}(z,x;-z,x) + \Omega^P_{X}(z,x) W_{SX;S}(-z,x;-z) \label{LEwithN} \\ \nonumber
&&-\alpha \Omega^P_{X}(z,x) W_{SX;X}(-z,x;x) +\Omega^P_{X}(-z,x)\Big(W_{SX;S}(z,x;z)-\alpha W_{SX;X}(z,x;x)\Big) \Big)
\eea
By substituting in the large $N$ expansion for all resolvents appearing in this expression we can obtain a recursive definition for the genus $h$ resolvents. The genus zero resolvents satisfy:
\bea
\label{mastereqn}
\frac{1}{g}{\Omega^P_{X}}^{(0)}(z,x) {\Omega^P_{X}}^{(0)}(-z,x)=&& \\ \nn
{\Omega_S(z,x)}^{(0)}{\Omega_S(-z,x)}^{(0)}&&\left({W_S}^{(0)}(z)+{W_S}^{(0)}(-z)+\alpha {W_X}^{(0)}(x)+\frac{1}{g}(x^2-\frac{\alpha^2 z^2}{4}) \right)
\eea

This is an important equation and all subsequent loop equations will be related to this through derivative-like operations such as the loop insertion operator. 
It is important to note that the LHS is polynomial in $z$ and so if we expand the RHS in $x$ about any point, the resulting Laurent expansion must have coefficients that are equal to a polynomial function of $z$. A convenient point to choose for this expansion is infinity as the definition of the resolvent then  coincides with the Laurent expansion, whose coefficients are the yet to be determined quantities, $\avg{ \Tr{X^n}}$. Thus we produce a number of equations containing ${W_S}^{(0)}(z)$, 
\bea
\label{eqngen}
\oint_{C_{\infty}} \frac{dx}{2\pi i x^{n+1}} {\Omega_S(z,x)}^{(0)}&&{\Omega_S(-z,x)}^{(0)} \times \\ \nn &&\left({W_S}^{(0)}(z)+{W_S}^{(0)}(-z)+\alpha {W_X}^{(0)}(x)+\frac{1}{g}(x^2-\frac{\alpha^2 z^2}{4})\right) = p_n(z)
\eea
where $C_{\infty}$ is a contour around infinity, $p_n(z)$ is a polynomial in $z$ with a finite number of unknown coefficients, $n$ is an integer and the integral merely picks out the $n$th coefficient of the expansion. From the large $x$ behaviour of $W_X(x) \sim \frac{1}{x}$ it is easy to show that for $n>6$ the LHS is zero and for $n\geq 4$ the LHS yields an expression with no dependence on ${W_S}^{(0)}$. The non-trivial cases occur for $n=0$ and $n=2$ which generate enough equations to solve for ${W_S}^{(0)}(z)$. The large $z$ expansion of equations obtained for $n<0$ give relations among the quantities $\avg{\Tr{S^n}}$. The explicit equation obtained for ${W_S}^{(0)}(z)$ is,
\bea
\label{WSeqn}
{W_S}^{(0)}(z)^3&+& \frac{1}{g}\Big(\frac{3 z^2 \alpha^2}{4} + V'(-z) - 2 V'(z)\Big) {W_S}^{(0)}(z)^2\\
&+& \frac{1}{4 g^2}\Big(V'(-z) \Big(3 z^2 \alpha^2 - 4V'(z)\Big)-3 z^2 \alpha^2 V'(z) + 4 V'(z)^2 + P_2(z)\Big){W_S}^{(0)}(z) \nn\\
&+&\frac{1}{4 g^3}(P_0(z) + (z^2 \alpha^2 - V'(z)) P_2(z)) = 0\nn
\eea
where,
\bea
\label{Pdef}
P_i(z)=-p_i(z)+(\textrm{Coefficient of }({W_S}^{(0)}(z))^0\textrm{ in \ref{eqngen} for } n=i)
\eea
The unknown constants in $p_n(z)$ are not all independent and they may be found in terms of the constants $\langle \Tr{S^n} \rangle$ by expanding the RHS of \eqref{eqngen} about $z=\infty$ and equating $p_n$ to the polynomial part of the Laurent expansion. The requirement that the singular part of the expansion vanishes again gives relations between the quantities $\avg{\Tr{S^n}}$. 

To find ${W_X}^{(0)}$, we again consider \eqref{mastereqn}, however we now know ${W_S}^{(0)}$ in terms of a finite number of unknown constants. On the LHS we have the function ${\Omega^P_{X}}^{(0)}(z,x)$ where $P$ is a polynomial in $z$ and $S$ as defined in \eqref{omegadef}, whose highest power for both $z$ and $S$ is $d-1$ and so it contains $d$ unknown functions of the form $W^{S^n}_X(x)$ for $0 \leq n <d$.

We can generate a system of equations for these resolvents by again expanding both sides of the loop equation about infinity but this time in terms of $z$. This results in,
\bea
\label{WXgen}
&&\oint_{C_{\infty}} \frac{dz}{z^{n+1}}\bigg[\frac{1}{g}{\Omega^P_{X}}^{(0)}(z,x) {\Omega^P_{X}}^{(0)}(-z,x)-\\ \nonumber
&&{\Omega_S(z,x)}^{(0)}{\Omega_S(-z,x)}^{(0)} \left({W_S}^{(0)}(z)+{W_S}^{(0)}(-z)+\alpha {W_X}^{(0)}(x)+\frac{1}{g}(x^2-\frac{\alpha^2 z^2}{4}) \right)\bigg]=0
\eea

Again we may vary $n$ to generate different equations, however we now generate non-trivial equations for all $0 \leq n <d$, which is enough to solve for all unknown resolvent functions appearing in ${\Omega^P_{X}}^{(0)}(z,x)$. However, unlike ${{W_S}^{(0)}}$ for which we obtained an equation for general $V(x)$, we were unable to do this for the equations for $W^{S^n}_X(x)$. We also note that for the equation giving ${W_X}^{(0)}$, the highest power of ${W_X}^{(0)}$ appearing depends on the order of $V(x)$.

The approach thus outlined gives an algorithmic solution to this matrix model with an arbitrary potential and results in algebraic equations for all resolvents.

We now apply the above procedure to the case when the potential is \eqref{VSX}. There are two unknown constants in $P_{0}(z)$ and $P_{2}(z)$ corresponding to $\langle \Tr{S} \rangle$ and $\langle \Tr{S^2} \rangle$. Their values are determined by the requirement that \eqref{WSeqn} is a genus zero curve - which, in this case, is equivalent to the one-cut assumption. 

For ${W_{X}}^{(0)}$ we can generate two equations from \eqref{WXgen}, for $n=0$ and $n=1$, containing ${W_{X}}^{(0)}$ and ${W^S_{X}}^{(0)}$. The equation for ${W_{X}}^{(0)}$ resulting from the elimination of ${W^S_{X}}^{(0)}$ has a number of interesting properties. Generically it is quartic in ${W_{X}}^{(0)}$, however it reduces to a cubic for $\alpha = \pm 1$ and it reduces to a quadratic of $({W_{X}}^{(0)}+ \textrm{polynomial in } x)^2$ for $\alpha=0$. Because of this the scaling limit for generic $\alpha$ does not hold for these particular values of $\alpha$ and must be taken separately.

As was discussed before, the resolvents corresponding to the spin $+$ and $-$ Cardy states are ${W_{M_1}}^{(0)}$ and ${W_{M_2}}^{(0)}$ respectively, which may be computed directly from \eqref{ZXY} and are clearly identical. However a faster method, since we already have ${W_{X}}^{(0)}$, is to set $\alpha=-1$ implying $X=M_1+\half(1+c)$. 

It is easy to generate loop equations for a general 2-loop amplitudes of the form $W_{S;H}$ and $W_{X;H}$ where $H$ is a string of $S$ and $X$ matrices of length $\mathcal{N}_H$. We can write $H$ as, $H=\prod^{\mathcal{N}_H}_{n=1}\chi_n$, where $\chi_n$ is a matrix defined by,
\begin{equation}
\chi_n = \left\{
\begin{array}{c l}
X, & n\in\C{I}^H_X \\
S, & n\in\C{I}^H_S
\end{array}
\right .
\end{equation}
and $\mathcal{I}^H_X$ and $\mathcal{I}^H_S$ are disjoint indexing sets. For such a product, the notation $H_{(i,j)}$ is defined as ,
\beq
H_{(i,j)}=\prod^j_{n=i}\chi_i
\eeq
\noindent The loop equations for $W_{S;H}$, can be generated by the following changes of variables.
\bea
\label{LE4d}
X &&\rightarrow X+\epsilon\frac{1}{z-S} \Tr{H(S,X)} \\
\label{LE4a}
X &&\rightarrow X+\epsilon \left( \frac{1}{z-S}\frac{1}{x-X}+\frac{1}{x-X}\frac{1}{z-S} \right) \Tr{H(S,X)} \\
\label{LE4b}
S &&\rightarrow S+\epsilon \left( \frac{1}{z-S} \frac{1}{x-X}+\frac{1}{x-X} \frac{1}{z-S} \right) \Tr{H(S,X)} \\
\label{LE4c}
X &&\rightarrow X+\epsilon \left( \frac{1}{z_1-S}\frac{1}{x-X}\frac{1}{z_2-S}+h.c \right) \Tr{H(S,X)}
\eea
The loop equations generated by $X\rightarrow X+\epsilon F(X,S)\Tr{H}$, may by obtained from the loop equations generated by $X\rightarrow X+\epsilon F(X,S)$ by the following procedure,
\bea
\label{looprules}
W_{{G_1}(X,S)} &&\rightarrow W_{{G_1}(X,S);H} \nn \\ 
W_{{G_1}(X,S)} W_{{G_2}(X,S)} &&\rightarrow W_{{G_1}(X,S)} W_{{G_2}(X,S);H}+W_{{G_2}(X,S)} W_{{{G_1}(X,S)};H} \nn \\ 
\frac{1}{N^2}(...)&& \rightarrow \frac{1}{N^2}(...) + \sum_{i \in \C{I}^H_X} W_{H_{(1,i)}F(X,S) H_{(i,\C{N}_H)}}(z)
\eea
where ${G_i}(X,S)$ is any function of $X$ and $S$. An equivalent set of rules applies for loops equations generated by $S\rightarrow S+\epsilon F(X,S)\Tr{H}$. The resulting loop equations can then be solved by using the same method as used to solve the one-loop amplitude equations. For example, the two loop amplitude $W_{S;S}(z;u)$ may be obtained by choosing $H=(u-S)^{-1}$ in \eqref{LE4d}, \eqref{LE4a}, \eqref{LE4b} and \eqref{LE4c}. Applying the rules in \eqref{looprules} we get,
\bea
-\frac{1}{g}\Omega^P_{X;S}(z,x;u)&=&\Omega_{SX;S}(z,x;u)\Omega_{S}(z,x)+ \nn \\ 
&&\Omega_{SX}(z,x)\Omega_{S;S}(z,x;u)+\partial_u \left[\frac{W_{SX}(u)-W_{SX}(z)}{u-z}\right]
\eea
and
\bea
\Omega_{SX}(z,x)\Omega_{SX;S}(-z,x;u)&+&\Omega_{SX;S}(z,x,u)\Omega_{SX}(-z,x) = \nn \\
&&\frac{1}{g} \left(W_{S;S}(z;u)+W_{S;S}(-z;u)+\alpha W_{X;S}(x;u)\right)
\eea
By eliminating $\Omega_{SX;S}(z,x,u)$ we obtain an equation relating $W_{S;S}(z;u)$, $W_{S;S}(-z;u)$ and $W_{X;S}(x;u)$. If we arrange the resulting equation so that the LHS is polynomial in $z$, we may expand the RHS in $x$ about $x=\infty$ to generate a number of equations for $W_{S;S}(z;u)$ and $W_{S;S}(-z;u)$. This is exactly the same procedure we used to compute the disc amplitudes in which the resulting expression contained a number of unknown constants, which for the equal potential case were $\avg{\Tr{S}}$ and $\avg{\Tr{S^2}}$. However here we get an expression for $W_{S;S}(z;u)$ in terms of unknown functions of the form $W^{S}{}_{;S}(u)$ and $W^{S^2}{}_{;S}(u)$. These unknown functions may be found by computing the expansion of $W_{S;S}$ at $z=\infty$ and, by the symmetry of $W_{S;S}$, equating it to the expansion at $u=\infty$. The expressions for $W^{S}{}_{;S}(z)$ and $W^{S^2}{}_{;S}(z)$ then depend on a number of unknown constants of the form, $\avg{\Tr{S^m}\Tr{S^n}}_c$, not all of which are independent, as can be seen by considering the large $z$ expansion of the $W^{S}{}_{;S}(z)$ and $W^{S^2}{}_{;S}(z)$. In the end the independent unknown quantities correspond to $\avg{\Tr{S}\Tr{S}}_c$ and $\avg{\Tr{S}\Tr{S^2}}_c$ which can be fixed by requiring that $W^{S}{}_{;S}(z)$ has no singularities besides those at the branch points of $W_S$. The loop equations then give $W_{X;S}$ in terms of $W_{S;S}$. 

In order to compute the $1/N$ corrections it is clear from \eqref{LEwithN} that $W_{SX;S}$, $W_{SX;X}$ and $W_{SX;SX}$ need to be calculated. The calculation of $W_{SX;X}$ and $W_{SX;SX}$, follows a similar procedure however the resulting loop equations contain new amplitudes of the form $W_{SX..SX}$. The calculation of these new quantities does not pose a great difficulty; appropriate changes of variables give loop equations very similar to the ones presented previously. We include the necessary change of variables in Appendix \ref{MoreLoopEqns}.

Once these quantities have been calculated and substituted in \eqref{LEwithN}, we may again expand both sides around $x=\infty$ and then $z=\infty$ to find an expression for ${W_S(z)}^{(1)}$ and ${W_X(x)}^{(1)}$. Again this procedure introduces a new unknown constant corresponding to the $\frac{1}{N}$ correction to $\avg{\Tr{S}}$ which may be determined by requiring that ${W_S(z)}^{(1)}$ does not possess any poles besides those at the branch points of ${W_S(z)}^{(0)}$.

\subsection{Disc, Cylinders and Disc-with-Handles.}
To take the scaling limit of $W_X$ at $\alpha = -1$ we may use the results of \cite{Eynard:2002kg} or the results in the last section. The critical value of $x$ is $x=-c$ and the solution to the loop equation gives, in the scaling limit,
\beq \tW_{M_1}(\mu_B,\mu)=-\frac{1}{2^{{\frac{5}{3}}}5^\frac{1}{3}c}\left(\left(\mu_B+\sqrt{\mu_B^2-\mu}\right)^\frac{4}{3}+ \left(\mu_B-\sqrt{\mu_B^2-\mu}\right)^\frac{4}{3}\right)\label{WX}.\eeq
For ${W_S}^{(0)}$ we may use \cite{Eynard:1992cn, Eynard:1995nv} or solve \eqref{WSeqn}. The critical value of $x$ is at $x=0$ and computing the scaling limit of ${W_S}^{(0)}$ we get,
\beq
\tW_{M_1+M_2}(\mu_B,\mu)=-\frac{1}{2^{{\frac{5}{3}}}5^\frac{1}{3}c}\left(\left(\frac{\mu_B}{\sqrt{2}}+\sqrt{\frac{\mu_B^2}{2}-\mu}\right)^\frac{4}{3}+ \left(\frac{\mu_B}{\sqrt{2}}-\sqrt{\frac{\mu_B^2}{2}-\mu}\right)^\frac{4}{3}\right).\label{WS}
\eeq
Here we run into one final technicality. For a resolvent that produces graphs with some configuration of spins on the boundary $W_{F(M_1,M_2)}$, there exist an entire family of resolvents which produce the same configuration of boundary spins but with different overall weighting
\beq
W_{\frac{1}{\lambda} F(M_1,M_2)}(x) = \lambda W_{F(M_1,M_2)}(\lambda x),
\eeq
where $\lambda$ is a constant. Changing $\lambda$ has the affect of renormalising $\mu_B$, however we want the same normalisation for all continuum quantities and so this fixes $\lambda$ to be $\sqrt{2}$ for $W_{M_1+M_2}$. Therefore using \eqref{WS} and \eqref{WX}, we have that 
\beq
\omega_{(M_1+M_2)/\sqrt{2}}\,(\mu_B,\mu)=\sqrt{2}\omega_{M_1}(\mu_B,\mu).
\eeq
This is precisely the relation between these two disc amplitudes found in the continuum calculation.

One may wonder what the scaling limit of ${W_{X}}^{(0)}$ is for generic values of $\alpha$; given that there are only a finite number of Cardy states the expectation is that there should be no new non-trivial scaling limits. For generic $\alpha$ the critical values of $x$ occur at $\frac{2c}{ \alpha-1}$ and $\frac{2c}{ \alpha+1}$. Taking the scaling limit, for both critical values of $x$, gives
\beq
\omega_{-X/\alpha}(\mu_B,\mu)=\omega_{M_1}(\mu_B,\mu)\label{WXa}
\eeq
which confirms our expectation. We have therefore demonstrated that we can build all the boundary states of the $(3,4)$ minimal string. This will be confirmed further by reproducing the cylinder amplitudes shortly. The nature of the scaling limit when $\alpha = 0$ will be addressed in the Section \ref{DualBrane}.

The one case in which we are interested which cannot be found in the literature is $W_{S;X}$. The computation, using the formalism of the preceding section, results in
\bea
\label{WSXeqn}
\omega_{S;X}({\mu_B}_1,{\mu_B}_2,\mu)=-\frac{1}{10 c_c^2} \log(-1 + 2 z_1^2 + 2 \sqrt{2} z_1 z_2 + 2 z_2^2)
\eea
for the scaling form of $W_{S;X}$, which is in exact agreement with the Liouville result (up to renormalisations of $\mu_B$ and $\mu$). Of course the other cylinder amplitudes may computed in this matrix model or using \cite{Eynard:2002kg}\cite{Eynard:1995nv} and we find the scaling forms to be,
\bea
&&\omega_{X;X}({\mu_B}_1,{\mu_B}_2,\mu)=\frac{1}{10 c_c^2} \ln \frac{z_1-z_2}{x_1-x_2},\\
&&\omega_{X;Y}({\mu_B}_1,{\mu_B}_2,\mu)= -\frac{1}{10 c_c^2} \ln z_1+z_2,\\
&&\omega_{S;S}({\mu_B}_1,{\mu_B}_2,\mu)=\frac{1}{10 c_c^2} \ln \frac{z_1-z_2}{(z_1+z_2)(x_1-x_2)}, 
\eea
again in agreement with the Liouville calculations. The fact that these amplitudes agree with the Liouville calculations \emph{including} terms that were classified in \cite{Kutasov:2004fg} as non-universal suggests they really should have a physical meaning (and shouldn't be thrown away). This issue was raised in \cite{Belavin:2010pj}.

In order to test if P3 is a property of all deviations we need to examine more complicated amplitudes. The $1/N$ correction to the disc amplitude is the amplitude for a disc-with-handle. These computations have already been done in \cite{Eynard:1995nv} and \cite{Eynard:2002kg}, the result being,
\bea
\label{DWHresolv}
{\omega_{M_1}^{(1)}(\mu_B,\mu)}={\omega_{M_2}^{(1)}(\mu_B,\mu)}&=&\frac{1}{2^\frac{1}{3} 5^\frac{2}{3} 36c_c }\frac{z (7 - 24 z^2 + 48 z^4)}{(-1 + 4 z^2)^3}\mu^{-\frac{7}{6}},\\
{\omega_{(M_1+M_2)/\sqrt{2}}^{(1)}(\mu_B,\mu)}&=&\frac{\sqrt{2}}{2^\frac{1}{3} 5^\frac{2}{3} 36c_c }\frac{z (5 - 24 z^2 + 48 z^4)}{(-1 + 4 z^2)^3} \mu^{-\frac{7}{6}}.
\eea
If we now compute the deviations we find,
\bea
\label{DeltaHandle}
\Delta{\omega_{M_2}^{(1)}(\mu_B,\mu)} &=& 0,\\
\Delta{\omega_{(M_1+M_2)/\sqrt{2}}^{(1)}(\mu_B,\mu)} &=& \frac{2^\frac{1}{6} 4}{5^\frac{2}{3} 9 c_c} \frac{z (-3 + 41 z^2 - 208 z^4 + 416 z^6 - 512 z^8 + 256 z^{10})}{ (-1 + 2 z)^3 (1 + 2 z)^3 (1 - 16 z^2 + 16 z^4)^3} \mu^{-\frac{7}{6}}. \nn
\eea

We now want to consider if $\Delta{\omega_{(M_1+M_2)/\sqrt{2}}^{(1)}(\mu_B,\mu)}$ possesses P3.  For it to possess P3 we require that it can be expressed as,
\beq
\Delta{\omega_{(M_1+M_2)/\sqrt{2}}^{(1)}(\mu_B,\mu)}=A(x)+B(\tilde {x}),
\eeq
where $\C{L}^{-1}_x[A]$ and $\C{L}^{-1}_{\tilde{x}}[B]$ \footnote{We denote in the subscript the variable the inverse Laplace transform is taken with respect to.} have point-like support. Since the definition of the deviation is a linear combination of a number of amplitudes, the functions $A$ and $B$ will also be a linear combination of contributions from each amplitude. Since the theory is unitary the contributions to $A$ and $B$ from each amplitude will be positive. From the expressions in \eqref{DWHresolv} we see that each amplitude goes as at most $\sim \frac{1}{z}$ for $z \rightarrow \infty$ and therefore any contribution to $A(x)$ or $B(\tilde{x})$ must go as at most $\sim x^{-1/3}$ or $\sim {\tilde{x}}^{-1/4}$ respectively. We therefore may conclude that $A(x) \sim x^{-1/3}$ and $B(\tilde{x}) \sim {\tilde{x}}^{-1/4}$ as $x \rightarrow \infty$ and hence that their inverse Laplace transforms exist. Furthermore, given that as $z \rightarrow \infty$,
\beq
\label{dwhasymp}
\Delta{\omega_{(M_1+M_2)/\sqrt{2}}^{(1)}(\mu_B,\mu)} \sim \frac{1}{z^7}
\sim\frac{1}{x^{7/3}}\sim\frac{1}{\tilde{ x}^{7/4} },
\eeq
we see that $A$ and $B$ can not fall off faster than $A(x)= a x^{-7/3} +o(x^{-7/3})$ or $B(x)=b{x^{-7/4}} +o(x^{-7/4})$. However it is a well known theorem that a function with only point-like support may be expressed as a sum of derivatives of $\delta$-functions;
\beq
f(t) = \sum^\infty_{m=1}\sum^\infty_{n=1} a_{n,m} \delta^{(n)}(t - t_{n,m}).
\eeq
Such a function will have a Laplace transform of
\beq
\label{inverseLaplace}
\C{L}[f](s) = \sum^\infty_{m=1}\sum^\infty_{n=1} a_{n,m} s^n e^{-s t_{n,m}}.
\eeq
Given that we require that $\C{L}^{-1}_x[A]$ and $\C{L}^{-1}_{\tilde{x}}[B]$ are point-like supported this implies $A(x)$ and $B(x)$ have an asymptotic behaviour as $x \rightarrow \infty$ inconsistent with \eqref{dwhasymp} and hence the deviation can not possess P3.

\subsection{Dual Branes}
\label{DualBrane}
In \cite{Seiberg:2003nm} the two matrix model description of the $(p,q)$ minimal string was considered and it was shown that the continuum limit of the resolvent for one of the matrices was the disc amplitude with a fixed spin boundary. The continuum limit of the resolvent for the other matrix gave the dual brane amplitude. In our matrix model \eqref{ZSX} it is not obvious if we can construct a correlation function which gives the disc with dual brane boundary conditions. In fact it is present in the form,
\beq
W_A(x) = \avg{\frac{1}{N}\Tr\frac{1}{x-(M_1-M_2)}}
\eeq
which may be obtained from the general $X$ resolvent \eqref{genXresolvent} by setting $\alpha = 0$. The resulting loop equation is easily solved as it is quadratic in $x^2$, however the scaling limit is obtained using a different scaling of $x$,
\beq x={x}_c+a^\frac{4}{3}\tilde{\mu}_{B}\sqrt{g_c/c},\label{dualxscale}\eeq
which results in
\beq \tW_{k(M_1-M2)}(\tilde{\mu}_B,\mu)=-\frac{1}{2^{{\frac{5}{3}}}5^\frac{1}{3}c}\left(\left(\tilde{\mu}_B+\sqrt{\tilde{\mu}_B^2-\tilde{\mu}}\right)^\frac{3}{4}+ \left(\tilde{\mu}_B-\sqrt{\tilde{\mu}_B^2-\tilde{\mu}}\right)^\frac{3}{4}\right),\eeq
where $k$ is a numerical constant. If we identify $\tilde{\mu}_B$ in this expression with the dual boundary cosmological constant then this is precisely the amplitude expected for the dual brane obtained from the spin $+$ or spin $-$ boundary state. This explains the $\alpha=0$ possibility we left unexplained earlier in Section \ref{MatrixSec}.

\section{The Seiberg-Shih Relation and Local Operators}
\label{Localstates}
Suppose that the boundaries considered in the previous sections could be expressed as a sum over local operators in the theory, such as,
\beq
\label{localopexp}
\bra{k,l;\sigma_1} \sim \sum^\infty_{n=1} f_n(k,l,\sigma_1) \bra{O_n},
\eeq
where $O_n$ represents some generic local operator and $\sim$ means equivalence up to terms that diverge as $\sigma_1 \rightarrow \infty$. If this were true then the effects of the Seiberg-Shih transformations on a boundary would reduce to studying their effect on the above series. The expansion of a boundary in terms of local operators was investigated in \cite{Moore:1991ir} in which it was used to compute the one and two point correlation numbers of tachyon operators on the sphere for the $(2,2p+1)$ minimal strings. Later it was shown \cite{Anazawa:1997zs} that such a technique could also compute the three point function correctly for the $(p,p+1)$ models.

One technical point raised in \cite{Moore:1991ir} and \cite{Anazawa:1997zs} was that the boundary state could only be expressed as a sum over local operators if we included operators not in the BRST cohomology of \cite{Lian:1992aj}. That the operators in the cohomology of \cite{Lian:1992aj}, which we refer to as LZ-operators, are not all the physical operators of the minimal string is easily verified by noting that the cohomology does not contain the boundary length operator $\oint_{\partial M} dx e^{b\phi}$. Such an operator can be inserted on the disc by differentiating with respect to the boundary cosmological constant, giving the result
\beq
\label{boundaryLOp}
\avg{\oint_{\partial M} dx e^{b\phi}} =\C{A}(p,q) U_{k-1}\left(\cos \frac{\pi q}{p} \right) U_{l-1}\left( \cos \frac{\pi p}{q} \right)\cosh\left(\frac{\pi q \sigma}{\sqrt{pq}}\right),
\eeq
where $\C{A}(p,q)$ is a constant of proportionality that only depends on $p$ and $q$. Indeed there exists a whole family of such operators with Liouville charge $b\frac{p+q-n q}{2p}$ where $n \in \BB{Z}^+$ which we will denote by $\C{U}_{nq}$. 

In order to compute the coefficients $f_n(k,l,\sigma_1)$ appearing in \eqref{localopexp} we need to know the amplitude for any local operator inserted in a disc with a $(k,l)$ FZZT boundary condition. For the tachyon operators the amplitude is \eqref{tachyonDisk}. For the non-tachyon operators we have argued using the mini-superspace approximation and the exact conformal bootstrap on the disc that the functional form of the one-point disc amplitude is \eqref{LZdisc}. Finally, we have seen in \eqref{boundaryLOp} that for the first of the non-LZ operators the disc function has the functional form expected from a direct application of the mini-superspace approximation and also the exact conformal bootstrap. We therefore have,
\beq
\braket{\C{U}_{nq}}{k,l;\sigma} \propto \cosh \left(\frac{\pi n q \sigma}{\sqrt{pq}} \right).
\eeq
If we apply the Liouville duality transformation to the above expression we get
\beq
\braket{\C{U}_{np}}{k,l;\sigma}_{\mathrm{dual}} \propto \cosh \left(\frac{\pi n p \sigma}{\sqrt{pq}} \right),
\eeq
where we have introduced the operators $\C{U}_{np}$ which have Liouville charge $b\frac{p+q-n p}{2p}$. The expansion \eqref{localopexp} can now be given a more concrete form,
\beq
\label{localopexp2}
\bra{k,l;\sigma_1} \sim \sum^\infty_{n=1} f_n(k,l,\sigma_1) \bra{\C{U}_n}.
\eeq

Having introduced the necessary technical results we now want to prove the following claim,
\newline
\newline
{\emph{
If all the FZZT branes in the $(p,q)$ minimal string can be replaced by local operators then the deviation from the Seiberg-Shih relations is caused only by the non-LZ operators.
}}
\newline
\newline
Consider the cylinder amplitude $\C{Z}(k,l;\sigma_1|r,s;\sigma_2)$. If we replace one of the boundaries by a sum of local operators using \eqref{localopexp2} then we get,
\beq
\label{opexpinCyl}
\C{Z}(k,l;\sigma_1|r,s;\sigma_2) = \sum_{n=1}   f_n(k,l,\sigma_1) \braket{\C{U}_n}{r,s;\sigma_2}.
\eeq
By comparison of \eqref{vExplicitZ} with \eqref{opexpinCyl}, we find for the coefficient of the LZ operators,
\bea
\label{fkleqn}
f_n(k,l;\sigma_1) = \frac{A_C}{A_D(n)}\frac{1}{n} U_{k-1}(\cos\frac{\pi n}{p}) U_{l-1}(\cos\frac{\pi n}{q})\mu^{-\frac{n}{2p}}e^{-\frac{\pi n \sigma_1}{\sqrt{pq}}} \equiv \tilde{f}_n(k,l)e^{-\frac{\pi n \sigma_1}{\sqrt{pq}}},
\eea
where $A_C$ is just a numerical constant, $A_D(n) = \Gamma(1-\frac{n}{q})\Gamma(-\frac{n}{p}) \left( \sin(\frac{\pi n}{q}) \sin(\frac{\pi n}{p}) \right)^{1/2}$ and $n$ is not a multiple of $p$ or $q$. Furthermore, note that the LZ operator coefficients satisfy
\beq
f_n(k,l;\sigma_1) = \sum^{k-1}_{a=-(k-1),2} \sum^{l-1}_{b=-(l-1),2} \tilde{f}_n(1,1)e^{-\frac{\pi n}{\sqrt{pq}}\left( \sigma_1 + i\frac{qb+pa}{\sqrt{pq}}
\right).} 
\eeq
This shows that if all boundary states admit an expansion in terms of local operators then the Seiberg-Shih relations transform the coefficients of the LZ-operators correctly. Hence any deviation from the Seiberg-Shih relations must come from the non-LZ operators.
However, it is the non-LZ operators one would expect to give deviations compatible with P3 as they correspond to boundary operators and their duals. One might then suspect that in fact the boundary states may not be expressed as a sum over local operators. We will now see this suspicion is borne out by examination of the cylinder amplitudes. 

If we note that for cylinder amplitudes, $\C{Z}(k,l;\sigma_1|r,s;\sigma_2) = \C{Z}(k,l;\sigma_2|r,s;\sigma_1)$ then together with \eqref{localopexp2}, this implies,
\beq
\sum^\infty_{n=1} f_n(k,l,\sigma_1) \braket{\C{U}_n}{r,s;\sigma_2} = \sum^\infty_{n=1} f_n(r,s,\sigma_1) \braket{\C{U}_n}{k,l;\sigma_2}.
\eeq
Since $\braket{\C{U}_n}{r,s;\sigma} = A^{(r,s)}_n g_n(\sigma)$ where $g_n$ form a set of linearly independent functions then this implies
\beq
f_n(k,l,\sigma) = A^{(k,l)}_n h_n(\sigma),
\eeq
where $h_n$ is some function. We conclude that if all states admit an expansion of the form \eqref{localopexp2} then cylinder amplitudes can be expressed as,
\beq
\label{factorform}
\C{Z}(k,l;\sigma_1|r,s;\sigma_2) = \sum^\infty_{n=1} A^{(k,l)}_n A^{(r,s)}_n h_n(\sigma_1) g_n(\sigma_2).
\eeq
We now want to see if the $\sigma$-independent coefficients of each term in \eqref{vExplicitZ} has the above property. Consider the coefficients appearing in the second sum of \eqref{vExplicitZ},
\bea
F_{k,l,r,s}(\frac{\pi n}{q}) &&\frac{\sin(\pi n /q)}{\sin(\pi p n/q)} = \\
&&\sum^{\lambda_p(k,r)}_{\eta=|k-r|+1,2} \sum^{\lambda_q(l,s)}_{\phi=|l-s|+1,2}
 (-1)^{\eta n} U_{\phi-1}\left( \cos \frac{\pi p n \phi}{q} \right) \left(\Theta(q\eta-p \phi)-\frac{\eta}{p}\right), \nn
\eea
where $\lambda_p(k,r)=\mathrm{min}(k+r-1,2p-1-k-r)$ and $n\neq 0 \mod q$. We want to know whether this can be written in the form $A^{(k,l)}_n A^{(r,s)}_n$. That this is not true in general is shown by considering the case $(p,q)=(4,5)$, for which we have,
\bea
&&F_{2,2,2,1}(\frac{\pi n}{q}) \frac{\sin(\pi n /q)}{\sin(\pi p n/q)}=0,  \\
&&F_{1,1,2,1}(\frac{\pi n}{q}) \frac{\sin(\pi n /q)}{\sin(\pi p n/q)}=\half, 
\eea
which is clearly incompatible with the factorisation required \eqref{factorform}. This lack of factorisation appears to be generic; for instance if we consider the $(3,4)$ model we find that there is not even a subspace of states which admit an expansion in terms of local operators. This can be seen by considering the matrix,
\bea\label{nonnulldet}
&&F_{k,l,r,s}(\frac{\pi n}{q}) \frac{\sin(\pi n /q)}{\sin(\pi p n/q)}=
\left( \begin{array}{ccc}
\frac{2(-1)^n}{3} & \third & \frac{2}{3}\cos \frac{3 \pi n}{4} \\
\third & \frac{2(-1)^n}{3} & \frac{2}{3}(-1)^{n+1}\cos \frac{3 \pi n}{4} \\
\frac{2}{3}\cos \frac{3 \pi n}{4} & \frac{2}{3}(-1)^{n+1}\cos \frac{3 \pi n}{4} & \frac{1}{3}(-1)^n(3 - 4 \cos^2\frac{3 n \pi}{4}) \end{array} \right)\nn \\
\eea
where the columns and rows correspond to $(k,l) = (1,1), (2,1), (2,2)$.  If there exist $A^{(k,l)}_n$ and $A^{(r,s)}_n$ such that $F_{k,l,r,s}(\frac{\pi n}{q}) =A^{(k,l)}_n A^{(r,s)}_n$ for some subset of the states, then at least one of the cofactors of the matrix in \eqref{nonnulldet} must be zero, which is not the case.

The same conclusions can be reached by studying the other coefficients associated to the non-LZ operators in \eqref{vExplicitZ}. This leaves us in an awkward position; there is no subspace of states that can be expanded in terms of local operators unless we allow the subspace to be one-dimensional (in which case why choose a particular boundary as being the one that can be expanded in terms of local operators?), yet from the results of \cite{Moore:1991ir} and \cite{Anazawa:1997zs} it is clear information about insertions of operators can be extracted from loops. It is worth noting that from the examination of cylinder amplitudes in which one of the boundaries is a ZZ brane, it appears that ZZ branes do not couple to non-LZ states \cite{Gesser:2010fi,Ambjorn:2007ge,Ambjorn:2007xe} and so some of the above problem might be avoided if a ZZ brane is present.

\subsection{Extracting Local Operators From Loops}

An interesting observation is that all correlation numbers were extracted from loops with fixed spin boundary conditions. We now will present further evidence that correlation numbers can be extracted from the fixed spin boundary by reproducing the results of \cite{Belavin:2010pj} for the 1-point correlation numbers on the torus in the $(2,5)$ minimal string. This also serves as an independent check on the computation in \cite{Belavin:2010pj} as our methods are quite different.

A trick that makes this computation easy is that all amplitudes for the minimal string are easily expressible in terms of $z$, which is the uniformising parameter of the auxiliary Reimann surface \cite{Seiberg:2003nm} or equivalently spectral curve. If we write these amplitudes in terms of a new variable, $w$, defined by $z = \half(w+w^{-1})$ we may easily compute the large $w$ expansion of the amplitudes. Since $z = \cosh \frac{\pi \sigma}{\sqrt{pq}}$, the large $w$ expansion will be an expansion in terms of the functions $e^{\frac{\pi  \sigma}{\sqrt{pq}}}$ which is exactly what is required to compute the correlation numbers. Explicitly, if we have an amplitude $\braket{1,1;\sigma_1}{X}$ then
\bea
\braket{1,1;\sigma_1}{X} = \sum^{\infty}_{n = 1} A_n w^{-n} = \sum^{\infty}_{n = 1} A_n e^{-\frac{\pi n\sigma_1}{\sqrt{pq}}},
\eea
but we also have
\bea
\braket{1,1;\sigma_1}{X} = \sum^\infty_{n=1}\tilde{f}_n(1,1)  e^{-\frac{\pi n\sigma_1}{\sqrt{pq}}}\braket{\C{U}_n}{X},
\eea
giving,
\bea
\braket{\C{U}_n}{X} = \frac{A_n}{\tilde{f}_n(1,1) }.
\eea

We now have the tools to extract the correlation numbers on the torus beginning with the disc-with-handle amplitude. The calculation in \cite{Belavin:2010pj} of the torus one-point correlation numbers was performed for the $(2,2p+1)$ minimal string. The disc-with-handle amplitude for the $(2,2p+1)$ minimal string has been computed numerous times using matrix model techniques. We will restrict ourselves to the case of the $(2,5)$ model in which the disc-with-handle amplitude takes the form,
\beq
\label{dischandle25}
{\omega_{M}(\mu_B,\mu)}^{(1)} = \frac{1 + 12 z^2}{z^3}\mu^{-\frac{7}{4}}.
\eeq
The boundary condition on the disc is the equivalent of our spin up boundary as it is computed using the resolvent of $M$ where $M$ is the only matrix appearing in the matrix model. We will now use this to compute the torus one point correlation numbers, for tachyon operators, by replacing the boundary with a sum of local operators as in \eqref{localopexp2}. That this computation produces results that match exactly the results of \cite{Belavin:2010pj} is evidence that we may replace this boundary by a sum of local operators. Obviously, in order to compare results we have to renormalise the bulk cosmological constant, boundary cosmological constant and $M$ to be consistent with \cite{Belavin:2010pj}. In order to avoid doing this we note that the effect of such a renormalisation will to be to change the amplitude by a factor dependent only on $p$ and $q$ and so will not affect the ratio of correlation numbers. We therefore shall compare the ratio of correlation numbers. Computing the large $w$ expansion of \eqref{dischandle25} we get,
\beq
{\omega_{M}(\mu_B,\mu)}^{(1)}  = 4 \mu^{-\frac{7}{4}} \sum^{\infty}_{n=0}  (-1)^{n+1} (n-2)(n+3) w^{-2n-1}
\eeq
where the series converges if $w>1$. We therefore find the torus $1$-point correlation number for tachyon operators in the $(2,5)$ minimal string to be,
\bea
\label{1pointfromloop}
\avg{\C{U}_n}^{(1)} = \frac{n}{A_C} \left( \sin\frac{\pi n}{2} \sin\frac{\pi n}{5} \right)^{\half} \Gamma\left(1-\frac{n}{5}\right)\Gamma\left( -\frac{n}{2}\right)(-1)^{\frac{n-1}{2}}(n-5)(n+5) \mu^\frac{n-7}{4} 
\eea
when $n$ is odd and zero if $n$ is even. The 1-point correlation numbers on the torus for the $(2,2p+1)$ models were computed in \cite{Belavin:2010pj}, giving,
\beq
\avg{\C{U}_n}^{(1)} = \avg{\C{U}_1}^{(1)} (-1)^\frac{n-1}{4}\left(\frac{\sin\frac{\pi n}{2p+1}}{\sin\frac{\pi }{2p+1}} \right)^\half \frac{\Gamma(1-\frac{n}{2p+1})\Gamma(\half)}{\Gamma(1-\frac{1}{2p+1})\Gamma(\frac{n}{2})} \frac{(2p+1 -n)(2p+1 +n)}{4p(p+1)}\mu^\frac{n-1}{4},
\eeq
for $n$ odd and zero when $n$ is even. Using standard $\Gamma$-function identities it is then easy to show that the above equation is consistent with our computation \eqref{1pointfromloop}.

The reproduction of the torus correlation numbers is evidence that this procedure is correct and that the fixed spin amplitude can be used to compute correlation numbers of local operators. Furthermore, we can obtain evidence that the non-fixed spin boundary states do not admit an expansion in terms of local operators by applying the above procedure to the disc-with-handle amplitudes we computed for the $(3,4)$ model. We have the expansions,
\bea
{\omega_{M_1}^{(1)}(\mu_B,\mu)}&\sim&\frac{1}{2^\frac{1}{3} 5^\frac{2}{3} c \Lambda^\frac{7}{6}}\left[ \frac{1}{24 w} + \frac{1}{72 w^5} - \frac{5}{72 w^7} + \frac{17}{72 w^{11}} + \C{O}(w^{-13})\right] \\
{\omega_{(M_1+M_2)/\sqrt{2}}^{(1)}(\mu_B,\mu)}&\sim&\frac{\sqrt{2}}{2^\frac{1}{3} 5^\frac{2}{3} c \Lambda^\frac{7}{6}}\left[ \frac{1}{24 w} - \frac{1}{72 w^5} - \frac{1}{72 w^7} + \frac{7}{72 w^{11}}  +\C{O}(w^{-13}) \right]
\eea
If the expansion in terms of local operators is valid for both these boundary states then we would expect the coefficients of each term in the large $w$ expansion to be related by a factor of $\sqrt{2}$ as can be seen from \eqref{fkleqn}; this is clearly not the case. The implication of these results is that the fixed spin boundary condition is special and is the only one that admits an expansion in terms of local operators. 

\section{Discussion}
\label{conclusion}
In this chapter we have examined the Seiberg-Shih relations \eqref{SS1} for cylinders and discs-with-handles. We saw in the previous chapter that for cylinders the identification of FZZT branes is naively spoiled by terms arising from the gravitational sector of the theory, as first seen by \cite{Kutasov:2004fg}\cite{Gesser:2010fi}. We found that the terms spoiling the identification cannot always be associated with degenerate geometries of the worldsheet, in the sense of \cite{Kutasov:2004fg}, as other geometries which are degenerate in a dual sense are also present. This lead us to conjecture that the terms spoiling the FZZT brane identification could always be interpreted as arising from degenerate and dual degenerate geometries. We checked this conjecture by computing the disc-with-handle amplitudes for free and fixed spin boundary states in the $(3,4)$ minimal string using matrix model techniques. We found that the deviation from the Seiberg-Shih relations in this case could not  be written in the way we conjectured. Given that our conjecture was a very generous interpretation of what terms might be unphysical we conclude this is strong evidence that the FZZT brane identification conjectured in \cite{Seiberg:2003nm} does not hold at all levels of perturbation theory.

We also considered an alternative approach to testing the Seiberg-Shih relations by expanding boundaries in terms of local operators. If such an expansion were possible we showed that it would lead to deviations from the Seiberg Shih relations consistent with our conjecture. We then gave explicit examples for cylinder amplitudes where a local operator expansion of boundary states fails. This lead us to a paradox; how could correlation numbers of local operators be extracted from boundary states in \cite{Moore:1991ir} \cite{Anazawa:1997zs} if such boundary states cannot be expanded as local operators? To investigate this we computed the disc-with-handle amplitude in the $(2,5)$ model using matrix model methods and then showed that the fixed spin boundary condition yielded an expansion from which correlation numbers in agreement with the results of \cite{Belavin:2010pj} could be extracted. This lead us to conjecture that the fixed spin boundaries are special in that they allow an expansion in terms of local operators.

There are many obvious generalisations of the work done here, the most obvious being that it would be interesting to extend the above results to the $(p,q)$ case. Since the Liouville techniques are not suited to higher-genus computations such a project would have to be tackled using matrix model methods. The main problem with this approach is that it is difficult to represent all the states for a given minimal string in a matrix model; in the above we chose the Ising model precisely because there was an obvious mapping between the spin degrees of freedom and the matrix fields. However this problem may now be less serious given recent work \cite{Ishiki:2010wb,Bourgine:2010ja} in which boundary states were constructed in the matrix model formulation for $(2,2p+1)$ minimal strings. Another strategy for attacking the same problem may be the geometric recursion techniques developed by Eynard et al. in \cite{Eynard:2008we} as this provides an efficient way of computing many amplitudes for a given matrix model. Such methods would also allow a relatively easy computation of the cylinder-with-handle amplitudes; these would be interesting to study as one could then check if quantum corrections destroy the fact, noted in \cite{Gesser:2010fi,Ambjorn:2007ge,Ambjorn:2007xe}, that FZZT-ZZ cylinder amplitudes are consistent with the Seiberg-Shih relations.

Another generalisation would be to perform a similar investigation in other non-critical string models. An obvious choice would be non-critical superstrings as they perhaps have more physical relevance in addition to being much better behaved. A second and perhaps less obvious choice of string model would be the causal string theories developed in \cite{Ambjorn:2008gk,Ambjorn:2008hv,Ambjorn:2009fm}. Currently these models have only been solved for a target space of zero dimensions and so such a project would require a generalisation of the models to include worldsheet matter. However, even in the zero dimensional case such causal string models are better behaved and many of the odd features of the usual non-critical string are absent due to the lack of baby-universe production; in particular the gravity sector in these theories seems to be much weaker. Since the terms spoiling the FZZT brane identification are due to the gravity sector (at least for the cylinder amplitudes) might they have an interpretation in terms of baby-universe over production and if so might they be absent in these causal string models?

Finally, we saw that the matrix model is able to reproduce all the boundary states found in the $(3,4)$ minimal string apart from the dual of the free spin boundary condition. In \cite{Seiberg:2003nm} this was enough as they claimed that there is only one brane and hence only one dual brane in the theory. However, in light of our results there should exist other dual branes in the theory. Are such states present in the matrix model and if so can they be represented as some form of resolvent? This question and the others outlined we leave to future work.


\chapter{Summary\label{ChapSummary}}
In this thesis we have reviewed how random graphs provide a very useful tool by which observables in a variety of two dimensional quantum gravity theories may be computed. Using these techniques we were able to investigate conjectures related to two of these observables; the dimension of spacetime and Hartle-Hawking wavefunctions, or more generally spacetime containing a number of boundaries. 

In Chapter \ref{ChapBack} after introducing the approach to two dimensional quantum gravity in which the metric is treated as the fundamental degree of freedom we reviewed how the resulting path integral could be defined via a lattice regularisation referred to as Dynamical Triangulation. This lattice regularisation can naturally be interpreted as a random graph and we reviewed how the disc-function observable could be computed in this approach via matrix model and combinatorial methods. We then explored the consequence of the ambiguity inherent in the metric-is-fundamental approach resulting from the necessity of choosing a class of metric to integrate over. In DT, the class of metrics integrated over were the Euclidean metrics on the manifold. We reviewed how the approach of CDT could be use to define path integrals in which only geometries with a well-defined casual structure where included in the integration. These theories differed from DT due to the propensity of DT to generate baby-universes throughout the space; such behaviour ultimately lead to a fractal geometry for the spacetime. 

The question of how to investigate the fractal nature of the spacetimes was taken up in Chapter \ref{ChapDim} in which the concepts of Hausdorff and spectral dimensions were introduced as ways of generalising the notion of dimension beyond smooth manifolds. These measures of the dimension were useful in characterising the properties of the space; we saw that DT gave rise to fractal spacetime since $d_h = 4$ whereas $d_s = 2$. For CDT on the other hand we saw that $d_h = d_s =2$ signalling the emergence of a smooth spacetime in the continuum limit. We then reviewed the recent evidence obtained in a variety of approaches to quantum gravity, most notably CDT, for the spectral dimension being lower in the UV than IR. We referred to this phenomenon as dimensional reduction.

In Chapter \ref{ChapComb} we constructed a toy model based on random comb graphs, in which dimensional reduction occurs. This was useful for a number of reasons; it shows that a scale dependent spectral dimension may be given a rigorous definition and that models can be found in which this rigorous definition results in a phenomenon similar to the dimensional reduction observed in CDT and other approaches. We gave a characterisation of possible continuum limits and the behaviour of the spectral dimension in each case via Theorem \eqref{mainresult}. As discussed in Section \ref{DiscussComb} we hope that these results may be used as the basis for constructing more realistic models in which a similar dimensional reduction can be observed. In particular we hope the work in \cite{Durhuus:2009sm}, in which a type of random multigraph that shares many features with CDT were introduced, may be extended in this direction.

In Chapter \ref{ChapString} we gave a brief review of minimal models, Liouville theory and string theory. This was motivated by the desire to couple matter to the two dimensional gravity models considered in the earlier chapters. After showing how the matrix model techniques are easily extended to describe DT models coupled to matter, we reviewed the equivalence between these models and certain string theories known as $(p,q)$ minimal string theories. We then considered the boundary states, or branes, in these $(p,q)$ minimal strings, in particular focusing on a conjecture of Seiberg and Shih which claimed that all branes occurring in such theories were physically equivalent. This claim is challenged by noting that the simple form of this conjecture fails for cylinder amplitudes. We reviewed the suggestion that since the terms spoiling the conjecture have particular properties, they are non-universal and therefore these examples should not be seen as counter examples. However, for more complicated cylinder amplitudes the spoiling terms fail to have the properties necessary to be classed as non-universal. We introduced a conjecture, referred to as P3, which gives a natural way, based on the duality of Liouville theory, in which these new spoiling terms may also be considered non-universal.

In the final Chapter \ref{ChapBrane} we considered the conjecture P3 by checking if it held for the case of disc with handles in the $(3,4)$ model. This required a matrix model representation of all the conformal boundary states in this model. To accomplish this we introduced a matrix model describing all the Ising model on a random lattice in the presence of a boundary magnetic field. We demonstrated this matrix model could be solved for any potential, not just the cubic one necessary for the Ising model. Using this solution we computed the disc-with-handle amplitude and argued that the terms spoiling the Seiberg-Shih conjecture did not satisfy the extended conjecture. This strongly suggests that the naively different branes in this theory are indeed different. There exist a variety of extensions of this work as discussed in Section \ref{conclusion}.

Finally we considered an alternative approach to testing the Seiberg-Shih conjecture by writing the boundary states as a sum of local operators. This is a particular elegant way of approaching the problem as it negates the need to consider individual amplitudes separately. We found that the local operator representation of the boundary states lead to non-trivial terms which spoil the Seiberg-Shih relation, however, these terms appear to be of exactly the form necessary in order for conjecture P3 to hold. We investigate this apparent contradiction and argue that the expansion in terms of local operators only holds for the identity brane. In doing so we verify certain recent results of Belavin et al. \cite{Belavin:2010pj}.


We hope that this thesis has convinced the reader of the usefulness of random graphs in the study of low dimensional theories of quantum gravity and furthermore that there is a great deal left to explore.




\appendix

\chapter{Mellin Transforms and Asymptotics \label{ChapMellinAsymp:app}}

\section{Standard generating functions}
\label{StandardResults}
We record here a number of standard results for generating functions for random walks on combs; the details of their calculation are given in \cite{Durhuus:2005fq}.

On the empty comb $C=\infty$ we have
\bea P_\infty(x)&=&1-\sqrt{x},\label{Pinf}\\
P_\infty^{<n}(x)&=&(1-x)\frac{(1+\sqrt{x})^{n-1}-(1-\sqrt{x})^{n-1}}{(1+\sqrt{x})^{n}-(1-\sqrt{x})^{n}},\label{PinfL}\\
G_\infty^{(0)}(x;n)&=&(1-x)^{n/2}\frac{2\sqrt{x}}{(1+\sqrt{x})^{n}-(1-\sqrt{x})^{n}}.  \label{GinfL}
\eea
Note that we can promote $n$ to being a continuous positive semi-definite real variable in these expressions; $G_\infty^{(0)}(x;n)$ is then a strictly decreasing function of $n$ and $P_\infty^{<n}(x)$ a strictly increasing function of $n$.
The finite line segment of length $\ell$ has
\bea P_\ell(x)&=& 1-\sqrt{x}\frac{(1+\sqrt{x})^{\ell}-(1-\sqrt{x})^{\ell}}{(1+\sqrt{x})^{\ell}+(1-\sqrt{x})^{\ell}}
\eea
which it is sometimes convenient to write as 
\bea P_\ell(x)
&=&\sqrt{x}\tanh\left(m_\infty(x)\ell\right)\label{Pell}
\eea
where 
\beq m_\infty(x)=\half\log\frac{1+\sqrt{x}}{1-\sqrt{x}}.\eeq
Again, $\ell$ can be promoted to a continuous positive semi-definite real variable of which $P_\ell(x)$ is a strictly increasing function. The first return probability generating function for the comb with teeth of length $\ell+1$ equally spaced at intervals of $n$ is given by 
\bea
P_{\ell, *n} (x)= \frac{3-P_{\ell}(x)}{2} - \half \left[\left(3-P_{\ell}(x)-2P_\infty^{<n}(x)\right)^2-4G_\infty^{(0)}(x;n)^2\right]^\half.\label{Pelln}
\eea
and $P_{*n}(x)$ is obtained by setting $\ell=\infty$ in this formula. $P_{\ell, *n} (x)$ is a strictly decreasing function of $\ell$ and increasing function of $n$, viewed as continuous positive semi-definite real variables.

We also need the scaling limits of some of these quantities. They are
\bea \lim_{a\to 0} a^{-\half}G_\infty^{(0)}(a\xi;a^{-\half}\nu)&=&\xi^\half \mathrm{cosech}( \nu\xi^\half)\eea
and
\bea \fl{\lim_{a\to 0} a^{-\half}\left(1-P_{(a^{-\half}\rho), *(a^{-\half}\nu)} (a\xi)\right)=-\half\xi^\half\tanh(\rho\xi^\half)}\nn\\ \qquad\qquad\qquad \qquad+\half\xi^\half\left[4+4\tanh\rho\xi^\half\coth\nu\xi^\half+\tanh^2\rho\xi^\half\right]^\half.
\eea

\section{Bump functions}
\label{AppendixBump}
A function $\psi : \BB{R} \rightarrow \BB{R}$ is a bump function if $\psi$ is smooth and has compact support. We will now prove some properties concerning the Mellin transformation of a bump function $\psi$, $\C{M}[\psi](s)$, which has support on $[a,b]$ where $b>a>0$.

\begin{lemma}
\label{bump1}
The critical strip of the Mellin transform of the $n$th derivative of $\psi$, $\psi^{(n)}$, is $\BB{C}$ for all $n$.
\end{lemma}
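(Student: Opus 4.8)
The plan is to unwind the definition of the critical (fundamental) strip and observe that a bump function, together with all of its derivatives, is supported on a compact set bounded away from the origin, so that no convergence obstruction can arise at either $0$ or $\infty$.

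First I would recall that for a locally integrable $f$ the Mellin transform $\C{M}[f](s)=\int_0^\infty f(x)x^{s-1}\,dx$ converges precisely on a maximal open vertical strip $\{\alpha<\mathrm{Re}[s]<\beta\}$, where the abscissae $\alpha$ and $\beta$ are governed by the growth of $f$ as $x\to 0^+$ and as $x\to\infty$ respectively; this strip is what is meant by the critical strip. The key structural fact I would invoke is that $\psi^{(n)}$ is again smooth with $\mathrm{supp}\,\psi^{(n)}\subseteq\mathrm{supp}\,\psi\subseteq[a,b]$ where $b>a>0$. In particular $\psi^{(n)}$ vanishes identically on a neighbourhood of $0$ and for all large $x$, so the lower and upper abscissae of convergence are pushed out to $-\infty$ and $+\infty$.

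Concretely, I would fix an arbitrary $s\in\BB{C}$ and estimate $\int_0^\infty |\psi^{(n)}(x)|\,x^{\mathrm{Re}[s]-1}\,dx=\int_a^b |\psi^{(n)}(x)|\,x^{\mathrm{Re}[s]-1}\,dx$. Since $\psi^{(n)}$ is continuous and therefore bounded on the compact interval $[a,b]$, and $x\mapsto x^{\mathrm{Re}[s]-1}$ is continuous and bounded on $[a,b]$ (here $a>0$ is what keeps us away from the singularity of the power at the origin), the integrand is bounded and the integral is finite. Hence the integral defining $\C{M}[\psi^{(n)}](s)$ converges absolutely for every $s\in\BB{C}$, so the critical strip is all of $\BB{C}$; a routine differentiation under the integral sign (or Morera's theorem) moreover upgrades this to entirety of the transform, though only convergence is needed for the stated claim.

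There is essentially no genuine obstacle here: the result is a direct consequence of compact support bounded away from the origin, and the only point requiring a moment's care is verifying that differentiation does not enlarge the support toward $0$ or $\infty$. That is immediate from $\mathrm{supp}\,\psi^{(n)}\subseteq\mathrm{supp}\,\psi$, after which the elementary bound above closes the argument uniformly in $n$.
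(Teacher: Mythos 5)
Your proof is correct and follows essentially the same route as the paper's: restrict the integral to $[a,b]$ by compact support, bound $|\psi^{(n)}|$ by a constant using smoothness, and observe that $\int_a^b x^{\mathrm{Re}[s]-1}\,dx$ is finite for every $s\in\BB{C}$ since $b>a>0$. Your added remarks (that differentiation does not enlarge the support, and that the transform is in fact entire) match the paper's concluding observation that $\C{M}[\psi^{(n)}](s)$ is holomorphic everywhere.
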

\begin{proof}
Recall that the Mellin transform is defined by, $\C{M}[\psi](s) = \int^{\infty}_0 \psi(x) x^{s-1} dx$. Since $\psi$ has compact support we have,
\beq
\C{M}[\psi^{(n)}](s) = \int^{b}_a \psi^{(n)}(x) x^{s-1} dx
\eeq 
and since $\psi$ is smooth, $|\psi^{(n)}|$ is bounded on $[a,b]$ by some constant $K$, so,
\beq
|\C{M}[\psi^{(n)}](s)| \leq K \int^{b}_a x^{s-1} dx
\eeq
and the RHS is finite for all $s \in \BB{C}$ since $b>a>0$. This also shows that $\C{M}[\psi^{(n)}](s)$ is holomorphic for all $s$.
\end{proof}
\begin{lemma}
\label{bump2}
Given $n \in \BB{Z}^+$, $|\C{M}[\psi](\sigma+i D)| \leq \frac{1}{D^n} \C{M}[|\psi^{(n)}|](\sigma+n)$ for all $s \in \BB{C}$.
\end{lemma}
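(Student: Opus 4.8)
The plan is to reduce the whole bound to a single integration-by-parts identity relating $\C{M}[\psi]$ to $\C{M}[\psi^{(n)}]$, and then estimate the two resulting factors separately.

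First I would establish the shift identity
\beq
\C{M}[\psi](s) = \frac{(-1)^n}{\prod_{k=0}^{n-1}(s+k)}\,\C{M}[\psi^{(n)}](s+n).
\eeq
This follows by integrating $\C{M}[\psi](s)=\int_a^b \psi(x)x^{s-1}\,dx$ by parts, taking $v=x^s/s$ and differentiating $\psi$, which gives $\C{M}[\psi](s)=-\frac{1}{s}\C{M}[\psi'](s+1)$; the boundary term $[\psi(x)x^s/s]_a^b$ vanishes because $\psi$ is smooth with support in $[a,b]$ and hence, together with all its derivatives, vanishes at the endpoints $a$ and $b$. Iterating this step $n$ times (moving one derivative onto $\psi$ and raising the Mellin argument by one each time) produces the displayed product of linear factors. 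Each application is legitimate since, by Lemma \ref{bump1}, every $\C{M}[\psi^{(k)}]$ is entire, so there are no convergence issues.

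Next I would specialize to $s=\sigma+iD$ and bound the two factors. For the denominator, $|s+k|=\sqrt{(\sigma+k)^2+D^2}\ge |D|$ for each $k$, so $\prod_{k=0}^{n-1}|s+k|\ge |D|^n$. For the numerator, since $|x^{iD}|=1$ for $x>0$,
\beq
|\C{M}[\psi^{(n)}](\sigma+n+iD)| = \left| \int_a^b \psi^{(n)}(x)\, x^{\sigma+n-1} x^{iD}\,dx \right| \le \int_a^b |\psi^{(n)}(x)|\, x^{\sigma+n-1}\,dx = \C{M}[|\psi^{(n)}|](\sigma+n).
\eeq
Combining the shift identity with these two estimates yields exactly $|\C{M}[\psi](\sigma+iD)| \le |D|^{-n}\,\C{M}[|\psi^{(n)}|](\sigma+n)$, which is the claim (for $D>0$).

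The calculation is essentially routine; the only points requiring care are the vanishing of the boundary terms in each integration by parts — which I would justify once from the compact support and smoothness of $\psi$ — and the elementary inequality $|\sigma+k+iD|\ge |D|$ that converts the product of linear factors into the clean power $|D|^n$. I do not expect any serious obstacle beyond this bookkeeping.
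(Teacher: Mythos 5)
Your proof is correct and follows essentially the same route as the paper's: the iterated integration-by-parts shift identity $\C{M}[\psi](s) = \frac{(-1)^n}{\prod_{k=0}^{n-1}(s+k)}\C{M}[\psi^{(n)}](s+n)$, followed by the triangle inequality on the integral and the bound on the product of linear factors. The only difference is that you spell out two points the paper leaves implicit — the vanishing of the boundary terms and the inequality $|\sigma+k+iD|\ge|D|$ — which is a welcome clarification rather than a deviation.
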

\begin{proof}
Recall from the previous lemma that the critical strip of the Mellin transform of $\psi$ and its derivatives coincides with $\BB{C}$. We can therefore use the integral representation of the Mellin transform to prove statements valid for all $s \in \BB{C}$. By integration by parts,
\beq
\C{M}[\psi^{(n)}](s) = -\int^{b}_a \psi^{(n+1)}(x) \frac{x^{s}}{s} dx = -\frac{1}{s}\C{M}[\psi^{(n+1)}](s+1)
\eeq
and therefore,
\beq
\fl{\C{M}[\psi](s) = \frac{(-1)^n}{\prod^{n-1}_{k = 0} (s+k)} \int^{b}_a \psi^{(n)}(x) x^{s+n} dx =  \frac{(-1)^n}{\prod^{n-1}_{k = 0} (s+k)}\C{M}[\psi^{(n)}](s+n).}
\eeq
Hence,
\beq
|\C{M}[\psi](\sigma+iT)| \leq \frac{1}{D^n} \int^{b}_a |\psi^{(n)}(x)| x^{\sigma + n} dx = \frac{1}{D^n} \C{M}[|\psi^{(n)}|](\sigma+n).
\eeq
\end{proof}
Given a bump function $\Psi_{\pm\epsilon}$ which is always positive, has support $[1,1\pm\epsilon]$ and is scaled such that its integral is one, we define the cut-off function to be,
\beq
\eta_\pm(x) = 1 - {\int^x_{-\infty} \Psi_{\pm\epsilon}(x) dx }.
\eeq
\begin{lemma}
The critical strip of $\C{M}[\eta_\pm](s)$ is given be $Re[s]> 0$. The analytic continuation of $\C{M}[\eta_\pm](s)$ to all $s$ is given by
\beq
\C{M}[\eta_\pm](s) = \frac{1}{s}\C{M}[\Psi_{\pm\epsilon}](s+1).
\eeq
\end{lemma}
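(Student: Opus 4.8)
The plan is to exploit three structural facts about $\eta_\pm$ that follow immediately from its definition: that $\eta_\pm(x)=1$ for $x$ lying to the left of the support of $\Psi_{\pm\epsilon}$, that $\eta_\pm(x)=0$ for $x$ to the right of it, and that $\eta_\pm'(x)=-\Psi_{\pm\epsilon}(x)$ by the fundamental theorem of calculus. First I would locate the region of convergence of the defining integral $\C{M}[\eta_\pm](s)=\int_0^\infty \eta_\pm(x)\,x^{s-1}\,dx$. Since $\Psi_{\pm\epsilon}$ is supported in a compact subinterval of the positive half-line, $\eta_\pm$ is identically $1$ on a neighbourhood of the origin and identically $0$ for all large $x$; hence convergence at $x=\infty$ is automatic, and the only obstruction comes from $x^{s-1}$ as $x\to 0$, which is integrable precisely when $\mathrm{Re}[s]>0$. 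This establishes that the critical strip is the half-plane $\mathrm{Re}[s]>0$, with no finite upper edge because of the compact support.

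For the continuation formula I would integrate by parts in the defining integral, which is legitimate throughout the strip $\mathrm{Re}[s]>0$, taking $v=x^s/s$ and $du=\eta_\pm'(x)\,dx$. The boundary term $[\eta_\pm(x)\,x^s/s]_0^\infty$ vanishes at both ends: at $x=0$ because $x^s\to 0$ for $\mathrm{Re}[s]>0$ while $\eta_\pm(0)=1$, and at $x=\infty$ because $\eta_\pm$ vanishes identically there. What remains is
\[
\C{M}[\eta_\pm](s) = -\frac{1}{s}\int_0^\infty x^s\,\eta_\pm'(x)\,dx = \frac{1}{s}\int_0^\infty \Psi_{\pm\epsilon}(x)\,x^s\,dx = \frac{1}{s}\C{M}[\Psi_{\pm\epsilon}](s+1),
\]
where I have used $\eta_\pm'=-\Psi_{\pm\epsilon}$ and recognised the final integral as the Mellin transform of $\Psi_{\pm\epsilon}$ evaluated at $s+1$.

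Finally I would promote this identity from the strip $\mathrm{Re}[s]>0$ to the whole of $\BB{C}$. By Lemma \ref{bump1} (with $n=0$) the Mellin transform of the bump function $\Psi_{\pm\epsilon}$ is holomorphic on all of $\BB{C}$, so $\C{M}[\Psi_{\pm\epsilon}](s+1)$ is entire and $s^{-1}\C{M}[\Psi_{\pm\epsilon}](s+1)$ is meromorphic with at worst a simple pole at $s=0$. Since it agrees with $\C{M}[\eta_\pm](s)$ on the open half-plane $\mathrm{Re}[s]>0$, the identity theorem for holomorphic functions guarantees that it \emph{is} the analytic continuation, proving the claim. The calculation itself is elementary; there is no genuine obstacle, and the only points I would treat with care are the vanishing of the boundary terms (which is exactly what pins the critical strip to $\mathrm{Re}[s]>0$) and the appeal to Lemma \ref{bump1} to ensure that the right-hand side is genuinely defined, and entire up to the explicit factor $1/s$, beyond the original strip.
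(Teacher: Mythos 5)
Your proof is correct and follows essentially the same route as the paper's own (very terse) proof: integration by parts on the defining integral, the identity $\eta_\pm' = -\Psi_{\pm\epsilon}$, and the entirety of $\C{M}[\Psi_{\pm\epsilon}]$ from Lemma \ref{bump1} to justify the continuation. Your write-up simply supplies the details the paper leaves implicit, including the determination of the critical strip and the vanishing of the boundary terms.
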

\begin{proof}
The analytic continuation of $\C{M}[\eta_\pm](s)$ may be obtained by applying integration by parts to the Mellin transform of $\eta_\pm$ and recalling by lemma \eqref{bump1} that $\C{M}[\Psi_{\pm\epsilon}](s)$ is holomorphic everywhere.
\end{proof}

\section{Asymptotic Series and Dirichlet Series}
\label{AppendixAsymp}
Starting with 
\beq
{S_\pm}(y) 
= \sum^\infty_{\ell=0} \mu(\ell) \eta_\pm(\ell y)=\mu(0)+ \sum^\infty_{\ell=1} \mu(\ell) \eta_\pm(\ell y)\equiv \mu(0) + S_\pm^{(1)}(y)
\eeq
where $\eta_\pm$ is the smooth cut-off function introduced in \ref{AppendixBump} and $y$ controls where the cut-off occurs we take the Mellin transform of $S_\pm^{(1)}(y)$ with respect to $y$,
\bea
\C M[S^{(1)}_\pm](s) &=& \int^\infty_0 S^{(1)}_\pm(y) y^{s-1} dy \nn \\
&=& \C{D}_\mu(s) \C M[\eta_\pm](s) \nn \\
&=& \C{D}_\mu(s) \C M[\Psi_{\pm\epsilon}](s+1)/s
\eea
where $\C{D}_\mu(s)$ is the Dirichlet series associated to the measure,
\beq
\C{D}_\mu(s) = \sum^{\infty}_{\ell=1} \frac{\mu(\ell)}{\ell^s}.
\eeq
It is easy to see that the fundamental strip of $S^{(1)}_\pm(y)$ is $Re[s]>0$, due to the compact support of $\eta_\pm$ on the positive real axis and so the Mellin transform does indeed exist. We may now invert the Mellin transform to obtain,
\beq
S_\pm(y) = \mu(0)+\frac{1}{2 \pi i}\int^{c+i\infty}_{c-i\infty} y^{-s} \C{D}_\mu(s) \frac{\C M[\Psi_{\pm\epsilon}](s+1)}{s} ds.
\eeq
This may be computed by rewriting the above as,
\beq
S_\pm(y) = \mu(0)+\frac{1}{2 \pi i}\oint_C y^{-s} \C{D}_\mu(s) \frac{\C M[\Psi_{\pm\epsilon}](s+1)}{s} ds -R(y)
\eeq
where the contour $C$ is the rectangle composed of the points $\{c-i\infty,c+i\infty,-N + i\infty,-N-i\infty\}$ with $N>0$, $c$ such the contour is the right of all poles and  
\beq
R(y) = \int^{-N+i\infty}_{-N-i\infty} y^{-s} \C{D}_\mu(s) \frac{\C M[\Psi_{\pm\epsilon}](s+1)}{s} ds.
\eeq
If $\C{D}_\mu$ has slow growth in the strip $-N \leq Re[z]  \leq 0$ then due to lemma \eqref{bump2}, which shows that $\C M[\Psi_{\pm\epsilon}]$ decays faster than any polynomial as $t$ goes to infinity, the contributions from integrating along the contours $c+i\infty$ to $-N + i\infty$ and from $c-i\infty$ to $-N - i\infty$ are zero. Furthermore the remainder term $R(y)$ satisfies,
\beq
|R(y)| \leq \frac{y^{N}}{N} \int^{-N+i\infty}_{-N-i\infty} |\C{D}_\mu(s) ||{\C M[\Psi_{\pm\epsilon}](s+1)}| ds
\eeq
and so will only contribute terms of order $y^N$ to $S_\pm(y)$. We therefore have,
\bea
&&S_\pm(y) = \mu(0)+\frac{1}{2 \pi i}\oint_C y^{-s} \C{D}_\mu(s) \frac{\C M[\Psi_{\pm\epsilon}](s+1)}{s} ds\\
&&= 1+\sum_{s_i \in S \cap \Sigma} \res\left[\C{D}_\mu(s) \frac{\C M[\Psi_{\pm\epsilon}](s+1)}{s}y^{-s};s=s_i\right]-R(y)
\eea
where $S$ is the set of positions of the poles of $\C{D}_\mu$ and we have used the fact that $\C{D}_\mu(0) = 1-\mu(0)$. Finally, define ${\chi_\pm}(u) = \sum^\infty_{l=0} \mu(\ell) \eta_\pm(l/u)$. By relating this function to $S_\pm(y)$ we may write,
\bea
\label{genasymp}
&&\chi_\pm(u) = \mu(0)+\frac{1}{2 \pi i}\oint_C u^{s} \C{D}_\mu(s) \frac{\C M[\Psi_{\pm\epsilon}](s+1)}{s} ds\\
&&= 1+\sum_{s_i \in S \cap \Sigma} \res\left[\C{D}_\mu(s) \frac{\C M[\Psi_{\pm\epsilon}](s+1)}{s}u^{s};s=s_i\right]-R(y).
\eea 

\chapter{Liouville Cylinder Amplitudes \label{ChapLiouvilleCylinder:app}}
\label{Cylindercalc}

The full cylinder amplitudes were given in a usable form recently by \cite{Gesser:2010fi}. We have obtained similar results independently and reproduce  them here for the purpose of completeness. The cylinder amplitude between a $(k,l)$ and $(r,s)$ brane, $\C{Z}(k,l;\sigma_1|r,s;\sigma_2)$, is computed by the integral
\beq
\C{Z}(k,l;\sigma_1|r,s;\sigma_2) = \int^{\infty}_0 d\tau\C{Z}_{\mathrm{ghost}}(\tau) \C{Z}_{\mathrm{liouville}}(\tau) \C{Z}_{\mathrm{matter}}(\tau),
\eeq
where $\tau$ is the modular parameter of the cylinder in the closed string channel. The partition function for the ghosts is well known
\beq
\label{Cylinder1}
\C{Z}_{\mathrm{ghost}}(\tau) =\eta(\B{q})^2,
\eeq
where $\B{q} = \exp(-2 \pi\tau)$ and $\eta(\B{q})$ denotes the Dedekind $\eta$-function. The Liouville and matter contributions are given by
\bea 
\C{Z}_{\mathrm{matter}} = \sum_{(a,b) \in E_{q,p}} \frac{S_{(k,l),(a,b)}S_{(r,s),(a,b)}}{S_{(1,1),(a,b)}} \chi_{(a,b)}(\B{q}),\\
\C{Z}_{\mathrm{liouville}} = \int^{\infty}_0 dP \Psi^{\dagger}_{\sigma_1}(P)\Psi_{\sigma_2}(P) \chi_P (\B{q}), 
\eea
where $E_{q,p}$ is the Kac table for the minimal model, $S$ is the relevant modular $S$-matrix, $\Psi(\sigma)$ is defined in \eqref{FZZTstate} and $\chi_{(a,b)}(\B{q})$ and $ \chi_{P}(\B{q})$ are the characters for the minimal model primary field with Kac index $(a,b)$ and the non-degenerate Liouville primary field respectively. The characters are given by,
\bea
\chi_{(a,b)}(\B{q})&&=\frac{1}{\eta(\B{q})} \sum^{\infty}_{n=-\infty}\left( \B{q}^{(2pqn+qa-pb)^2/4pq}-\B{q}^{(2pqn+qa+pb)^2/4pq}\right),\\
\chi_P (\B{q}) &&= \frac{\B{q}^{P^2}}{\eta(\B{q})}.
\eea
We first concentrate on the $\C{Z}_{\mathrm{matter}}$. Recalling that the modular S-matrix for the $(p,q)$ minimal model is
\beq
S_{(k,l),(r,s)}= 2\sqrt{\frac{2}{pq}}(-1)^{1+lr+ks}\sin(\pi\frac{qkr}{p})\sin(\pi\frac{pls}{q}),
\eeq
the explicit expression for $\C{Z}_{\mathrm{matter}}$ is,
\bea
\C{Z}_{\mathrm{matter}}&&=-2\sqrt{\frac{2}{pq}} \sum^{p-1}_{a=1} \sum^{q-1}_{b=1} \Big[(-1)^{a(l+r+1)+b(k+s+1)}\frac{\sin(\pi\frac{qka}{p})\sin(\pi\frac{plb}{q})\sin(\pi\frac{qra}{p})\sin(\pi\frac{psb}{q})}{\sin(\pi\frac{qa}{p})\sin(\pi\frac{pb}{q})},\nn \\
&&\frac{1}{\eta(\B{q})} \sum^{\infty}_{n=-\infty}\left( \B{q}^{(2pqn+qa-pb)^2/4pq}-\B{q}^{(2pqn+qa+pb)^2/4pq}\right), \Big]
\eea
where we have used the symmetry of the Kac table to rewrite the limits on the $a$ and $b$ summation. Also, note that the quantity in the square brackets is symmetric in $a$ and $b$ and zero if $a=0$ or $b=0$. This allows us to rewrite it as,
\bea
\C{Z}_{\mathrm{matter}}&&=\frac{1}{\sqrt{2pq}} \sum^{p-1}_{a=1} \sum^{q-1}_{b=-(q-1)} \Big[\sin(\frac{\pi t}{p})\sin(\frac{\pi t}{q}) \times \\
&&U_{k-1}(\cos\frac{\pi t}{p})U_{r-1}(\cos\frac{\pi t}{p})U_{l-1}(\cos\frac{\pi t}{q})U_{s-1}(\cos\frac{\pi t}{q}) \frac{1}{\eta(\B{q})} \sum^{\infty}_{n=-\infty}\B{q}^{(2pqn+t)^2/4pq}\Big], \nn
\eea
where $t=qa+pb$. Substituting these expressions into \eqref{Cylinder1} gives,
\bea
&&\C{Z}= \frac{1}{\sqrt{2pq}}\left[ \int^{\infty}_0 dP \frac{4\pi^2 \cos(2\pi\sigma_1 P)\cos(2\pi\sigma_2 P)}{\sinh(2\pi b P)\sinh(\frac{2\pi P}{b})} \right] \nn\\
&& \sum^{p-1}_{a=1} \sum^{q-1}_{b=-(q-1)} \Big[\sin(\frac{\pi t}{p})\sin(\frac{\pi t}{q}) U_{k-1}(\cos\frac{\pi t}{p})U_{r-1}(\cos\frac{\pi t}{p})U_{l-1}(\cos\frac{\pi t}{q})U_{s-1}(\cos\frac{\pi t}{q}),\nn\\
&&\int^{\infty}_{0}d\tau \sum^{\infty}_{n=-\infty}\B{q}^{(2pqn+t)^2/4pq+P^2}\Big].
\eea
Focusing our attention on the integral over the moduli, we compute,
\bea
\int^{\infty}_{0}d\tau \sum^{\infty}_{n=-\infty}\B{q}^{(2pqn+t)^2/4pq+P^2}&&=-\frac{1}{2\pi}\sum^{\infty}_{n=-\infty}\left[ \frac{1}{(2pqn+t)^2/4pq+P^2} \right], \nn \\
&&=\frac{1}{2\sqrt{pq} P} \frac{\sinh(\frac{2\pi P}{\sqrt{pq}})}{\cos{\frac{\pi t}{pq}}-\cosh{\frac{2\pi P}{\sqrt{pq}}}},
\eea
which gives,
\beq
\label{Zintegrala}
\C{Z}= \frac{2\pi^2 }{\sqrt{2}pq} \int^{\infty}_0 \frac{dP}{P} \frac{\cos(2\pi\sigma_1 P)\cos(2\pi\sigma_2 P)\sinh(\frac{2\pi P}{\sqrt{pq}})}{\sinh(2\pi b P)\sinh(\frac{2\pi P}{b})} F_{k,l,r,s}(\frac{2\pi i P}{\sqrt{pq}}),
\eeq
where $F_{k,l,r,s}(z)$ is given by,
\bea 
\label{Fklrs}
F_{k,l,r,s}(z)=\sum^{\lambda_p(k,r)}_{\eta=|k-r|+1,2} \sum^{\lambda_q(l,s)}_{\rho=|l-s|+1,2} \sum^{p-1}_{a=1} \sum^{q-1}_{b=-(q-1)} &&\Bigg[\sin(\frac{\pi t}{p})\sin(\frac{\pi t}{q}) \times \nn \\ 
&&U_{\eta-1}(\cos\frac{\pi t}{p})U_{\rho-1}(\cos\frac{\pi t}{q}) \nn \frac{1}{\cos{\frac{\pi t}{pq}}-\cos{z}}\Bigg],\\
\eea
where $\lambda_p(k,r)=\mathrm{min}(k+r-1,2p-1-k-r)$. To compute \eqref{Zintegrala} we first note that the integrand is symmetric in $P$ and so we can extend the integral to the whole real line. Without loss of generality we assume $\sigma_1 > \sigma_2$, and break up the factor of $\cos(2\pi\sigma_1 P)$ in \eqref{Zintegrala} into its exponentials. For the integral containing, $\exp(2\pi i \sigma_1 P)$ we can close the integral along the real line by a semi-circle to produce the contour $C_+$ in the upper-half plane, for the other term we use a contour, $C_-$ in the lower-half plane. However, the integral using the contour in the lower-half plane, by a change of variables from $P$ to $-P$ is equal to the integral using the contour in the upper half-plane, we therefore get,
\beq
\label{Zintegralb}
\C{Z}(k,l;\sigma_1|r,s;\sigma_2)= A_C\oint_{C_+} dP\left[ G(P,\sigma_1,\sigma_2) F_{k,l,r,s}(\frac{2\pi i P}{\sqrt{pq}})\right],
\eeq
where $A_C$ is a numerical constant and
\beq
G(P,\sigma_1,\sigma_2)= \frac{1}{P} \frac{e^{2\pi i \sigma_1 P}\cos(2\pi\sigma_2 P)\sinh(\frac{2\pi P}{\sqrt{pq}})}{\sinh(2\pi b P)\sinh(\frac{2\pi P}{b})}
\eeq
The poles of $F_{k,l,r,s}(\frac{2 \pi i P}{\sqrt{pq}})$ occur at $P \in S_F = \{\frac{n i}{2\sqrt{p q}}:n\in \BB{Z}, n \neq 0 \mod p, n \neq 0 \mod q\}$. The residue of $F_{k,l,r,s}(\frac{2 \pi i P}{\sqrt{pq}})$ at $P \in S_F$ is,
\bea
\label{Fresidues}
\res(F_{k,l,r,s}(\frac{2 \pi i P}{\sqrt{pq}}); P = \frac{i n}{2\sqrt{p q}}) =&& -2\frac{\sqrt{p q}}{2 \pi i} \sin(\frac{\pi n}{p})\sin(\frac{\pi n}{q}) U_{k-1}(\cos\frac{\pi n}{p})U_{r-1}(\cos\frac{\pi n}{p}) \times \nn \\ 
&&U_{l-1}(\cos\frac{\pi n}{q})U_{s-1}(\cos\frac{\pi n}{q})\frac{1}{\sin \frac{\pi n}{pq}}.
\eea
The residues of the $G(P)$ are,
\bea
\label{Gresidues}
&&2\pi i\res\Big[G(P);P=i\epsilon\Big] = (\frac{1}{2\epsilon}-\pi \sigma_1)\frac{1}{\sqrt{pq}}, \\
&&2\pi i\res\Big[G(P);P=\frac{inp}{2\sqrt{pq}}\Big]=\frac{2}{n} (-1)^n e^{-\frac{\pi p n \sigma_1}{\sqrt{pq}}} \cosh(n \pi b \sigma_2) \frac{\sin(\pi n /q)}{\sin(\pi p n /q)},\nn \\
&&2\pi i\res\Big[G(P);P=\frac{inq}{2\sqrt{pq}}\Big]=\frac{2}{n} (-1)^n e^{-\frac{\pi q n \sigma_1}{\sqrt{pq}}} \cosh(n \pi \sigma_2/b) \frac{\sin(\pi n /p)}{\sin(\pi q n /p)},\nn \\
&&2\pi i\res\Big[G(P);P=\frac{inpq}{2\sqrt{pq}}\Big]=\frac{2}{npq} (-1)^{n(p+q+1)} e^{-\frac{\pi p q n \sigma_1}{\sqrt{pq}}} \cosh(n \pi q b \sigma_2).\nn 
\eea
Finally, a useful identity for computing the sums found in the computation of the cylinder amplitudes is,
\bea
\label{sumident}
\sum^{\infty}_{t=1}\frac{4}{t}\gamma^t e^{-\alpha t}&&\cosh\beta t\cos\phi t = \\
&&2\alpha - \ln(2(\cosh^2\alpha+\cosh^2\beta-2\gamma \cosh\alpha\cosh\beta\cos\phi-\sin^2\phi)).\nn
\eea 

\chapter{Loop equations for Disc with Handle \label{ChapLoopEqn:app}}
\label{MoreLoopEqns}
To complete the calculation of the disc with handle amplitude in the SX matrix model we need a number of additional loop equations.
In order to calculate the quantities $W_{SXSX}$ and $W_{SXSXSX}$ we need,
\bea
X &&\rightarrow X+\epsilon \left( \frac{1}{z-S}K(X,S) \frac{1}{x-X}+\frac{1}{x-X} K^{\dagger}(X,S)\frac{1}{z-S} \right)\\
S &&\rightarrow S+\epsilon \left( \frac{1}{z-S}K(X,S) \frac{1}{x-X}+\frac{1}{x-X} K^{\dagger}(X,S)\frac{1}{z-S} \right)\\
X &&\rightarrow X+\epsilon \left( \frac{1}{z_1-S}\frac{1}{x_1-X}L(X,S)\frac{1}{x_2-X}\frac{1}{z_2-S}+h.c \right).
\eea
Finally the following loop equations allow for quantities of the form, $W_{SX;SX}$ etc. to be found, 
\bea
X &&\rightarrow X+\epsilon \left( \frac{1}{z-S}K(X,S) \frac{1}{x-X}+\frac{1}{x-X} K^{\dagger}(X,S)\frac{1}{z-S} \right)\Tr{H(S,X)} \\
S &&\rightarrow S+\epsilon \left( \frac{1}{z-S}K(X,S) \frac{1}{x-X}+\frac{1}{x-X} K^{\dagger}(X,S)\frac{1}{z-S} \right)\Tr{H(S,X)} \\
X &&\rightarrow X+\epsilon \left( \frac{1}{z_1-S}\frac{1}{x_1-X}L(X,S)\frac{1}{x_2-X}\frac{1}{z_2-S}+h.c \right)\Tr{H(S,X)}.
\eea 

\backmatter

\newpage 

\footnotesize
\addcontentsline{toc}{chapter}{Bibliography}

\newpage

\addcontentsline{toc}{chapter}{Index}
\printindex
\normalsize



\end{document}